%%%%%%%%%%%%%%%%%%%%%%%%%%%%%%%%%%%%%%%%%%%%%%%%%%%%%%%
%                  File: OPTICAmeetings.tex           %
%                  Date: 16 June 2022                 %
%                                                     %
%     For preparing LaTeX manuscripts for submission  %
%     submission to Optica meetings and conferences   %
%                                                     %
%         (c) 2018-2022 Optica Publishing Group       %
%%%%%%%%%%%%%%%%%%%%%%%%%%%%%%%%%%%%%%%%%%%%%%%%%%%%%%%

\documentclass[11pt]{article} 
%% if A4 paper needed, change letterpaper to A4
%\usepackage{natbib}
%7\usepackage{natbib}

%\usepackage{opticameet3}

%% standard packages and arguments should be modified as needed

\usepackage{amsmath,amsthm,amssymb}
\usepackage[top=2cm, bottom=2.5cm, left=3cm, right=3cm]{geometry}
\usepackage{url}
\usepackage[colorlinks=true,bookmarks=false,citecolor=blue,urlcolor=blue]{hyperref} %pdflatex
\usepackage{comment}
\usepackage{caption}
\usepackage{float}
\usepackage{subcaption}
\usepackage{graphicx}

\usepackage[sorting=none]{biblatex}

\usepackage{mathtools}
\usepackage{authblk}
\usepackage{bbm}
\DeclareMathOperator*{\esssup}{ess\,sup}

\title{\textbf{Uncovering Market Disorder and Liquidity Trends Detection}}
%Sequential Change-Point Detection and Liquidity Assessment in Order Books Using Hawkes Processes

\author{
Etienne CHEVALIER\footnote{Laboratoire de Math\'ematiques et Mod\'elistation d'Evry, Universit\'e Paris-Saclay, UEVE,
 UMR 8071 CNRS, France; email: etienne.chevalier@univ-evry.fr} $\,$
 Yadh HAFSI \footnote{Laboratoire de Math\'ematiques et Mod\'elistation d'Evry, Universit\'e Paris-Saclay, UEVE,
 UMR 8071 CNRS,  France; email: yadh.hafsi@universite-paris-saclay.fr} 
 $\,$ 
 Vathana LY VATH \footnote{Laboratoire de Math\'ematiques et Mod\'elistation d'Evry, Universit\'e Paris-Saclay, ENSIIE, UEVE,
 UMR 8071 CNRS, France; email: vathana.lyvath@ensiie.fr}
 % \footnote{Corresponding author.}  
}

\newcommand{\nunder}[2][5]{\mathrlap{\mkern\the\numexpr#1/2mu\relax\underline{\phantom{\mathrm{#2}\mkern-#1mu}}}#2}

\bibliography{main.bib}
\providecommand{\keywords}[1]
{
  \small	
  \textbf{Keywords:} #1
}
\usepackage{hyperref}
\usepackage{multirow}
\usepackage{fancyhdr}
\usepackage{ulem}
\usepackage{multicol}
\begin{document}
%\emergencystretch 2em
\maketitle
\pagenumbering{arabic}
\newcounter{axiom}
\newtheorem{axiome}[axiom]{Axiome}
\newtheorem{defi}{Definition}[section]
\newtheorem{theo}{Theorem}[section]
\newtheorem{prop}{Proposition}[section]
\newtheorem{rque}{Remark}[section]
\newtheorem{nota}{Notation}[section]
\newtheorem{demo}{Demonstration}
\newtheorem{propt}{Propriété}
\newtheorem{lemma}{Lemma}
\newtheorem{assump}{Assumptions}[section]
\newcounter{cases}
\newcounter{subcases}[cases]
\newenvironment{mycase}
{
    \setcounter{cases}{0}
    \setcounter{subcases}{0}
    \newcommand{\case}
    {
        \par\indent\stepcounter{cases}\textbf{Case \thecases.}
    }
    \newcommand{\subcase}
    {
        \par\indent\stepcounter{subcases}\textit{Subcase (\thesubcases):}
    }
}
{
    \par
}
\renewcommand*\thecases{\arabic{cases}}
\renewcommand*\thesubcases{\roman{subcases}}

%% Uncomment the following line to override copyright year from the default current year.
%\copyrightyear{2023}

\begin{abstract}
The primary objective of this paper is to conceive and develop a new methodology to detect notable changes in liquidity within an order-driven market. We study a market liquidity model which allows us to dynamically quantify the level of liquidity of a traded asset using its limit order book data. The proposed metric holds potential for enhancing the aggressiveness of optimal execution algorithms, minimizing market impact and transaction costs, and serving as a reliable indicator of market liquidity for market makers. As part of our approach, we employ Marked Hawkes processes to model trades-through which constitute our liquidity proxy. Subsequently, our focus lies in accurately identifying the moment when a significant increase or decrease in its intensity takes place. We consider the minimax quickest detection problem of unobservable changes in the intensity of a doubly-stochastic Poisson process. The goal is to develop a stopping rule that minimizes the robust Lorden criterion, measured in terms of the number of events until detection, for both worst-case delay and false alarm constraint. We prove our procedure's optimality in the case of a Cox process with simultaneous jumps, while considering a finite time horizon. Finally, this novel approach is empirically validated by means of real market data analyses.
\end{abstract}\hspace{10pt}
\\

\keywords{Liquidity Risk, Quickest Detection, Change-point Detection, Minimax Optimality, Marked Hawkes Processes, Limit Order Book.}

\section{Introduction}
Assets liquidity is an important factor in ensuring the efficient functioning of a market. Glosten and Harris \cite{GLOSTEN1988123} define liquidity as the ability of an asset to be traded rapidly, in significant volumes and with minimal price impact. Measuring liquidity, therefore, involves three aspects of the trading process: time, volume and price. This is reflected in Kyle's description of liquidity as a measure of the tightness of the bid-ask spread, the depth of the limit order book and its resilience \cite{kyle1985}. Hence, as highlighted in Lybek and Sarr \cite{Lybek} and in Binkowski and Lehalle \cite{lehalle_liquidity}, it is essential for liquidity-driven variables to not only capture the transaction cost associated with a relatively small quantity of shares but also to assess the depth accessible to large market participants. Additionally, measures related to the pace or speed of the market are considered since trading slowly can help mitigate implicit transaction costs. Typical indicators in this category include value traded and volatility over a reference time interval.%ajouter la partie liquidité A model of Optimal portfolio selection under liquidity risk and price impact

Several studies have been devoted to identifying different market regimes rapidly and in an automated fashion. Hamilton \cite{HAMILTON1988385} proposes the initial notion of regime switches, wherein he establishes a relationship between cycles of economic activity and business cycle regimes. The idea remains relevant, with ongoing exploration of novel approaches. Hovath et al. \cite{blanka2023} suggest an unsupervised learning algorithm for partitioning financial time series into different market regimes based on the Wassertein k-means algorithm for example. Bucci et al. \cite{BUCCI2022105832} employ covariance matrices to identify market regimes and identify shifts in volatile states. 

This paper introduces a novel liquidity regime change detection methodology aimed at assessing the resilience of an order book using tick-by-tick market data. 
More precisely, our objective is to study methods derived from the theory of disorder detection and to develop a liquidity proxy that effectively captures its inherent characteristics. This will enable us to thoroughly examine how the distribution of the liquidity proxy evolves over time. By employing these methods, we seek to gain a detailed understanding of the dynamics involved in liquidity changes and their impact within the distribution patterns of our chosen proxy.
%dire que notre modèle est plus général que celui de Toke 
To accomplish this, our methodology centers around the introduction of a liquidity proxy known as "trades-through" (see Definition \ref{def_tt}). Trades-through possess a high informational content, which is of interest in this study. Specifically, trades-through serve as an indicator of the order book's resilience, allowing the examination of activity at various levels of depth. They can also provide insightful information on the volatility of the financial instrument under examination. Additionally, trades-through can indicate whether an order book has been depleted or not in comparison to its previous states and to which extent. Pomponio and Abergel \cite{Pomponio2010} investigate several stylized facts related to trades-through, which are similar to other microstructural features observed in the literature. Their research highlights the statistical robustness of trades-through concerning the size of orders placed. This implies that trades-though are not solely a result of low quantities available on the best limits and that the information they provide is of significant importance. Another noteworthy result they observe is the presence of self-excitement and clustering patterns in trades-through. In other words, a trade-through is more likely to occur soon after another trade-through than after any other trade. This result supports the idea of modelling trades-through using Hawkes processes as in Ioane Muni Toke and Fabrizio Pomponio \cite{toke2012modelling}. This approach is particularly interesting given that Hawkes processes fall into the class of branched processes. Their dynamics can therefore be illustrated by a representation of immigration and birth that allows interpretation. Indeed, the intensity, at which events take place, is formed by an exogenous term representing the arrival of "immigrants" and a self-reflexive endogenous term representing the fertility of "immigrants" and their "descendants". We may refer to Brémaud et al. \cite{massoulie} for more comprehensive details.

Our focus here will be to use the Marked-Hawkes process to model trades-through rather than standard Hawkes processes in order to capture the impact of the orders' volumes on the intensity of the counting processes. Our research centers on employing the Marked-Hawkes process as a modelling technique for trades-through, as opposed to conventional Hawkes processes. The rationale behind this choice lies in our intention to account for the influence of order volumes on the intensity of the counting processes.
%donner les résultats et dire qu'on a pu obtenir des résultats. Significative improvement. Quelle est notre contribution à la litérature. Faire un petit résumé de notre contribution du coté modélisation des trades -through et méthodes de détection. Le fait qu'on ait adapté cette méthodologie à un environnement complexe est une grande contribution -ok
%leur modèle doesn't apply à notre case-study of the trades-through which exhibit simultanuous jumps -ok

Upon the conclusion of the modelling phase, we will utilize this proxy to identify intraday liquidity regimes\footnote{Moments of transition between liquidity regimes will be defined as instances where the distribution of trade-throughs experiences a change.}. In order to achieve this, our model involves detecting the times at which the distribution of liquidity (represented by a counting process) undergoes changes as fast as possible. This entails comparing the intensities of two doubly stochastic Poisson processes. This type of problem is commonly known in literature as "Quickest change-point detection" or "Disorder detection". Disorder detection has been studied through two research paths, namely the Bayesian approach and the minimax approach. The Bayesian perspective, originally proposed by Girschick and Rubin \cite{Girshick}, typically assumes that the change point is a random variable and provides prior knowledge about its distribution. As an example, Dayanik et al. \cite{Dayanik2006CompoundPD} conduct their study on compound Poisson processes assuming an exponential prior distribution for the disorder time. In their work, Bayraktar et al. \cite{Bayraktar2006} address a Poisson disorder problem with an exponential penalty for delays, assuming an exponential distribution for the delay. However, the minimax approach does not include any prior on the distribution of the disorder times, see for instance, Page \cite{page54} and Lorden \cite{lorden71}. Due to the scarcity of relevant literature and data pertaining to disorder detection problems, accurately estimating the prior distribution in the context of market microstructure poses a challenging task. Hence, our preference lies in adopting a non-Bayesian min-max approach. El Karoui et al. \cite{ElKaroui2017} prove the optimality of the CUSUM\footnote{The abbreviation CUSUM stands for CUmulative SUM.} procedure \ref{def_cusum} for $\rho < 1$ and $\rho > 1$ for doubly stochastic poisson processes which extends the proof proposed by Moustakides \cite{Moustakides2008} and Poor and Hadjiliadis \cite{VPoor2009} for $\rho < 1$. However, the proposed solution is limited to cases where point processes do not exhibit simultaneous jumps. This presents a problem in our case since definition \ref{def_tt} suggests that each time a $n$-limit trade-through event is observed, $n-1$ other trade-through events occur simultaneously. Hence, we need to demonstrate the optimality of these results by considering scenarios in which point processes exhibit a finite number of simultaneous arrival times which is precisely what we obtained. We are also able to produce a tractable formula of the average run delay. More importantly, our findings encompass the intricacies inherent to modelling limit order books and investigating market microstructure. We use our results to identify changes in liquidity regimes within an order book. To do this, we apply our detection procedure to real market data, enabling us to achieve convincing detection performance. To the best of our knowledge, we believe that this is the first attempt to quantify market liquidity using a methodology combining the notion of trades-through, disorder detection theory, and marked-hawkes processes.

The article is structured in the following manner. In Section \ref{Modelling Trades-Through Using Hawkes Processes}, the primary aim is to establish a model for trades-through using Marked Hawkes Processes (MHP) and subsequently provide the relevant mathematical framework. In Section \ref{Sequential Change-Point Detection: CUSUM-based optimal stopping scheme}, various results related to the optimality of the CUSUM procedure within the framework of sequential test analysis for a simultaneous jump Cox process will be presented. Finally, in Section \ref{experimental_results}, we provide an overview of the goodness-of-fit results obtained from applying a Marked Hawkes Process to analyze trades-through data in the Limit Order Book of BNP Paribas stock. Additionally, we present the outcomes of utilizing our disorder detection methodology to identify various liquidity regimes.

%defnir le trades-through formellement dans la partie modélisation et la description dans la partie intro

\section{Trades-Through modelling and Hawkes Processes}
\label{Modelling Trades-Through Using Hawkes Processes}
The aim of this section is to build a model for trades-through by means of Marked Hawkes Processes (MHP) and to present the corresponding mathematical framework. We begin by defining the concept of trades-through.
\begin{defi}[Trade-through]
 Formally, a trade-through can be defined by the vector $(type, depth,\\ volume) \in \{-1,1\}\times\mathbb{N}\times\mathbb{R}_+$ where type equals $1$ if the trade-through occurs on the bid side of the book and $-1$ otherwise, depth represents the number of limits consumed by the market event relative to the trade-through, and volume represents the corresponding traded volume. 
 \label{def_tt}
\end{defi}
\noindent%add an illustration of trades-through
%ajouter une remarque pour dire qu'on ne considère pas le prix parcequ'on étudie la liquidité et non la variation des prix. Mentionner que une limite est égale à l'écart de prix divisé par le tick size. -ok
In brief, an $n$-limit trade-through is defined as an event that exhausts the available $n$-th limit in the order book. It is considered that a trade-through at limit $m$ is also a trade-through of limit $n$ if $m$ is greater than $n$ \cite{Pomponio2010}. In this context, the term 'limit $n$' refers to any order that is positioned $n$ ticks away from either the best bid price or the best ask price. This definition is broad as it encompasses different types of orders, including market orders, cancellations, and hidden orders which underlines its significant information content.\\
\begin{figure}[H]
\centering
\captionsetup[subfigure]{labelformat=empty}%This removes the label from the subfigures

\begin{minipage}{1\linewidth}
\centering
\subfloat
{\includegraphics[width=5cm]{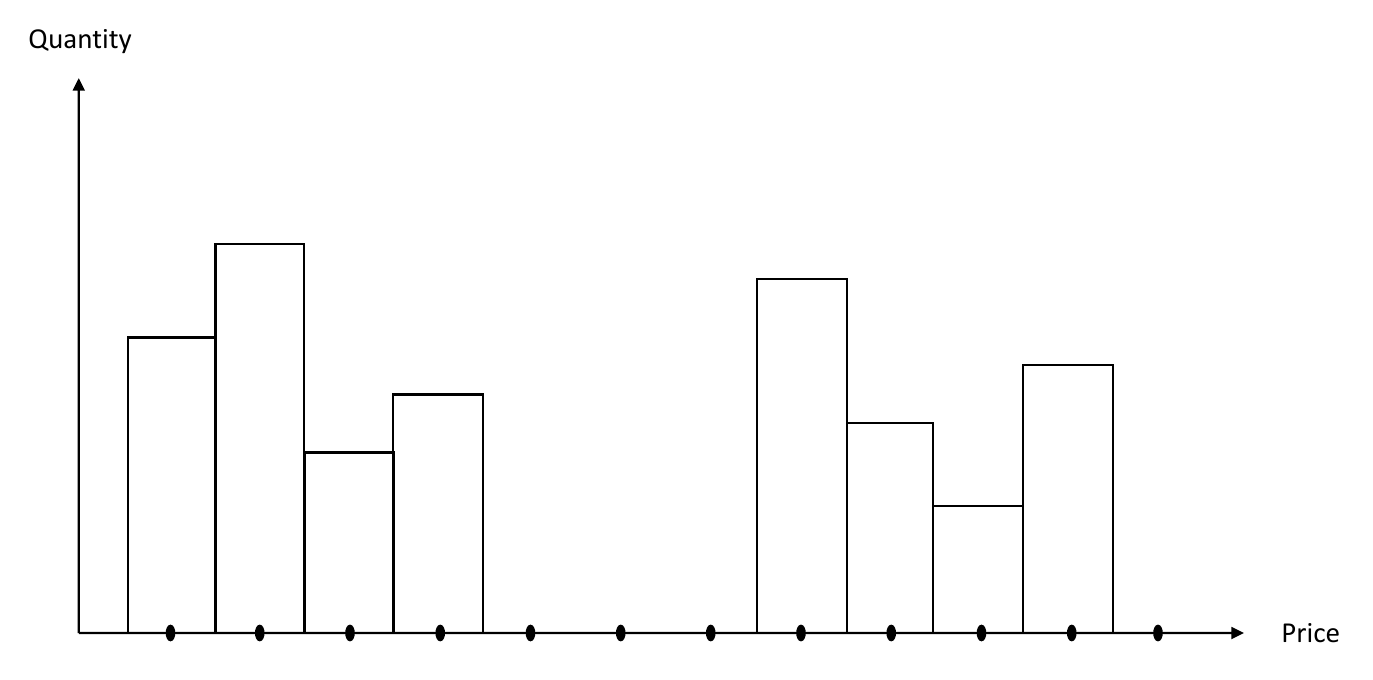}}
%\stepcounter{figure}\addtocounter{figure}{-1}  %This resets the counter of the subfigures
\subfloat{{ }\includegraphics[width=5cm]{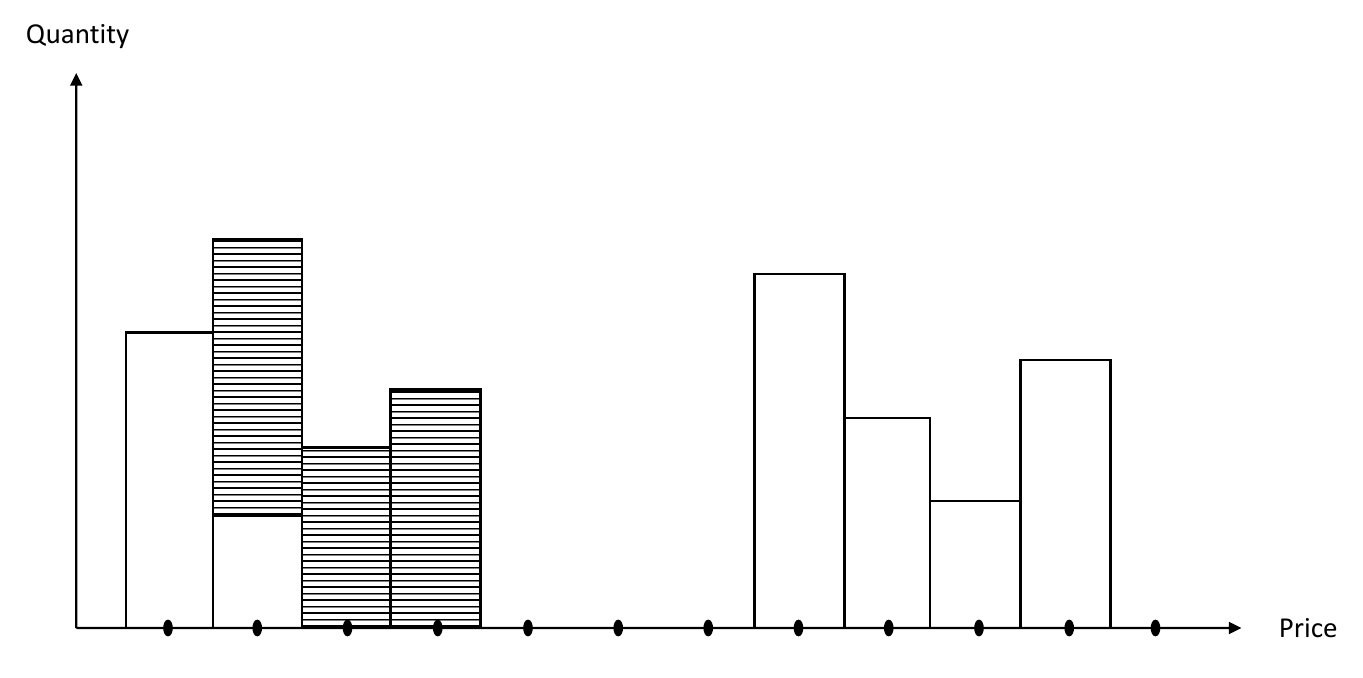}}
\subfloat{{ }\includegraphics[width=5cm]{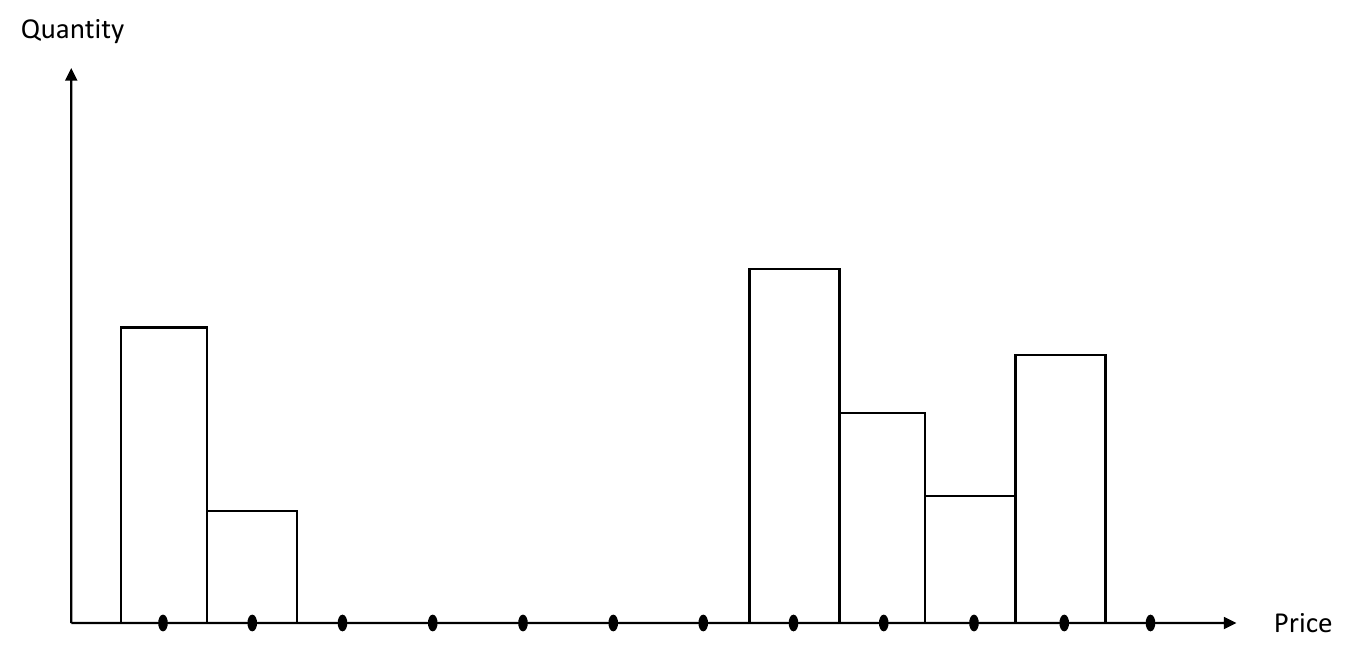}}
\label{fig:illustration_tt}
\end{minipage}

\caption{Illustration of a trade-through event of limit 2 on the bid side.}
\label{fig:main}
\end{figure}
%la fin de la def en remarque%parler des gens qui ont modéliser les order book, voir Rama cont
%our focus is to introduce the volumes impact by introducing the marked hawkes process
%real market data au lieu de market data -ok

 Let $(\Omega ,\mathbb{F}, \mathcal{F} = \{\mathcal{F}_t\}_{t \geq 0},\mathbb{P})$ be a complete filtered probability space endowed with right continuous filtration $\{\mathcal{F}_t\}_{t \geq 0}$. Going forward, whenever we mention equalities between random variables, we mean that they hold true almost surely under the probability measure $\mathbb{P}$. We may not explicitly indicate this notion ($\mathbb{P}$-a.s) in most cases. 
 %il faut dire qu'on construit une suite qui ordonne tout en ordre croissant ça qui n'est pas forcément stricte -ok
 Let $0 < \tau^{A,i}_0<\tau^{A,i}_1<\tau^{A,i}_2<\dots$ (resp. $0 < \tau^{B,i}_0<\tau^{B,i}_1<\tau^{B,i}_2<\dots$) form a sequence of increasing $\mathcal{F}$-measurable stopping times and $(v^{A,i}_k)_{k \geq 0}$ (resp. $(v^{B,i}_k)_{k \geq 0}$) be a sequence of i.i.d $\mathbb{R}_+$-valued random variables for each $i\in\{1,\dots,M\}$. Here, $(\tau^{A,i}_k)_{k \geq 0}$ (resp. $(\tau^{B,i}_k)_{k \geq 0}$) represent the arrival times of trade-throughs at limit $i$ on the ask (resp. bid) side of the limit order book while $(v^{A,i}_k)_{k \geq 0}$ (resp. $(v^{B,i}_k)_{k \geq 0}$) represent the sizes of volume jumps of trades-though for these arrival times.
 \begin{rque}
 The $(\tau^{A,i}_k)_{k\geq 0}$ (resp. $(\tau^{B,i}_k)_{k\geq 0}$) values are arranged in a specific order for a given $i$, but it is essential to note that this order may not be consistently maintained when comparing them to the $(\tau^{A,j}_k)_{k\geq 0}$ (resp. $(\tau^{B,j}_k)_{k\geq 0}$) values for a different $j$. We proceed by gathering the arrival times of trade-through orders for all the limits on the bid (resp. ask) side. Subsequently, these arrival times are organized into sequences $(\tau^A_k)_{k\geq 0}$ and $(\tau^B_k)_{k\geq 0}$ respectively that are sorted in ascending order.
 \end{rque}
 Our goal is to design a model that takes into account the volumes of the trades-through for all pertinent limits and that replicates the clustering phenomenon observed in the corresponding stylized facts (See \cite{Pomponio2010}). To achieve this, we suggest using a modified version of Cox processes\footnote{Also referred to as doubly stochastic Poisson process.}, specifically Marked Hawkes Processes. We define a 2-dimensional marked point process $N^{A\times B} = \left(N^A,N^B\right)$ where $N^A$ (resp. $N^B$) is the counting measure on the measurable space $\left(\mathbb{R}_+\times\mathbb{R}_+,\mathcal{B}\otimes\mathcal{L}\right)$ associated to the set of points $\left\{(\tau_k^A,v^A_k); k \geq 0\right\}$ (resp. $\left\{(\tau_k^B,v^B_k); k \geq 0\right\}$), $\mathcal{B}$ being the Borel $\sigma$-field on $\mathbb{R}_+$ while $\mathcal{L}$ is the Borel $\sigma$-field on $\mathbb{R}_+$. This means that $\forall C\in\mathcal{B}\otimes\mathcal{L}$ ~:
\begin{align*}
  N^A(C) &= \sum_{k\geq 0}\mathbf{1}_C(\tau_k^{A},v^{A}_k)\quad \quad\quad\quad\quad, &      N^B(C) &= \sum_{k\geq 0}\mathbf{1}_C(\tau_k^{B},v^{B}_k)  \\
        &= \sum_{1\leq i\leq M}\sum_{k\geq 0}\mathbf{1}_C(\tau_k^{A,i},v^{A,i}_k)                      &  &= \sum_{1\leq i\leq M}\sum_{k\geq 0}\mathbf{1}_C(\tau_k^{B,i},v^{B,i}_k)
\end{align*}
Let $\mathcal{F}^{A\times B} = \left(\mathcal{F}_t^{A\times B}\right)_{t\geq 0}$  be a sub-$\sigma$-field of the complete $\sigma$-field $\mathcal{F}$ where \\
$\mathcal{F}_t^{A\times B}=\sigma\left(N^{A\times B}(s,v);\\ s \in[0, t],v\in\mathbb{R}_+\right)$. Simply put, $\mathcal{F}^{A\times B}_t$ contains all the information about the process  $N^{A\times B}$ up to time $t$. The collection of all such sub-$\sigma$-fields, denoted by $\left(\mathcal{F}_t^{A\times B}, t \geq 0\right)$, is called the internal history of the process $N^{A\times B}$. One of the core concepts underlying the dynamics of $N^{A\times B}$ is the notion of conditional intensity. Specifically, the dynamics of $N^{A\times B}$ are characterized by the non-negative, $\mathcal{F}^{A\times B}$-progressively measurable processes $\lambda^A$ and $\lambda^B$, which model the conditional intensities. These processes satisfy the following conditions:
\begin{equation}
    \begin{aligned}       
 &\mathbb{E}\left[\int_K N^A(t,v)-N^A(s,v)\mathrm{d}v \big\lvert \mathcal{F}^{A\times B}_s\right]=\mathbb{E}\left[\int_{[s,t[\times K} \lambda^A(u,v) \mathrm{d} u\mathrm{d}v \big\lvert \mathcal{F}^{A\times B}_s\right],\quad 0 \leq s \leq t \\
&\mathbb{E}\left[\int_K N^B(t,v)-N^B(s,v)\mathrm{d}v \big\lvert \mathcal{F}^{A\times B}_s\right]=\mathbb{E}\left[\int_{[s,t[\times K} \lambda^B(u,v) \mathrm{d} u\mathrm{d}v \big\lvert \mathcal{F}^{A\times B}_s\right],\quad 0 \leq s \leq t
\end{aligned}
\end{equation}
Moving forward, the point process $N^{A\times B}$ will be represented as a bivariate Hawkes process, with its intensity wherein its intensity is dependent on the marks $(v^{A}_k)_{k \geq 0}$ and $(v^{B}_k)_{k \geq 0}$. 
By viewing the marked point processes $N^A$ and $N^B$ as a processes on the product space $\mathbb{R}_+\times \mathbb{R}_+$, we can derive the marginal processes $N^A_g(.) = N^A(.,\mathbb{R}_+)$ and $N^B_g(.) = N^B(.,\mathbb{R}_+)$, which refer to as ground process. In other worlds, for $ t\in\mathbb{R}_+$,
\begin{equation}
N^A_g(t)=\int_{[0, t[ \times \mathbb{R}_+} N^A(\mathrm{~d} u \times \mathrm{d} v^A),\quad N^B_g(t)=\int_{[0, t[ \times \mathbb{R}_+} N^B(\mathrm{~d} u \times \mathrm{d} v^B)
\end{equation}
These ground processes therefore describe solely the arrival times of the trades-through and can be understood as being Hawkes processes as well. 
In instances where $N^{A\times B}$ exhibit sufficient regularity (See The definition 6.4.III and section 7.3 in Daley and Vere-Jones \cite{daley2007introduction}), it is possible to factorize its conditional intensities $\lambda^A$ and $\lambda^B$ without presupposing stationarity and under the assumption that the mark's conditional distributions at time $t$ given $\mathcal{F}^{A\times B}_{t^-}$ have densities of $\mathbb{R}_+\rightarrow \mathbb{R}_+:~v\mapsto f_A(v\big\lvert \mathcal{F}^{A\times B}_{t^-})$ and $\mathbb{R}_+\rightarrow \mathbb{R}_+:~v\mapsto f_B(v\big\lvert \mathcal{F}^{A\times B}_{t^-})$. The conditional intensity factors out as~:
\begin{equation}
    \lambda^A (t,v)= \lambda^A_g(t)f_A(v\big\lvert \mathcal{F}^{A\times B}_{t^-}),\quad \lambda^B (t,v)= \lambda^B_g(t)f_B(v\big\lvert \mathcal{F}^{A\times B}_{t^-}),\quad (t,v)\in \mathbb{R}_+\times\mathbb{R}_+
    \label{intensity_factor}
\end{equation}
where $\lambda_g(t)$ is referred to as the $\mathcal{F}^{A\times B}_{t^-}$-intensity of the ground intensity.\\
In a heuristic manner, one can summarize equation \ref{intensity_factor} for $(t,v)\in \mathbb{R}_+\times\mathbb{R}_+$ as being in the form of~:%donner l'espace de départ et d'arrivée lorsque tu définit la fonction-ok
\begin{equation*}
\label{decomp_intensity}
\begin{aligned}
    &\lambda^A(t,v)\mathrm{d}t\mathrm{d}v\approx\mathbb{E}\left[N^A(\mathrm{d}t\times \mathrm{d}v)\big\lvert \mathcal{F}^{A\times B}_{t^-}\right]\approx\lambda^A_g(t)f_A(v\big\lvert\mathcal{F}^{A\times B}_{t^-})\mathrm{d}t\mathrm{d}v\\ &\lambda^B(t,v)\mathrm{d}t\mathrm{d}v\approx\mathbb{E}\left[N^B(\mathrm{d}t\times \mathrm{d}v)\big\lvert \mathcal{F}^{A\times B}_{t^-}\right]\approx\lambda^B_g(t)f_B(v\big\lvert \mathcal{F}^{A\times B}_{t^-})\mathrm{d}t\mathrm{d}v
    \end{aligned}
\end{equation*}
\begin{rque}
It is noteworthy that the conditional intensity of the ground processes considered in this context is evaluated with respect to the filtration $\mathcal{F}^{A\times B} = (\mathcal{F}_t^{A\times B})_{t\geq 0}$ rather than $\mathcal{F}^{N_g} = (\mathcal{F}_t^{N_g})_{t>\geq 0}$ with $\mathcal{F}_t^{N_g}=\sigma\left(N_g(s), s \in[0, t]\right)$. This is attributed to the fact that unlike $\mathcal{F}^{N_g}$, the filtration $\mathcal{F}^{A\times B}$ accounts for the information pertaining to the marks associated with the process $N^{A\times B}$.    
\end{rque}%ajouter la notation/indice sur les limites -ok
We assume that the conditional intensities $\lambda_g^A$ and $\lambda_g^B$ have the following form~:%dire que c'est relatif à la forme générale du processus de Hawkes (petite citation au passage)
\begin{equation}
\begin{aligned}
\lambda_g^i(t) & =\mu^i(t)+\sum_{j=A,B}\sum_{k\geq 0} \gamma_{ij}(t-\tau_k^j,v_k^j)  \\
& =\mu^i(t)+\sum_{j=A,B}\int_{[0, t[ \times \mathbb{R}_+} \gamma_{ij}(t-s, v) N^j(\mathrm{~d} s \times \mathrm{d}v),\quad i=A,B
\end{aligned}
\end{equation}
where $\mu^i: \mathbb{R}_+ \rightarrow \mathbb{R}_+$ and $\gamma_{ij}: \mathbb{R}_+ \times \mathbb{R}_+ \rightarrow \mathbb{R}_+$ are non-negative measurable functions.\\%ajouter une petite ref pour justifier la décomposition du noyau (ou justifier à minima) -ok
%enoncer la fonction puisssance de l'impact dans la partie estimation -ok
We choose a standard factorized form for the ground intensity kernel $\gamma_{ij}(t-u,v) = \alpha_{ij}e^{\beta_{ij}(t-u)}\times g_j(v)$ to describe the weights of the marks, $g_j$ being a measurable function defined on $\mathbb{R}_+$ that characterizes the mark's impact on the intensity. The exponential shape of the kernel is derived from the findings of Toke et al. \cite{toke2012modelling}. The properties of functions $g_A$ and $g_B$ will be examined in Section \ref{experimental_results}.
\begin{rque}
As we progress to proposition \ref{res_analysis_recursive}, it will become apparent that the factorization of the latter kernel presents a way to represent the conditional intensities of these processes in a Markovian manner.
   % Proposition  shows that the latter kernel's factorization allows for the conditional intensities of these processes to be represented in a markovian way. -ok
\end{rque} 
Thus, the conditional ground intensities of this model have the following integral representation~:
\begin{equation}
\begin{aligned}
\lambda^{A}_g(t)=\mu^{A}(t)+&\int_{[0, t[ \times \mathbb{R}_+} \alpha_{A A} e^{-\beta_{A A}(t-u)}g_A\left(v\right) N^{A}(\mathrm{d} u \times \mathrm{d} v)\\&+\int_{[0, t[ \times \mathbb{R}_+} \alpha_{A B} e^{-\beta_{A B}(t-u)}g_B\left(v\right) N^{B}(\mathrm{d} u \times \mathrm{d} v),\quad t\geq 0 \end{aligned}
\end{equation}
\begin{equation}
\begin{aligned}
\lambda^B_g(t)=\mu^B(t)+&\int_{[0, t[ \times \mathbb{R}_+} \alpha_{B A} e^{-\beta_{B A}(t-u)}g_A\left(v\right) N^A(\mathrm{d} u \times \mathrm{d} v)\\&+\int_{[0, t[ \times \mathbb{R}_+} \alpha_{B B} e^{-\beta_{B B}(t-u)}g_B\left(v\right) N^B(\mathrm{d} u \times \mathrm{d} v),\quad t\geq 0 
\end{aligned}
\end{equation}
where $\left(\alpha_{i j}, \beta_{i j}\right)_{(i, j) \in\{A, B\}^2}$ are non-negative reals.\\
Conventionally, the impact functions $g_i$ have to satisfy the following normalizing condition which enhances the overall stability of the numerical results (see section 1.3 of \cite{Liniger2009} for further details)~:
\begin{equation}
\int_0^{+\infty}g_i(v)f_i(v)\mathrm{d}v = 1, \quad i = A,B.
\label{normalizing_cond}
\end{equation}
This means that the impact functions of the Marked-Hawkes based model would be equal to $g_i(v)=\frac{v^{\eta_i}}{\mathbb{E}\left(v^{\eta_i}\right)}$. The following stability conditions of our model are based on Theroem 8 of Brémaud et al. \cite{massioule_bremaud}.
\begin{prop}[Existence and uniqueness]
\label{existence_uniqueness}
Suppose that the following conditions hold~:
\begin{enumerate}
    \item The spectral radius of the branching matrix (also referred to as branching ratio) satisfies~: $$\sup _{\lambda \in \rho\left(\|\Gamma\|\right)}|\lambda|<1$$ 
    \item The decay functions satisfy
$$
\int_0^{+\infty} t \gamma^{ij}(t) \mathrm{d} t<+\infty, \quad \forall i,j \in\{A,B\}
$$
\end{enumerate}
 where $\|\Gamma\|=\left\{\left\|\gamma^{i j}\right\|\right\}_{(i, j) \in\{A, B\}^2}$, $\gamma^{i j}(t) = \alpha_{i j} e^{-\beta_{i j}t}g_j\left(v(t)\right)$ and $\rho\left(\|\Gamma\|\right)$ represents the set of all eigenvalues of $\|\Gamma(t)\|$,
 %rappeler qu'il s'agit de N^AB -ok
then there exists a unique point process $N^{A\times B}$ with associated intensity process $\lambda^{A\times B}$.
%mettre en remarque à partir de existence -ok
\end{prop}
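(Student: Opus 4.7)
The plan is to derive the result from Theorem~8 of Brémaud and Massoulié via the standard Poisson embedding argument, adapted to accommodate the mark structure. First I would enlarge the probability space to carry, for each $i\in\{A,B\}$, an independent Poisson point process $\Pi^{i}$ on $\mathbb{R}_+\times\mathbb{R}_+\times\mathbb{R}_+$ with unit intensity, interpreting the three coordinates as time, mark, and thinning threshold. The marked process $N^{i}$ is then characterised as the solution of the fixed-point equation in which a point $(t,v,z)$ of $\Pi^{i}$ is retained precisely when $z\le \lambda_{g}^{i}(t)\,f_{i}(v\mid \mathcal{F}_{t^{-}}^{A\times B})$, with $\lambda_{g}^{i}$ given by the integral representation stated above. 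Equivalently, one can picture $N^{A\times B}$ as a superposition of clusters where each immigrant arrives according to $\mu^{i}$ and generates descendants via a Galton–Watson tree whose mean matrix is $\|\Gamma\|$.

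Existence then follows by Picard iteration. Starting from $N^{(0)}\equiv 0$, define $N^{(n+1)}$ by thinning $\Pi$ against the intensity computed from $N^{(n)}$. Non-negativity of the kernels $\gamma^{ij}$ makes $(N^{(n)})_{n}$ non-decreasing, so monotone convergence yields a limit $N$. To show that $N$ is locally finite I would take expectations: the normalising condition \eqref{normalizing_cond} implies that the marks integrate to one in mean, so the mean ground intensity $m^{(n)}(t)=\mathbb{E}[\lambda_{g}^{i,(n)}(t)]$ obeys a renewal-type inequality of the form $m^{(n)}\le \mu + \|\Gamma\|\ast m^{(n-1)}$. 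The spectral-radius condition makes the operator $K\mapsto \|\Gamma\|\ast K$ contracting on a suitably weighted space, so the Neumann series $\sum_{k\ge 0}\|\Gamma\|^{\ast k}$ converges and provides a uniform bound for $m^{(n)}$ on bounded intervals; the second assumption $\int_{0}^{\infty}t\,\gamma^{ij}(t)dt<\infty$ ensures the resulting resolvent kernel is itself locally integrable. For uniqueness I would couple two candidate solutions to the same embedding $\Pi$, observe that their symmetric difference satisfies the same convolution inequality but with zero source term, and conclude by subcriticality that it vanishes almost surely.

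The main obstacle is handling the marked structure coherently: unlike the classical unmarked case, the mark densities $f_{i}(\cdot\mid \mathcal{F}_{t^{-}}^{A\times B})$ are themselves predictable functionals of past marks, so the intensity in \eqref{intensity_factor} does not factor through a function of the ground process alone. The normalising condition \eqref{normalizing_cond} is precisely what enables integrating out the mark variable in expectation, thereby reducing the analysis of $\mathbb{E}[\lambda_{g}^{i}(t)]$ to that of an unmarked excitation governed by $\|\Gamma\|$; this reduction is what allows the Brémaud–Massoulié argument to be transposed to the present setting.
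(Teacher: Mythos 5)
The paper does not actually prove this proposition; it simply invokes Theorem~8 of Br\'emaud and Massouli\'e, and your Poisson-embedding / Picard-iteration / subcritical-branching argument is precisely the standard proof of that cited result, correctly transposed to the marked setting, so you are taking essentially the same route. The one step worth making explicit is that collapsing the marked excitation onto the unmarked matrix $\|\Gamma\|$ requires $\mathbb{E}\bigl[g_j(v)\mid\mathcal{F}_{t^-}^{A\times B}\bigr]=1$, which follows from the normalizing condition \eqref{normalizing_cond} only because the marks in this model are unpredictable (i.i.d.\ given the past); for genuinely history-dependent mark laws the unconditional identity $\int g_j f_j=1$ would not suffice.
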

%mettre en gras assumptions (ou hypothesis pour souligner mes hypthèses -ok
\begin{rque}
Existence in proposition \ref{existence_uniqueness} means that we can find a probability space $(\Omega, \mathbb{F}, \mathbb{P})$ which is rich enough to support such the process $N^{A\times B}$. Uniqueness means that any two processes complying with  the above conditions have the same distribution.
In our case, the conditions mentioned above are described as follows~:%dire pourquoi c'est évident (il suffit d'intégrer) ne rien faire s'ils ne font pas la remarque -ok
    \begin{enumerate}
    \item $\frac{1}{2}\left(\frac{\alpha^{AA}}{\beta^{AA}}+\frac{\alpha^{BB}}{\beta^{BB}}+\sqrt{\left(\frac{\alpha^{AA}}{\beta^{AA}}-\frac{\alpha^{BB}}{\beta^{BB}}\right)^2+4 \frac{\alpha^{AB}}{\beta^{AB}} \frac{\alpha^{BA}}{\beta^{BA}}}\right)<1$
    \item $\frac{\alpha^{ij}}{{\beta^{ij}}^2}<+\infty,\quad \forall i,j \in\{A,B\}$
\end{enumerate}
\label{non_exposition_cond}
\end{rque}

\section{Sequential Change-Point Detection: CUSUM-based optimal stopping scheme}
\label{Sequential Change-Point Detection: CUSUM-based optimal stopping scheme}
%ne pas dire que the goal of this section mais introduire la section de manière plus fluide -ok
%we now turn to devoloping a methodology to to develop a methodology that can differentiate between different liquidity regimes. -ok
%rappler qu'on démontre l'optimalité de la procedure CUSUM dans cette section -ok
We now shift our focus towards developing a methodology that can distinguish between distinct liquidity regimes. This can be done by detecting changes or disruptions in the distribution of a given liquidity proxy, which, in our case, is represented by the number of trades-through. Here, we will focus on identifying disruptions that affect the intensity of the marked multivariate Hawkes process discussed in the previous section. This will enable us to compare the resilience of our order book to that of previous days over a finite horizon $T$. The sequential detection methodology involves comparing the distribution of the observations to a predefined target distribution. The objective is to detect changes in the state of the observations as rapidly as possible while minimizing the number of false alarms which correspond to sudden changes in the arrival rate in the case of Poisson processes with stochastic intensity. To be specific, let us consider a general point process $N$ with a known rate $\lambda$, where the arrivals of certain events are associated to this process. At a certain point in time $\theta$, the rate of occurrence of events of the process $N$ undergoes an abrupt shift from $\lambda$ to $\rho\lambda$. However, the value of the disorder time $\theta$ is unobservable. The goal is to identify a stopping time $\tau$ that relies exclusively on past and current observations of the point process $N$ and can detect the occurrence of the disorder time $\theta$ swiftly. Thus, we will focus in the following on solving the sequential hypothesis testing problems in the aforementioned general form~:
\begin{equation}
\begin{array}{lll}
H_0:  &\lambda(t), \\
H_1: &\check{\lambda}^{\theta_1}(t)=\lambda(t) \mathbbm{1}_{\{t<\theta_1\}}+\rho_1 \lambda(t) \mathbbm{1}_{\{t \geq \theta_1\}}, \quad \rho_1 > 1 \\
\end{array}
\end{equation}
and,
\begin{equation}
\begin{array}{lll}
H_0: & \lambda(t), \\
H_2: & \check{\lambda}^{\theta_2}(t)=\lambda(t) \mathbbm{1}_{\{t<\theta_2\}}+\rho_2 \lambda(t) \mathbf{1}_{\{t \geq \theta_2\}}, \quad \rho_2 < 1 \\
\end{array}
\end{equation} 

We propose the introduce a change-point detection procedure that is applicable to our model. The goal is to study the distribution of the process $N^{A\times B}$ process. The problem at hand can be tackled through various approaches. Here, we opt for a non-Bayesian approach, specifically the min-max approach introduced by Page \cite{page54}, which assumes that the timing of the change is unknown and non-random. Previous research has shown the optimality of the CUSUM procedure in solving this problem (see \cite{qdetectionpoor}\cite{moustakides2004optimality}\cite{ElKaroui2017}). However, these demonstrations are confined to scenarios where the process involves Cox jumps of size 1. This constraint poses significant limitations in our case, as the definition \ref{def_tt} of trade-throughs implies for of simultaneous jumps. Therefore, the objective of this chapter is to demonstrate the optimality of the CUSUM procedure in a broader context that considers multiple jumps and to formulate the relevant detection delay.

The identification of different liquidity regimes in an order book can be done in various ways. One can apply our disorder detection methodology separately on processes $N^A$ and $N^B$ to study liquidity either on the bid or on the ask, or consider process $N^A + N^B$ and take both into account at the same time. To enhance the clarity and maintain the general applicability of our results, we consider the point process $N$ as a general Cox process that decomposes into the form $\sum_{i=1}^D N^i$ where $0<D<+\infty$ and $N^i$ is a finite point process defined on the same probability space introduced in the previous section and whose arrivals are non-overlapping. We denote $\lambda^i$ the $\mathcal{F}$-intensity of $N^i$ and by $t\mapsto \Lambda^i(t)=\int_0^t\lambda^i(s)\mathrm{d}s$ it's compensator for $i = 1,\dots,D$. 
    
We denote the probability measure for the case with no disorder by $\mathbb{P}_{\infty}$, where $\theta=+\infty$. The probability measure for the case with a disorder that starts from the beginning, where $\theta=0$, is denoted by $\mathbb{P}_{0}$. If there is a disorder at the instant $\theta$, represented by the value of the intensity of the counting process equal to $\check{\lambda}^{\theta}(t)$, then the probability measure is denoted by $\mathbb{P}_\theta$. The constructed measure satisfies the following conditions~:
$$\mathbb{P}_\theta= \begin{cases}\mathbb{P}_0, & \text { if } \theta=0 \\ \mathbb{P}_{\infty}, & \text { if } \theta=+\infty\end{cases}$$
%In particular, if a random variable X  is $\mathcal{F}_{\infty}$-measurable, then $\mathbb{E}^{\theta}(X) = \mathbb{E}_0(\mathbb{E}_{\infty}\left(X|\mathcal{F}_\theta)\right)$.
\begin{rque}
    In what follows, we use the notation $\mathbb{E}$ to refer to the expectations that are evaluated under the probability measure $\mathbb{P}_{\infty}$, while $\mathbb{E}^{\theta}$ represents the expectations evaluated under the probability measure $\mathbb{P}_\theta$.
\end{rque}%dire que d'abord on introduit les notations -ok
We initiate the discourse by introducing a few notations that will be instrumental in presenting our results. Let $U^i=N^i-\beta(\rho) \Lambda^i$ and $i\in \{1,\dots, D\}$. The process $\sum_{i=1}^D U^i$ is referred to as the Log Sequential Probability Ratio (LSPR) between the two probability measure $\mathbb{P}_{\infty}$ and $\mathbb{P}_0$ where $\beta(\rho)=(\rho-1) / \log \rho$ and $\Lambda^i(t)=\int_0^t \lambda^i(s) \mathrm{d}s$ for all $0\leq t\leq T$. The probability measure $\mathbb{P}_0$ is defined as the measure equivalent to $\mathbb{P}_\infty$ with density~:%correctement définir un processus ()_t -ok
\begin{equation}
    \frac{\mathrm{d}\mathbb{P}_0}{\mathrm{d}\mathbb{P}_{\infty}}\big\lvert_{\mathcal{F}_t}= \rho^{\sum_{i=1}^{D} U^i(t)},~~~~0\leq t \leq T
    \end{equation}
Or, more generally~: \begin{equation}\begin{split}
    \frac{\mathrm{d}\mathbb{P}_\theta}{\mathrm{d}\mathbb{P}_{\infty}}\big\lvert_{\mathcal{F}_t} &= \prod_{i=1}^{D} \frac{\rho^{U^i(t)}}{\rho^{U^i(\theta)}}\quad~~,~~~~0\leq \theta \leq t \\&= \rho^{\sum_{i=1}^{D} U^i(t) - U^i(\theta)}\end{split}\end{equation}
According to Girsanov's theorem, these densities are defined if $\mathbb{E}\left(\rho^{\sum_{i=1}^{D} U^i(t) - U^i(\theta)}\right)=1$ which is the case since $\rho^{\sum_{i=1}^{D}U_i(t)} = e^{\log (\rho) \sum_{i=1}^{D}N^i(t)-(\rho-1) \sum_{i=1}^{D}\Lambda^i(t)}$ is a local martingale if $\sum_{i=1}^{D}\Lambda^i(t)$ is càdlàg (see Øksendal et al. \cite{oksendal2006applied}). Hence, $\sum_{i =1}^D N^i(t)-\rho\Lambda^{i}(t)$ is a $\mathbb{P}_{\theta}$-martingale iff $\rho^{\sum_{i=1}^{D} U^i(t) - U^i(\theta)}$ is an $\mathbb{P}_{\infty}$-martingale. We can therefore now define the log-likelihood ratio~:
\begin{equation}
    \begin{split}
        \ell_{t, \theta} :&=\sum_{i=1}^D \int_\theta^t \log \left(\frac{\check{\lambda}^i_{\theta}(s)}{\lambda^{i}(s)}\right) \mathrm{d} N^i(s)-\sum_{i=1}^D \int_\theta^t\left(\check{\lambda}^i_{\theta}(s)-\lambda^{i}(s)\right) \mathrm{d} s \\
& = \log \left(\rho\right) \left(\sum_{i=1}^D \left(N^i(t) - N^i(\theta)\right)-\frac{\rho - 1}{\log \left(\rho\right) }\left(\Lambda^{i}(t)-\Lambda^{i}(\theta)\right)\right) \\
& = \log \left(\rho\right)\left(\sum_{i=1}^D U^i(t) - U^i(\theta) \right)
    \end{split}
    \label{llr}
\end{equation}
with
$$
\check{\lambda}^i_{\theta}(t)= \begin{cases} \lambda^{i}(t), & 0 \leq t \leq \theta\\
\rho \lambda^{i}(t), & t>\theta 
\end{cases}
$$
which is the intensity of the node $i$ if the disorder takes place at the time $\theta$. 

In statistical problems, it is common to observe two distinct components of loss. The first component is typically associated with the costs of conducting the experiment, or any delay in reaching a final decision that may incur additional losses. The second component is related to the accuracy of the analysis. The min-max problem (see \cite{page54}) can be formulated as finding a stopping rule that minimizes the worst-case expected cost under the Lorden criterion. The objective is therefore to find the stopping time that minimizes the following cost~:%faire le lien avec le choix des hypothèses -ok
\begin{equation}
    \begin{split}
    \inf_{\tau\in \mathcal{T}_{\pi}} C(\tau)  %\\ 
    %\mathbb{E}(T) \geq \pi%définir pi -ok
    \end{split}
\end{equation}
Where $C(\tau)=\sup _{\theta \in[0, +\infty]} {\text { ess sup }} \mathbb{E}^{\theta}\left[(\tau-\theta)^{+} \big\lvert \mathcal{F}_{\theta}\right]$, $\mathcal{T}_{\pi}$ is the class of $\mathcal{F}$-stopping times that statisfies $\mathbb{E}(\tau) \geq \pi$ and $\pi>0$ a constant.\\%marginal/tribu F_t -ok
As explained by Lorden \cite{lorden71}, the formulation of the problem under criterion $C(\tau)$ is robust in the sense that it seeks to estimate the worst possible delay before the disorder time. The false alarm rate can be controlled by controlling $\mathbb{E}(\tau)$ through $\pi$. It has been observed that Lorden's optimality criterion is characterized by a high level of stringency as it necessitates the optimization of the maximum average detection delay over all feasible observation paths leading up to and following the changepoint. This rigorous approach often produces conservative outcomes. However, it is worth noting that the motivation behind the use of min-max optimization can be linked to the Kullback-Leibler divergence. This has notably been the case in the works of Moustakides \cite{moustakides2004optimality} for detecting disorder occurring in the drift of Ito processes. In our case, for $\tau\in \mathcal{T}_{\pi}$ and $0\leq \theta \leq \tau$, equation \ref{llr} yields~:
\begin{equation}
\begin{split}
\mathbb{E}^{\theta}\left[\log\left(\frac{\mathrm{d}\mathbb{P}_\theta}{\mathrm{d}\mathbb{P}_{\infty}}\right)\big\lvert \mathcal{F}_\theta\right] &= \log \left(\rho\right) \mathbb{E}^{\theta}\left[\sum_{i=1}^D \left(N^i(\tau) - N^i(\theta)\right)-\frac{\rho - 1}{\log \left(\rho\right) }\left(\Lambda^{i}(\tau)-\Lambda^{i}(\theta)\right)\big\lvert \mathcal{F}_\theta\right] \\&=\left(\log \left(\rho\right)-\rho+1\right)\sum_{i=1}^D\mathbb{E}^{\theta}\left[\left(N^i(\tau)-N^i(\theta)\right)^{+} \big\lvert \mathcal{F}_\theta\right]
\end{split}
\end{equation}
Hence, the following minimax formulation of the detection problem is proposed~:%faire un argument qualitatif en lien avec la finance pour l'observation des transactions plutôt que tous les prix. -ok
 \begin{equation}
    \begin{split}
     \inf_{\tau} \widetilde{C}(\tau) &=\inf_{\tau} \sup _{\theta \in[0, +\infty]} {\text { ess sup }} \sum_{i=1}^D\mathbb{E}^{\theta}\left[\left(N^i(\tau)-N^i(\theta)\right)^{+} \big\lvert \mathcal{F}_\theta\right] \\
    & \text{with the constraint}~~ \sum_{i=1}^D\mathbb{E}\left(N^i(\tau)\right) \geq \pi
    \end{split}
    \label{detection_prob_count}
\end{equation} 
El Karoui et al. \cite{ElKaroui2017} suggests another justification of this relaxation based on the Time-rescaling theorem (see Daley and Vere-Jones \cite{daley2007introduction}) which allows us to transform a Cox process into a Poisson process verifying~:
\begin{equation}   \mathbb{E}^{\theta}\left[\sum_{i=1}^D\left(N^i(\tau)-N^i(\theta)\right)^{+} \big\lvert \mathcal{F}_\theta\right]
    =\rho\sum_{i=1}^D \mathbb{E}^{\theta}\left[\int^{\tau}_{\tau \wedge \theta}\lambda^i(s) \mathrm{d}s\big\lvert \mathcal{F}_{\theta}\right] 
\end{equation}
Heuristically, this is equivalent to the formulation referenced in \ref{detection_prob_count}. Indeed, $\sum_{i=1}^D \mathbb{E}^{\theta}\left[\int^{\tau}_{\tau \wedge \theta}\lambda^i(s) \mathrm{d}s\big\lvert \mathcal{F}_\theta\right] = \left(\sum_{i=1}^D\lambda^{i,cst}\right) \mathbb{E}^{\theta}\left[\left(\tau - \theta\right)^+\big\lvert \mathcal{F}_\theta\right]$ in the case of a Poisson process with constant intensity.
\begin{rque}
    The formulation \ref{detection_prob_count} is particularly useful in finance trading applications, where it is more advantageous to focus solely on transactions/events that occurred, rather than the state of the market at every moment.
\end{rque}
\begin{rque}%une formulation plus robuste mais maoins naturelle
%appuiyer le suffisamment conservatif see ref - ok
Another alternative is to consider the criterion $\mathbb{E}^{\theta}\left[\sup_{1\leq i \leq D}\left(N^i(\tau)-N^i(\theta)\right)^{+} \big\lvert \mathcal{F}_\theta\right]$, which would be to minimize the worst possible detection delay among all processes $(N^i(t))_{1 \leq i \leq D}$. Since the Lorden criterion is already a conservative measure, adding another criteria function could result in excessive restrictions that may cause significant delays in the detection process. Therefore, it is preferable to refrain from using additional criteria functions. Notably, the Lorden criterion produces larger values compared to the criteria proposed by Pollak \cite{pollak} or Shiryaev \cite{shiryaev2009stochastic}, which further supports its sufficiency in detecting changes (see \cite{Shiryaev2019} p. 9-15).
\end{rque}
%Expliquer pour on introduit ces processus -ok
Given that the likelihood ratio test is considered the most powerful test for comparing simple hypotheses according to the Neyman-Pearson lemma, we will utilize $\rho^{\sum_{i=1}^{D} U^i(t) - U^i(\theta)}$ to assess the shift in this distribution. The log-likelihood ratio $\sum_{i=1}^D U^i(t) - U^i(\theta)$ usually exhibits a negative trend before a change and a positive trend after a change (see section 8.2.1 of Tartakovski et al. \cite{Tartakovsky2014} for more details). Consequently, the key factor in detecting a change is the difference between the current value of $\sum_{i=1}^{D} U^i(t) - U^i(\theta)$ and its minimum value. 
As a result, we are prompted to introduce the CUSUM process.
\begin{defi}
\label{def_cusum}
Let $\rho \ne 1$, $m>0$, $0\leq t \leq T$ and $U = \sum_{i=1}^DN^i - \beta \Lambda^i = \sum_{i=1}^DU^i$. The CUSUM processes are defined as the reflected processes of $U$~:
\begin{equation}
    \begin{split}
        \hat{U}(t)= U(t) - \inf _{0 \leq s \leq t} U(s), ~if~\rho>1 \\
        \widetilde{U}(t)= \sup _{0 \leq s \leq t} U(s) - U(t), ~if~\rho<1
    \end{split}
\end{equation}
 The stopping times for the CUSUM procedures are given by~: 
$$
\hat{T}_{\mathrm{C}}=\inf \left\{t: \hat{U}(t)>m\right\}, \quad \widetilde{T}_{\mathrm{C}}=\inf \left\{t: \widetilde{U}(t)>m\right\}
$$
where the infimum of the empty set is $+\infty$.
\end{defi}
\begin{rque}
We note that $\rho \beta(1 / \rho)=\beta(\rho)$. Therefore, for $t>0$,
\begin{align*}
    \sup _{\theta<t} \ell_{t, \theta} 
    &\leq \ell^C_{t,\theta}:= \begin{cases}
    \log \left(\rho\right)\left(\sum_{i=1}^D U^i(t) -\inf _{\theta<t}U^i(\theta) \right) & \text{if
    ~} \rho>1\\
    \log \left(\frac{1}{\rho}\right)\left(\sum_{i=1}^D \sup _{\theta<t}U^i(\theta) -U^i(t)\right) & \text{if
    ~} \rho<1
\end{cases}
\end{align*}
and
\begin{align*}
   \left\{t: \sup _{\theta<t} \ell_{t,\theta}>m\right\} \subseteq \left\{t: \ell^C_{t, \theta}>m\right\}
\end{align*}
Consequently, initial lower bounds can be derived for the stopping times $\hat{T}_{\mathrm{C}}$ and $\widetilde{T}_{\mathrm{C}}$ based on the reflected CUSUM processes associated with each component $N^i$.
\end{rque}%on va obtenir deux résultats théoriques importants de notre study qui nous permet dévaluer le délais de détection -ok
    \begin{rque}
    Considering the increasing event times $\left\{\tau_i, i=1,2, \ldots\right\}$ of the process $\sum_{i=1}^DN^i$, and for any fixed $t\geq 0$ and $k\geq 1$ where $t>t_k$, the following relationships hold (see Lemma 4.1 in Wang et al. \cite{Wang2021})~: $\sup _{\tau_k<s \leq \min \left\{\tau_{k+1}, t\right\}} \ell_{t, s}=\lim _{s \rightarrow \tau_k^+} \ell_{t, s}= \ell_{t, \tau_k^{+}}$, and $\sup _{0 \leq s \leq \tau_1} \ell_{t, s}=\ell_{t, 0}$ . This implies for $k>0$ and $t>\tau_k$ that~: 
\begin{align*}
   \sup _{\theta<t} \ell_{t, \theta} &= \max \{\sup _{0<s \leq \tau_{1}} \ell_{t, s}, \sup _{\tau_1<s \leq \tau_{2}} \ell_{t, s}, ~\dots, \sup _{\tau_k<s \leq \min \left\{\tau_{k+1}, t\right\}} \ell_{t, s}\}\\
   &= \max \{\ell_{t, 0}, \ell_{t, \tau_1^{+}}, ~\dots~, \ell_{t, \tau_k^{+}}\} 
\end{align*}
This allows us to extract a more time-efficient method for calculating the optimal stopping times $\widetilde{T}_{\mathrm{C}}$ and $\hat{T}_{\mathrm{C}}$. To do this, we formulate the log-likelihood ratio recursively~:
\begin{align*}
   \ell_{t,\tau_{n+1}^+} &= \sum_{i=1}^D U^i(t) - U^i(\tau_{n+1}^+) \\&= \sum_{i=1}^D U^i(t) - U^i(\tau_{n+1}^+) + U^i(\tau_{n}^+) - U^i(\tau_{n}^+)\\
   &= \ell_{t,\tau_{n}^+} - \sum_{i=1}^D\left( N^i(\tau_{n+1}^+) - N^i(\tau_{n}^+)\right) + \beta\left(\rho\right)\sum_{i=1}^D \int_{\tau_{n}^+}^{\tau_{n+1}^+}\lambda^i(s) \mathrm{d}s\\
   &= \ell_{t,\tau_{n}^+} - \sum_{i=1}^D\left( N^i(\tau_{n+1}^+) - N^i(\tau_{n}^+)\right) + \beta\left(\rho\right)\sum_{i=1}^D\left( \Lambda^i(\tau_{n+1}^+)-\Lambda^i(\tau_{n}^+)\right) 
\end{align*}
    
\end{rque}
One noteworthy characteristic of $\sum_{i=1}^D N^i(\widetilde{T}_{\mathrm{C}})$ and $\sum_{i=1}^D N^i(\hat{T}_{\mathrm{C}})$ is that their conditional expectations can be explicitly calculated based on their initial values and the parameter $m$. As a result, the theorems \ref{ARL_rho_inf_1} and \ref{ARL_rho_inf_2} offer a valuable tool for controlling the ARL constraint through the parameter $m$, i.e, expressing $\sum_{i=1}^D\mathbb{E}\left(N^i(\widetilde{T}_{\mathrm{C}})\right)$ and $\sum_{i=1}^D\mathbb{E}\left(N^i(\hat{T}_{\mathrm{C}})\right)$ as tractable formulas for $\rho>1$ and $\rho<1$.
\begin{theo}
Let $\widetilde{T}_{\mathrm{C}}$ the CUSUM stopping time defined by $\widetilde{T}_{\mathrm{C}}=\inf \left\{t \leq T: \widetilde{U}(t)>m\right\}$ for $\rho<1$.
%and $g_{m}$ the finite variation solution of the DDE defined in \ref{dde_gm} on the interval $[0, m]$ with $m\geq 0$. 
Assume that the intensities $\lambda^i$ of the processes $N^i$ are càdlàg, $\forall i \in \{1,\dots,D\}$. The constraint on the Average Run Length (ARL) in $\widetilde{T}_{\mathrm{C}}$ is equal to~:
\begin{equation}
\begin{split}
    \mathbb{E}_y\left(\sum_{i=1}^D N^i(\widetilde{T}_{\mathrm{C}})\right) &=\int_y^{m} \frac{1}{\beta(\rho)} \sum_{k=0}^{\lfloor x\rfloor} \frac{(-1)^k}{k !}((x-k) / \beta(\rho))^k \exp ((x-k) / \beta(\rho)) \mathrm{d}x 
    \end{split}
\end{equation}
with $\mathbb{P}_y = \mathbb{P}(.\big\lvert \widetilde{U}(0) = y)$.
\label{ARL_rho_inf_1}
\end{theo}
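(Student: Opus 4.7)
The plan is to reduce the problem to a canonical unit-rate reflected Poisson-drift problem via the time-rescaling theorem, derive a Volterra-type integral equation for the quantity of interest by first-step analysis, and verify that the claimed closed form satisfies it.

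First, I would apply the time-rescaling theorem (Daley and Vere-Jones, Section 7.4) to $N := \sum_{i=1}^D N^i$ with compensator $\Lambda := \sum_{i=1}^D \Lambda^i$. The time-changed process $\tilde N(u) := N(\Lambda^{-1}(u))$ is a standard unit-rate Poisson process (the c\`adl\`ag hypothesis on each $\lambda^i$ suffices, using the right-continuous inverse $\Lambda^{-1}$). Under this change of time, $\widetilde U$ transforms into $\widetilde V(u) := \sup_{s \le u} \tilde U(s) - \tilde U(u)$ with $\tilde U(u) = \tilde N(u) - \beta(\rho)\, u$. Because $\rho < 1$ gives $\beta(\rho) > 0$, $\widetilde V$ grows linearly at rate $\beta(\rho)$ between arrivals of $\tilde N$ and jumps from level $v$ to $\max(v-1,0)$ at each arrival. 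Since $\Lambda(\widetilde T_{\mathrm{C}}) = \nu := \inf\{u : \widetilde V(u) > m\}$ and $N(\widetilde T_{\mathrm{C}}) = \tilde N(\nu)$, it suffices to compute $n(y) := \mathbb{E}_y[\tilde N(\nu)]$ for the canonical model started at $\widetilde V(0) = y \in [0,m]$.

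Second, conditioning on the first Poisson arrival $T_1 \sim \mathrm{Exp}(1)$ and using the strong Markov property of $\widetilde V$ at $T_1$, with the change of variable $x = y + \beta(\rho)\,t$, yields
$$n(y) = \frac{1}{\beta(\rho)} \int_y^m e^{-(x-y)/\beta(\rho)} \bigl[1 + n(\max(x-1,0))\bigr]\,\mathrm{d}x, \qquad n(m) = 0,$$
which after differentiation in $y$ is equivalent to the delay differential equation $\beta(\rho)\,n'(y) = n(y) - n((y-1)_+) - 1$. The key observation is that setting $\psi(x) := \sum_{k=0}^{\lfloor x \rfloor} \frac{(-1)^k}{k!}\bigl((x-k)/\beta(\rho)\bigr)^k \exp\bigl((x-k)/\beta(\rho)\bigr)$, the ansatz $n_\star(y) := \frac{1}{\beta(\rho)}\int_y^m \psi(x)\,\mathrm{d}x$ satisfies the DDE together with the boundary condition if and only if $\psi$ satisfies the renewal identity
$$\psi(y) = 1 + \frac{1}{\beta(\rho)} \int_{(y-1)_+}^{y} \psi(x)\, \mathrm{d}x.$$

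I would verify this identity for $\psi$ by induction on $\lfloor y \rfloor$. On $[0,1)$ one has $\psi(y) = e^{y/\beta(\rho)}$ and the identity reduces to $e^{y/\beta(\rho)} - 1 = \int_0^y e^{x/\beta(\rho)}/\beta(\rho)\,\mathrm{d}x$, which is immediate. For the step from $[k,k+1)$ to $[k+1,k+2)$, one splits $\int_{y-1}^y \psi$ at $x=k+1$, uses the inductive formula on the lower piece, and integrates the newly-active $k$th summand on the upper piece by parts; the boundary terms at $x=k+1$ telescope and produce exactly the new summand $\frac{(-1)^{k+1}}{(k+1)!}((y-k-1)/\beta(\rho))^{k+1}\exp((y-k-1)/\beta(\rho))$ with the correct combinatorial factor. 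Uniqueness of the solution to the DDE with $n(m)=0$ then gives $n = n_\star$, proving the theorem. As a conceptual cross-check, the Laplace transform of $\psi$ equals $\beta(\rho)/(\beta(\rho)s - 1 + e^{-s})$, the reciprocal of the Laplace exponent of the spectrally negative L\'evy process $X_u := \beta(\rho)\,u - \tilde N(u)$, so $\psi/\beta(\rho)$ is the $0$-scale function of $X$ and the formula is the classical expected-jumps-until-drawdown identity from L\'evy-process fluctuation theory.

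The main obstacle will be the inductive verification of the renewal identity for $\psi$: each integration by parts generates several boundary terms at successive integers, and one must show they telescope cleanly so that the fresh term at level $k+1$ appears with exactly the factor $(-1)^{k+1}/(k+1)!$ and no spurious residues. A subsidiary technical difficulty is justifying the time-rescaling step when $\lambda$ may vanish on sets of positive measure; one then replaces $\Lambda^{-1}$ by its right-continuous generalized inverse and uses the c\`adl\`ag hypothesis on the $\lambda^i$ to ensure that the excursions of $\widetilde U$ off its running supremum are preserved under the change of time.
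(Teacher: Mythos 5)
There is a genuine gap, and it sits exactly at the point that distinguishes this theorem from the result of El Karoui et al.\ that it extends: the superposition $N=\sum_{i=1}^D N^i$ is \emph{not} a simple point process. Each $N^i$ is simple, but distinct components may jump at the same instant (a limit-$n$ trade-through triggers $N^1,\dots,N^n$ simultaneously), so $N$ carries jumps of size up to $D$. Your very first step --- invoking the time-rescaling theorem to turn $N(\Lambda^{-1}(\cdot))$ into a unit-rate Poisson process --- requires simplicity of $N$; for a non-simple process the time change preserves the multiple points, and the resulting process is not Poisson. The downstream consequences are not cosmetic: at a multi-point of size $k$ the reflected process drops from $v$ to $(v-k)_{+}$ while the event count increases by $k$, so your first-step equation
$n(y)=\frac{1}{\beta(\rho)}\int_y^m e^{-(x-y)/\beta(\rho)}\bigl[1+n((x-1)_{+})\bigr]\mathrm{d}x$
must be replaced by one involving the (model-dependent) law of the jump size, and the answer is then no longer the same scale function. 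As a concrete warning, take $D=2$ with $N^1=N^2=M$ a unit-rate Poisson process and $m\le 1$: the reflected process rises at rate $2\beta(\rho)$ and drops by $2$, and a one-step computation gives $\mathbb{E}_0\bigl[N(\widetilde T_{\mathrm C})\bigr]=2\bigl(e^{m/(2\beta(\rho))}-1\bigr)$, which differs from $e^{m/\beta(\rho)}-1=\int_0^m W$. So the passage from unit jumps to simultaneous jumps is precisely where an argument is needed, and your proposal supplies none.

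What you have written is, in substance, a clean re-derivation of the known simple-jump case, and there your method is sound and genuinely different from the paper's: the paper never time-rescales, but instead applies It\^o's formula to $g_m(\widetilde U^{\epsilon}(t))+\sum_i N^{i,m,\widetilde U^{\epsilon}}(t)$, uses the delay differential equation $\beta g_m'(x)=g_m(x)-g_m((x-1)^{+})-1$ to exhibit a martingale, and concludes by Doob's optional stopping theorem; your route (first-step analysis giving the same DDE, inductive verification of the renewal identity $\psi(y)=1+\beta(\rho)^{-1}\int_{(y-1)_{+}}^{y}\psi$, uniqueness, and the scale-function interpretation) is a legitimate and arguably more self-contained way to obtain the closed form. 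But the paper's actual work for this theorem is the approximation scheme: the simultaneous arrivals are separated by order-$\epsilon$ shifts into simple processes $N^{i,\epsilon}$, the $L^1$ convergence of $\widetilde U^{\epsilon}$ to $\widetilde U$ and the control of the exceptional sets $A^{i,\epsilon}$ are established in the preparatory lemmas, the martingale identity is proved for each $\epsilon$, and the limit $\epsilon\to 0^{+}$ is taken at the stopping time. None of that, nor any substitute for it, appears in your proposal, so the statement for overlapping arrivals is not proved.
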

\begin{proof}
    Proof postponed to appendix.
\end{proof}
\begin{theo}
Let $\hat{T}_{\mathrm{C}}$ the CUSUM stopping time defined by $\hat{T}_{\mathrm{C}}=\inf \left\{t \leq T: \hat{U}(t)>m\right\}$ for $\rho>1$.
%and $h_{m}$ the finite variation solution of the DDE defined in \ref{dde_hm} on the interval $[0, m]$ with $m\geq 0$. 
Assume that the intensities $\lambda^i$ of the processes $N^i$ are càdlàg, $\forall i \in \{1,\dots,D\}$. The constraint on the Average Run Length (ARL) in $\hat{T}_{\mathrm{C}}$ is equal to~:
\begin{equation}
\begin{split}
    \mathbb{E}_v\left(\sum_{i=1}^D N^i(\hat{T}_{\mathrm{C}})\right) &= W(m-v)\frac{W(m)}{W^{\prime}(m)}-\int_0^{m-v} W(y) \mathrm{d} y,\\ \quad  \mathbb{E}_{m^-}\left(\sum_{i=1}^D N^i(\hat{T}_{\mathrm{C}})\right) &= \frac{W(m)}{\beta W^{\prime}(m)}
    \end{split}
\end{equation}
with $\mathbb{P}_v = \mathbb{P}(.\big\lvert \hat{U}(0) = v)$, $W(x)=\frac{1}{\beta(\rho)} \sum_{k=0}^{\lfloor x\rfloor} \frac{(-1)^k}{k !}((x-k) / \beta(\rho))^k \exp ((x-k) / \beta(\rho))$ and $\int_0^x W(y) \mathrm{d} y = \sum_{k=0}^{\lfloor x\rfloor}\left(e^{(x-k) / \beta(\rho)}\left(\sum_{i=0}^k \frac{(-1)^j}{j !}((x-k) / \beta(\rho))^j\right)-1\right)$.
\label{ARL_rho_inf_2}
\end{theo}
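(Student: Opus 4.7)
The plan is to reduce the ARL calculation to a first-passage problem for a reflected spectrally one-sided Lévy process and then invoke scale-function theory. The first step is to apply the time-rescaling theorem to the Cox process $\sum_{i=1}^D N^i$ via the compensator $\Lambda(t)=\sum_{i=1}^D \Lambda^i(t)$: setting $\sigma(u)=\inf\{t\geq 0:\Lambda(t)>u\}$, the process $\tilde N(u):=\sum_i N^i(\sigma(u))$ becomes a standard unit-rate Poisson process under $\mathbb{P}_\infty$, so that $\tilde U(u):=U(\sigma(u))=\tilde N(u)-\beta(\rho)u$ is a spectrally positive Lévy process with unit upward jumps and constant downward drift $-\beta(\rho)$; here $\beta(\rho)>1$ for $\rho>1$, so the drift dominates the mean jump rate. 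The CUSUM exit time transfers as $\hat L:=\Lambda(\hat T_{\mathrm{C}})=\inf\{u\geq 0:\hat{\tilde U}(u)>m\}$ where $\hat{\tilde U}(u)=\tilde U(u)-\inf_{s\leq u}\tilde U(s)$, and one obtains the pathwise identity $\sum_{i=1}^D N^i(\hat T_{\mathrm{C}})=\tilde N(\hat L)$.

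The next step converts counting of events into $u$-time. Since $\tilde N(u)-u$ is a $\mathbb{P}_\infty$-martingale and positive recurrence of $\hat{\tilde U}$ (a consequence of the drift dominating the mean jump rate) yields $\mathbb{E}_v[\hat L]<\infty$, the optional stopping theorem gives
\begin{equation*}
\mathbb{E}_v\!\left[\sum_{i=1}^D N^i(\hat T_{\mathrm{C}})\right]=\mathbb{E}_v[\tilde N(\hat L)]=\mathbb{E}_v[\hat L].
\end{equation*}
The problem thereby reduces to computing the expected first-passage time above level $m$ of the reflected spectrally positive Lévy process $\hat{\tilde U}$, started at $\hat{\tilde U}(0)=v\in[0,m]$.

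For this I would invoke the classical fluctuation identities for spectrally one-sided Lévy processes reflected at their running infimum (Bertoin; Pistorius). Working with $V:=-\tilde U$, which is spectrally negative with Laplace exponent $\psi(s)=s\beta(\rho)+e^{-s}-1$ and $\psi'(0^+)=\beta(\rho)-1>0$, the associated $0$-scale function $W$ is defined by $\int_0^\infty e^{-sx}W(x)\,\mathrm{d}x=1/\psi(s)$. Combining the two-sided exit identity of the unreflected process with the Itô-excursion description of the reflection at the running supremum yields
\begin{equation*}
\mathbb{E}_v[\hat L]=W(m-v)\,\frac{W(m)}{W'(m)}-\int_0^{m-v}W(y)\,\mathrm{d}y,
\end{equation*}
and the boundary case $\mathbb{E}_{m^-}[\hat L]=W(m)/(\beta(\rho)W'(m))$ is recovered by letting $v\uparrow m$, using $W(0)=1/\beta(\rho)$.

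The final step is to verify that the explicit formula given for $W$ in the theorem indeed coincides with the $0$-scale function of $\psi$. This is a direct but delicate computation: splitting $\int_0^\infty e^{-sx}W(x)\,\mathrm{d}x$ over the intervals $[k,k+1)$, expanding the piecewise-polynomial blocks and resumming the resulting series recovers $1/(s\beta(\rho)+e^{-s}-1)$. The main obstacle will be applying the scale-function identity for a reflected process started from a generic interior point $v$, since standard references often state these identities only at distinguished boundary values; this requires combining the two-sided exit identity for the unreflected process with excursion theory for the reflection at the running infimum. The closed-form verification in the last step is elementary but requires careful bookkeeping across the integer thresholds induced by the $\lfloor x\rfloor$ terms in $W$.
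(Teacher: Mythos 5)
Your reduction has a genuine gap at its very first step, and it is precisely the point the theorem is meant to address. The time-rescaling theorem turns a point process with continuous compensator into a \emph{unit-rate Poisson process} only when that point process is \emph{simple}. Here the aggregate $\sum_{i=1}^D N^i$ is expressly allowed to have simultaneous arrivals across the components (an $n$-limit trade-through produces $n$ coincident events), so it carries jumps of size up to $D$. Time-changing by $\sigma(u)=\inf\{t:\Lambda(t)>u\}$ preserves those multiple points: $\tilde N(u)=\sum_i N^i(\sigma(u))$ is \emph{not} a standard Poisson process, $\tilde U(u)=\tilde N(u)-\beta(\rho)u$ is not the spectrally positive Lévy process with unit upward jumps, and in general it need not be a Lévy process at all (the multiplicities may be history-dependent, so one cannot even write a Laplace exponent $\psi(s)=s\beta(\rho)+e^{-s}-1$). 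Consequently the scale function $W$ you would obtain from fluctuation theory is not the $W$ appearing in the statement, and the entire chain of identities after the time change is unsupported. Your argument is essentially a re-derivation of the El Karoui--Loisel--Salhi result for simple Cox processes; the content of this theorem is that the \emph{same} formula survives when finitely many jumps coincide, and that extension is exactly what your proposal assumes away.

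The paper closes this gap by a different mechanism: it desynchronizes the coincident arrivals, replacing $N^i$ by $\widetilde N^{i,\epsilon}$ whose $k$-th arrival is shifted to $\tau^i_k-\epsilon^i_k$ so that the aggregate has only unit jumps, builds the process $H^{m,\epsilon}(t)=h_m(\hat U^{\epsilon}(t))-h_m(\hat U(0))+\sum_i\widetilde N^{i,m,\hat U^{\epsilon}}(t)+h_m(m^-)J^{m,\hat U^{\epsilon}}(t)$ from the solution $h_m$ of the delayed differential equation $\beta h_m'(x)=h_m(x+1)-h_m(x)+1$ (whose closed form in terms of $W$ is quoted from El Karoui et al.), shows via the Itô--Tanaka--Meyer formula that $H^{m,\epsilon}$ is a martingale, applies Doob's optional stopping theorem, and then passes to the limit $\epsilon\to 0^+$ using the $\mathrm L^1$-convergence lemmas for $\hat U^{\epsilon}\to\hat U$ and the vanishing of the exceptional sets $A^{i,\epsilon}$, $C^{\epsilon}$. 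If you want to salvage your route, you would still need an analogous approximation-and-limit argument to justify treating the time-changed process as if it had unit jumps; your secondary steps (optional stopping to convert the event count into rescaled time, and the verification that the stated $W$ is the $0$-scale function of $\psi$) are fine in the simple case but do not touch this obstruction.
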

\begin{proof}
    Proof postponed to appendix.
\end{proof}
Figures \ref{fig:gm(0)} and \ref{fig:hm(0)} give a visual image of the values taken by Average Run Length (ARL) when varying the ratio $\rho$ and the parameter $m$.
 \begin{figure}[H]

        \centering
    \begin{subfigure}[b]{0.45\textwidth}
        \centering
        \captionsetup{justification=centering}
        \includegraphics[width=\textwidth]{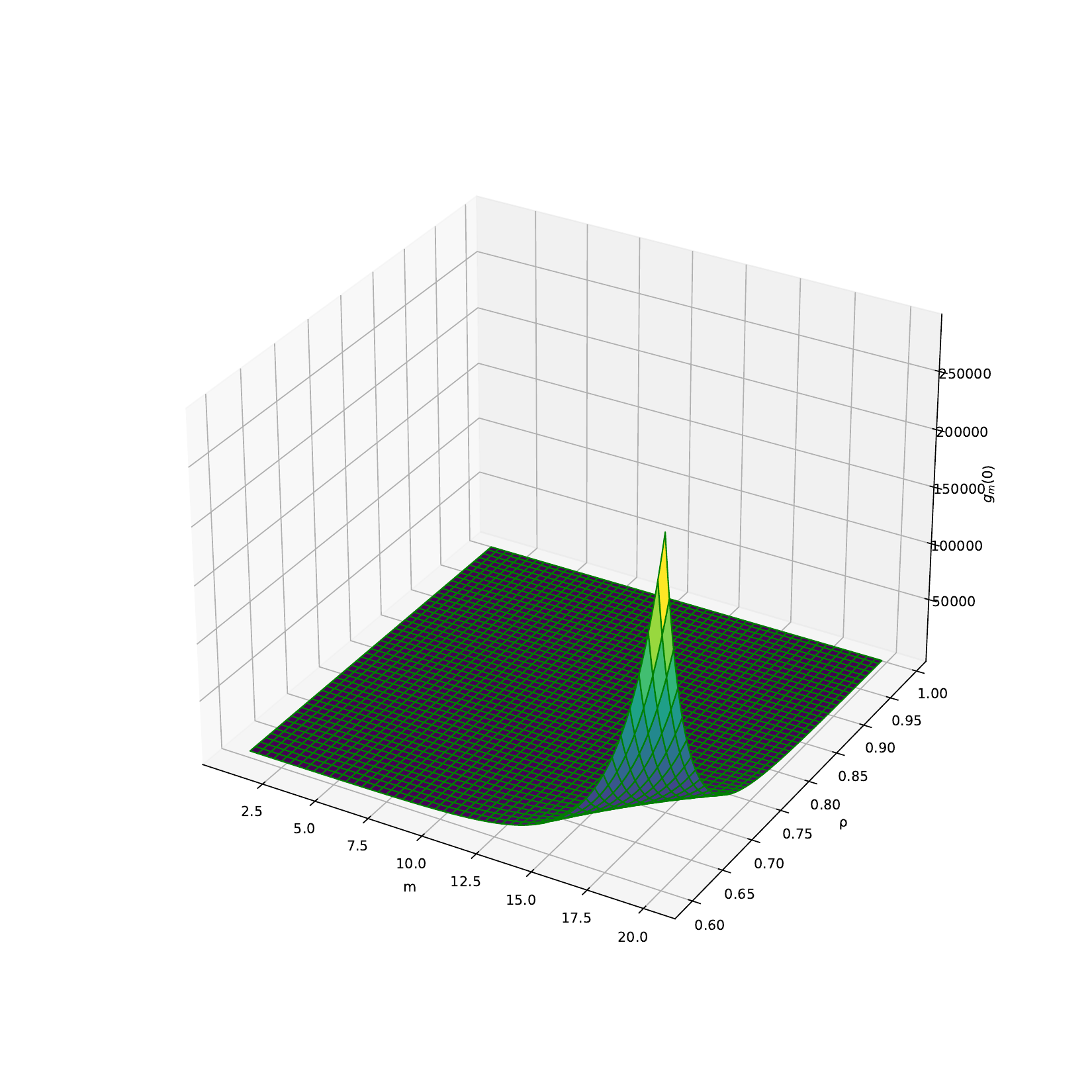}
        \caption{Average Run Length for $\rho\in\left[0.6,1\right[$ \\and $m\in\left[1,15\right]$}
        \label{fig:gm(0)}
    \end{subfigure}
    \begin{subfigure}[b]{0.45\textwidth}
        \centering
        \captionsetup{justification=centering}
        \includegraphics[width=\textwidth]{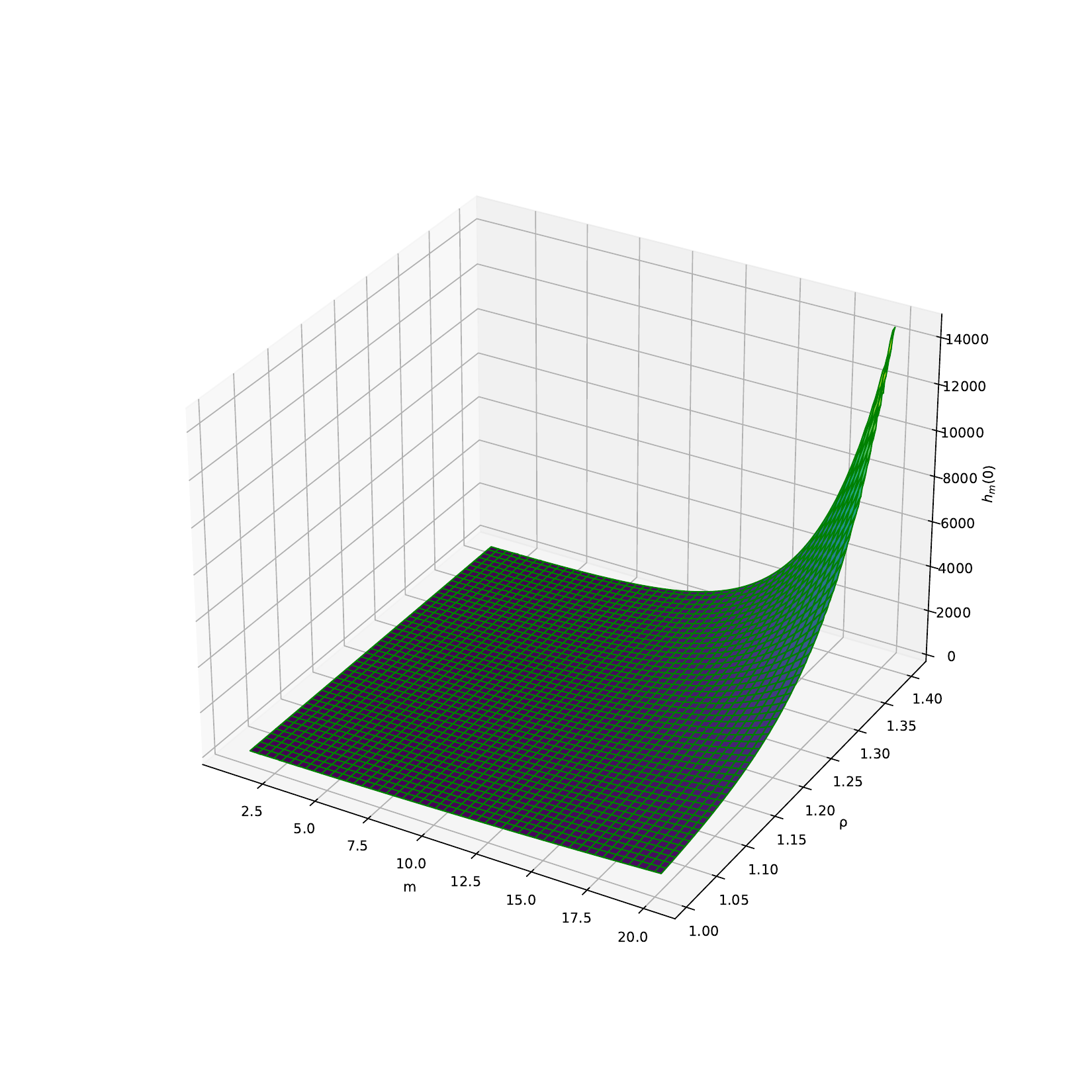}
        \caption{Average Run Length for $\rho\in\left]1,1.4\right]$ \\and $m\in\left[1,15\right]$}
        \label{fig:hm(0)}
    \end{subfigure} 
\end{figure}%in the following on va prouver l'optimalité -ok
Two subsequent theorems \ref{opt_decrease} and \ref{opt_increase} substantiate the optimality of the CUSUM procedure by separately considering the cases where $\rho<1$ and $\rho>1$. These theorems establish that the CUSUM stopping time attains the lower bound of the Lorden criterion $\widetilde{C}(\tau)=\sup _{\theta \in[0, +\infty[} {\text { ess sup }} \mathbb{E}^{\theta}\left[(\tau-\theta)^{+} \big\lvert \mathcal{F}_{\theta}\right]$ in each case, thereby demonstrating its optimality in detecting changes or shifts in sequential observations for both scenarios.
\begin{theo}
\label{opt_decrease}
Let $\widetilde{T}_{\mathrm{C}} =\inf \left\{t \leq T: \widetilde{U}(t)>m\right\}$ denote the CUSUM stopping time. Then, $\widetilde{T}_{\mathrm{C}}$ is proven to be the optimal solution for \ref{detection_prob_count} for $\rho<1$ where
$\sum_{i=1}^D\mathbb{E}\left(N^i(T)\right) = \sum_{i=1}^D\mathbb{E}\left(N^i(\widetilde{T}_{\mathrm{C}})\right)$.
\end{theo}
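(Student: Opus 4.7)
The plan is to prove optimality in two stages: first, compute $\widetilde{C}(\widetilde{T}_C)$ exactly by identifying its worst-case disorder time; second, show that any competitor $\tau$ satisfying the ARL constraint $\sum_i \mathbb{E}[N^i(\tau)] \geq \pi$ must have $\widetilde{C}(\tau) \geq \widetilde{C}(\widetilde{T}_C)$. This mirrors the classical Moustakides--Beibel strategy, here adapted to a Cox process admitting simultaneous jumps, and thereby generalizes the result of El Karoui et al. \cite{ElKaroui2017}.

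For the upper bound, I would introduce the last reset time before detection, $T_\star := \sup\{t < \widetilde{T}_C : \widetilde{U}(t)=0\}$. Because $\widetilde{U}(t) = \sup_{s\le t} U(s) - U(t)$ reflects at zero whenever $U$ attains a new maximum, the càdlàg assumption on the $\lambda^i$ and the strong Markov property imply that the post-$T_\star$ law of $\widetilde{U}$ is that of the process restarted from $0$. Exploiting the Radon--Nikodym density $\frac{d\mathbb{P}_\theta}{d\mathbb{P}_\infty}\big|_{\mathcal{F}_t}=\rho^{\sum_i U^i(t)-U^i(\theta)}$, the inner conditional expectation in $\widetilde{C}(\widetilde{T}_C)$ depends only on the post-$\theta$ path, and the essential supremum over $\theta$ (for $\rho<1$) is achieved precisely at $\theta=T_\star$, since that is the instant at which the remaining excursion of $\widetilde{U}$ up to level $m$ is longest in distribution. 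This collapses $\widetilde{C}(\widetilde{T}_C)$ to $\mathbb{E}_0\!\left[\sum_i N^i(\widetilde{T}_C)\right]$, which is the tractable ARL formula already computed in Theorem~\ref{ARL_rho_inf_1}.

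For the lower bound, my plan is to use the scale function $W$ from Theorem~\ref{ARL_rho_inf_1} to build an exponential (super)martingale. Concretely, I would check that $\rho^{U(t)} W(\widetilde{U}(t))$, stopped at $\widetilde{T}_C$, is a $\mathbb{P}_\infty$-martingale by verifying the compensator equation via Itô's formula for jump processes; the boundary values $W(0)$ and $W(m)/W'(m)$ are calibrated to annihilate the drift on $(0,m)$. Combining this martingale with the change of measure, for any $\theta$ and any stopping time $\tau$, one obtains a pointwise inequality $\sum_i \mathbb{E}^\theta[(N^i(\tau)-N^i(\theta))^+|\mathcal{F}_\theta] \geq \Phi\bigl(\widetilde{U}(\theta), \mathbb{E}[\sum_i(N^i(\tau)-N^i(\theta))|\mathcal{F}_\theta]\bigr)$ for an explicit function $\Phi$ involving $W$. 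Taking the essential supremum over $\theta$ (which picks up the worst $\widetilde{U}(\theta)=0$) and invoking the ARL constraint $\sum_i \mathbb{E}[N^i(\tau)] \geq \pi$ yields $\widetilde{C}(\tau) \geq \mathbb{E}_0[\sum_i N^i(\widetilde{T}_C)]$, matching the upper bound.

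The main obstacle is the simultaneous-jumps feature inherent to trade-throughs: when an $n$-limit event occurs, several components $N^i$ jump at the same instant, so $U$ fails to be quasi-left-continuous and the exit of $\widetilde{U}$ through level $m$ occurs with a jump of random (and possibly multi-valued) overshoot. This complicates both the scale-function construction (the appropriate boundary behavior at $m$ must account for the overshoot rather than only continuous crossings) and the verification of the martingale property (the jump integrals in the Itô decomposition must telescope correctly despite the multiplicity of simultaneous arrivals). The technical heart of the argument is therefore to show that the $W$ displayed in Theorem~\ref{ARL_rho_inf_1} remains the correct exit potential in this non-quasi-left-continuous setting, so that the calibration of $m$ to the ARL constraint, the optional stopping step, and the reduction $\widetilde{C}(\widetilde{T}_C)=\mathbb{E}_0[\sum_i N^i(\widetilde{T}_C)]$ all close consistently.
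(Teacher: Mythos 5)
Your high-level architecture (worst case at a reset time of $\widetilde{U}$ for the upper bound; a scale-function/change-of-measure inequality for the lower bound) is the right family of argument and matches the skeleton the paper inherits from El Karoui et al.~\cite{ElKaroui2017}. However, there is a genuine gap, and it sits exactly at the point that makes this theorem new: the treatment of simultaneous jumps. Your proposal defers it (``the technical heart is therefore to show that $W$ remains the correct exit potential'') and, more importantly, misdiagnoses it for the case $\rho<1$. Since $U=\sum_i N^i-\beta(\rho)\Lambda^i$ with $\beta(\rho)>0$, the reflected process $\widetilde{U}=\sup_{s\le t}U(s)-U(t)$ increases \emph{continuously} between arrivals and jumps only \emph{downward} at arrivals; it therefore crosses the barrier $m$ continuously and there is no overshoot at $m$. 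The real obstruction is that a simultaneous arrival of multiplicity $k\le D$ produces a downward jump of $\widetilde{U}$ of size $k\wedge\widetilde{U}(t^-)$ rather than $1$, so the delayed ODE \ref{dde_gm} (whose solution defines $W$ and $g_m$) is no longer the correct generator equation, and one cannot simply ``verify'' that the same $W$ works. The paper resolves this by the $\epsilon$-separation of arrival times (Definition \ref{def_transf1}), the $\mathrm{L}^1$-convergence Lemmas \ref{conv_transf1} and \ref{y_borne} for $N^{i,\epsilon}$, $\Lambda^{i,\epsilon}$ and $\widetilde{U}^{\epsilon}$, a modified criterion $\widetilde{C}^{\epsilon}$ restricted to the complement of the bad sets $D^{\theta,\epsilon}$, an integration by parts of the conditional delay against the increasing process $\rho^{-\bar{U}^{\epsilon}}$, an appeal to the single-jump result (Theorem 7(i) of \cite{ElKaroui2017}) applied to $\sum_i N^{i,\epsilon}$, and only then a passage to the limit $\epsilon\to 0^+$. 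None of this approximation machinery appears in your plan, and without it (or a genuinely new direct argument for multi-unit downward jumps) the lower bound $\tilde{g}_m(0)\le\widetilde{C}(\tau)$ is not established.

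Two further points are under-specified. First, your reduction of the essential supremum to the last reset time $T_\star$ invokes the strong Markov property of $\widetilde{U}$, which is not available for a general Cox/Hawkes intensity; the correct (and standard) route is a pathwise monotonicity of the conditional delay in $\widetilde{U}(\theta)$, worst case at $\widetilde{U}(\theta)=0$, which is what makes the upper bound collapse to $\tilde{g}_m(0)$ rather than to the $\mathbb{P}_\infty$-quantity of Theorem \ref{ARL_rho_inf_1} (the ARL controls the false-alarm constraint in \ref{detection_prob_count}, not the detection delay; the delay is governed by $\tilde{g}_m$, i.e.\ the DDE with $\tilde{\beta}=\beta(\rho)/\rho$). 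Second, the asserted martingale $\rho^{U(t)}W(\widetilde{U}(t))$ and the function $\Phi$ in your ``pointwise inequality'' are left unverified; the paper instead obtains the key functional inequality
\begin{equation*}
\rho\,\mathbb{E}\Bigl[\int_0^{\tau}\sum_{i=1}^{D}\rho^{-\widetilde{U}(u^-)}\,\mathrm{d}N^{i}(u)\Bigr]\;\le\;\widetilde{C}(\tau)\,\mathbb{E}\bigl[\rho^{-\widetilde{U}(\tau)}\bigr]
\end{equation*}
directly from the integration by parts, and then bounds the left-hand side from below by $\tilde{g}_m(0)\,\mathbb{E}[\rho^{-\widetilde{U}(\tau)}]$. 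Making your martingale claim precise would essentially reconstruct that step, but it still has to be done for the $\epsilon$-approximated, jump-size-one process and then passed to the limit.
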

\begin{proof}
    Proof postponed to appendix.
\end{proof}
\begin{theo}
\label{opt_increase}
Let $\hat{T}_{\mathrm{C}} =\inf \left\{t \leq T: \hat{U}(t)>m\right\}$ denote the CUSUM stopping time. Then, $\hat{T}_{\mathrm{C}}$ is proven to be the optimal solution for \ref{detection_prob_count} for $\rho>1$ where
$\sum_{i=1}^D\mathbb{E}\left(N^i(T)\right) = \sum_{i=1}^D\mathbb{E}\left(N^i(\hat{T}_{\mathrm{C}})\right)$.
\end{theo}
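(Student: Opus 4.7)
The strategy is to adapt the proof of Theorem \ref{opt_decrease} to the reflected-at-minimum CUSUM process $\hat{U}(t) = U(t) - \inf_{0\le s \le t} U(s)$, which is the appropriate statistic in the $\rho > 1$ regime (where $\beta(\rho) > 1$ and $U$ drifts downward between arrivals under $\mathbb{P}_{\infty}$, drifts upward under $\mathbb{P}_{\theta}$ after $\theta$). The argument decomposes into three stages: reduce $\widetilde{C}(\hat{T}_{\mathrm{C}})$ to an explicit quantity via the scale function $W$ from Theorem \ref{ARL_rho_inf_2}; identify the worst-case starting value $\hat{U}(\theta) = 0$ for the post-change conditional delay; and prove that any competitor $\tau \in \mathcal{T}_{\pi}$ with the same expected arrival count $\sum_i \mathbb{E}(N^i(\tau)) = \sum_i \mathbb{E}(N^i(\hat{T}_{\mathrm{C}}))$ cannot improve on this value.

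For the first two stages, I would use the strong Markov property of $\hat{U}$ at the change time $\theta$ together with the Time-rescaling theorem invoked in the excerpt to reduce $\mathbb{E}^{\theta}[\sum_i (N^i(\hat{T}_{\mathrm{C}}) - N^i(\theta))^+ \mid \mathcal{F}_{\theta}]$ to a function $h(v)$ depending only on $v := \hat{U}(\theta) \in [0,m]$. Applying Dynkin's formula to $h$ between successive arrivals together with the jump operator induced by the $N^i$ produces an integro-differential equation whose solution, under the reflecting boundary condition at $0$ and the overshoot boundary condition at $m$, is precisely $W(m-v)\,W(m)/W'(m) - \int_0^{m-v} W(y)\,\mathrm{d}y$ as given in Theorem \ref{ARL_rho_inf_2}. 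Since $h$ is decreasing in $v$ (starting closer to the alarm threshold means fewer arrivals before stopping), the essential supremum of $C(\hat{T}_{\mathrm{C}})$ is attained at $v = 0$, giving $\widetilde{C}(\hat{T}_{\mathrm{C}}) = W(m)/(\beta(\rho) W'(m))$. For the third stage I would follow the supermartingale comparison of El Karoui et al.~\cite{ElKaroui2017}: build an auxiliary process $M(t) = h(\hat{U}(t)) + \sum_i N^i(t) - \widetilde{C}(\hat{T}_{\mathrm{C}})$ and verify it is a $\mathbb{P}_{\infty}$-supermartingale, a true martingale up to $\hat{T}_{\mathrm{C}}$, and nonpositive at $t = 0$. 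Applying optional sampling at any admissible $\tau$ and using the ARL equality turns the supermartingale inequality into $\widetilde{C}(\tau) \ge \widetilde{C}(\hat{T}_{\mathrm{C}})$, with equality at $\tau = \hat{T}_{\mathrm{C}}$.

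The main obstacle is handling the simultaneous jumps across the $D$ component processes $N^i$. Because up to $D$ of the $N^i$ may fire at a single arrival epoch, $\hat{U}$ can jump by any integer between $1$ and $D$, and hence can overshoot $m$ by more than one; the jump operator entering the equation for $W$ becomes of the form $\sum_k \pi_k (W(\cdot + k) - W(\cdot))$, where $\pi_k$ is the conditional probability of $k$ simultaneous firings given an arrival. The classical single-component proofs in Moustakides~\cite{Moustakides2008} and El Karoui et al.~\cite{ElKaroui2017} treat only unit jumps and so do not apply verbatim. The floor-based summation $\sum_{k=0}^{\lfloor x \rfloor}$ appearing in the explicit form of $W$ from Theorem \ref{ARL_rho_inf_2} is exactly the device that absorbs this multi-jump overshoot, and the key technical step will be to verify that both the supermartingale property of $M$ under $\mathbb{P}_{\infty}$ and its martingale saturation along $\hat{T}_{\mathrm{C}}$ survive the enlarged jump structure. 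Once that verification is in hand, the conclusion follows exactly as in the proof of Theorem \ref{opt_decrease}.
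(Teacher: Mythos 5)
There is a genuine gap, and it sits exactly at the point you flag as "the main obstacle." The paper does not handle the simultaneous jumps by enlarging the jump operator; it handles them by \emph{removing} them. The actual device (Definition \ref{def_transf2} and Lemmas \ref{conv_transf1}, \ref{conv_increase}) is to shift the simultaneous arrival times apart by component-dependent increments of order $\epsilon$, producing auxiliary processes $\widetilde{N}^{i,\epsilon}$ whose superposition has only unit jumps, to apply the known unit-jump results of El Karoui et al.\ (their Theorem 7 and the DDEs \ref{dde_gm}--\ref{dde_hm}) to $\sum_i \widetilde{N}^{i,\epsilon}$ and $\hat{U}^{\epsilon}$, and then to pass to the limit $\epsilon\to 0^+$ using the $\mathrm{L}^1$-convergence of $\hat{U}^{\epsilon}$ to $\hat{U}$ and the vanishing probability of the pathological sets $A^{i,\epsilon}$, $C^{\epsilon}$. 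Your plan instead writes a genuinely multi-jump integro-differential equation with operator $\sum_k \pi_k\left(W(\cdot+k)-W(\cdot)\right)$ and asserts that its solution under the stated boundary conditions is the $W$ of Theorem \ref{ARL_rho_inf_2}. That cannot be right as stated: the floor-based summation $\sum_{k=0}^{\lfloor x\rfloor}$ in $W$ is the standard explicit solution of the \emph{unit-delay} equation $\beta h_m^{\prime}(x)=h_m(x+1)-h_m(x)+1$ (it already appears in the single-jump setting of El Karoui et al.) and has nothing to do with overshoots of size greater than one. A genuinely multi-jump equation would depend on the law $(\pi_k)$ of the number of simultaneous firings and would not in general reduce to the stated $W$; proving that the answer is nevertheless the unit-jump formula is precisely the missing content, and it is what the separation-and-limit argument supplies. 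Without that (or an equivalent identification), your stages 1--3 do not reproduce the theorem.

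A secondary but substantive error: you evaluate the worst-case criterion as $W(m)/(\beta(\rho) W^{\prime}(m))$, which is $h_m(m^-)$, i.e.\ the \emph{pre-change} ARL under $\mathbb{P}_{\infty}$ started just below the barrier. The Lorden criterion $\widetilde{C}$ is a post-change expectation under $\mathbb{P}_{\theta}$, where the intensity is $\rho\lambda$; the relevant performance function is $\tilde{h}_m$ built from $\widetilde{W}$ with $\tilde{\beta}=\beta(\rho)/\rho$, not $h_m$. The paper keeps these two families ($g_m,h_m$ for the false-alarm constraint; $\tilde{g}_m,\tilde{h}_m$ for the detection delay) separate, and the optimality statement is that the lower bound $\tilde{h}_m(0)$ on $\widetilde{C}(\tau)$, obtained via the change-of-measure/integration-by-parts argument applied to the $\epsilon$-separated process, is attained by $\hat{T}_{\mathrm{C}}$. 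Your supermartingale-comparison outline for stage 3 is a legitimate alternative skeleton to the paper's integration-by-parts bound, but it inherits both problems above: it needs the correct (post-change) performance function, and it needs a justification that the unit-jump identities survive, or can be recovered from, the simultaneous-jump structure.
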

\begin{proof}
    Proof postponed to appendix.
\end{proof}%mettre ça en remarque dans la partie résultats -ok

\section{Experimental Results}
We proceed to presenting the experimental findings related to our methodology. To begin with, we provide a comprehensive outline of the final form of the kernel in the Marked Hawkes Process. Subsequently, we conduct an assessment of its goodness of fit to the BNP Paribas data. Following that, we perform a thorough analysis to evaluate the methodology's effectiveness in detecting changes in liquidity regimes.
\label{experimental_results}
\subsection{Data Description}
We use Refinitiv limit order book tick-by-tick data (timestamp to the nanosecond) for BNP-Paribas (BNP.PA) stock from 01/01/2022 to 31/12/2022 to fit our Marked Hawkes models and carry out our research on disorder detection. This data give us access to the full depth of the order book and include prices and volumes on the bid and ask.
Throughout the study, we only consider trades flagged as ’normal’ trades. In particular, we did not consider any block-trade or off-book trade in the following.
We restrict our data time-frame to the period of the day not impacted by auction phases (from 9:30 PM to 17 PM, Paris time). Only trades ranging from limit one to four are considered here. The selection is motivated by the relatively rare instances of trade-through events when the limit exceeds four. Indeed, the comprehensive analysis of transactions involving the BNP stock reveals that approximately 4.89\% of the overall transaction count led to at least one trade-through event. While these trade-through events bear significance, it is pertinent to note that trades-through with limit surpassing four constitute a low percentage of merely 0.0126\% within the total transaction count during that specific timeframe. These findings align with the research conducted by Pomponio et al \cite{Pomponio2010}.\\
The VSTOXX® Volatility Index serves as a tool for assessing the price dynamics and liquidity of stocks comprising the EURO STOXX 50® Index throughout the year 2022. This index will facilitate the identification of appropriate days for conducting our comparative analysis. Figure \ref{fig:vstoxx} shows that day 05/01/2022 corresponds to the day with the lowest volatility, while 04/03/2022 is the day with the highest volatility. We will consider 31/08/2022 as the reference day where the volatility index is set equal to the average of the VSTOXX® ($index_{vol}\approx 27$) over the entire year and where the volatility index on the previous and following days does not deviate greatly from the latter.
\begin{figure}[H]
    \centering
        \includegraphics[width=1\textwidth]{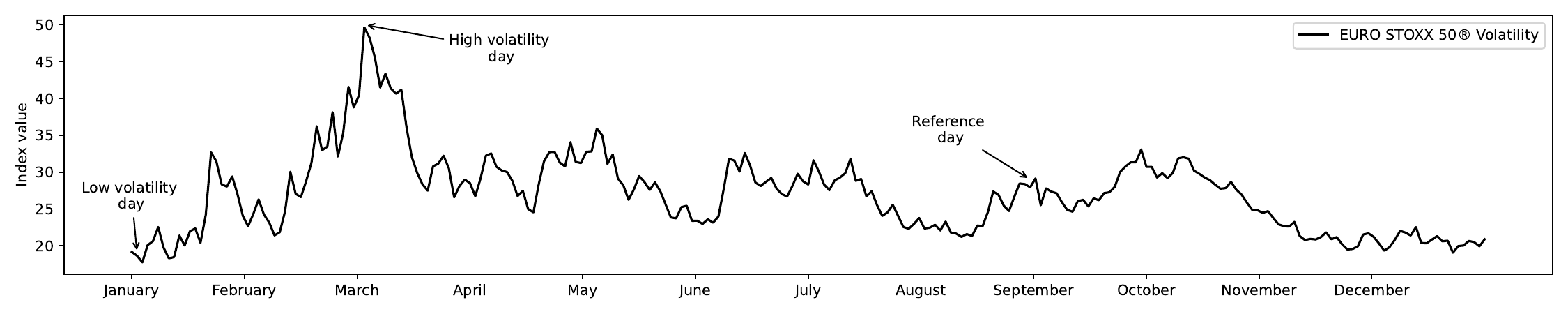}
       \caption{EURO STOXX 50® Volatility index (VSTOXX®) from 01/01/2022 to 31/12/2022.}
       \label{fig:vstoxx}
\end{figure}
\begin{rque}
     It is worth noting that obtaining data on trade-throughs, as defined in this study, does not require the reconstruction of the order flow of the order book, as previously reported by Toke \cite{Toke_2016}. 
\end{rque}
%add % of occurences to justify the number of limits -ok
\subsection{Estimation}
The current state of the art regarding the estimation of Multivariate Hawkes processes revolves around methods~: Method of moments, Maximum likelihood estimation (MLE) and Least squares estimation.
  A number of papers have reviewed each method. For exemple, Cartea et Al. \cite{Cartea2021} construct an adaptive
stratified sampling parametric estimator of the gradient of the least squares estimator. Hawkes \cite{Hawkes1971} used the techniques developed by Bartlett to analyze the spectra of point processes that where which were later retrieved by Bacry et Al. \cite{BacryDayri2011} to propose a non-parametric estimation method for stationary Multivariate Hawkes processes with symmetric kernels. Bacry et Al. \cite{bacrymuzy2016} relaxed these assumptions, except for stationarity, and showed that the Multivariate Hawkes process parameters solve a system of Wiener-Hopf equations. In their work, Daley and Vere-Jones \cite{daley2007introduction} put forth the idea of maximizing the log-likelihood of the sample path. We adopt this approach due to the efficient computation of the likelihood function in our case, which consequently eases the estimation of the model's parameters.

The subsequent analysis involves parameter estimation within our model, utilizing the maximum likelihood method. We define an observation period denoted as $[0, T]$, which corresponds to the time span during which empirical data was collected. To construct the likelihood function \ref{likelihood}, we initiate by defining the probability densities $f_A$ and $f_B$ of the marks. Figure \ref{fig:ccdf_volumes} provides an illustrative representation of the distribution of volume of trades-through on the bid and ask sides of the order book for BNP Paribas stock during May 2022.
\begin{figure}[H]
    \centering
    \begin{subfigure}[b]{0.47\textwidth}
        \centering
        \includegraphics[width=\textwidth]{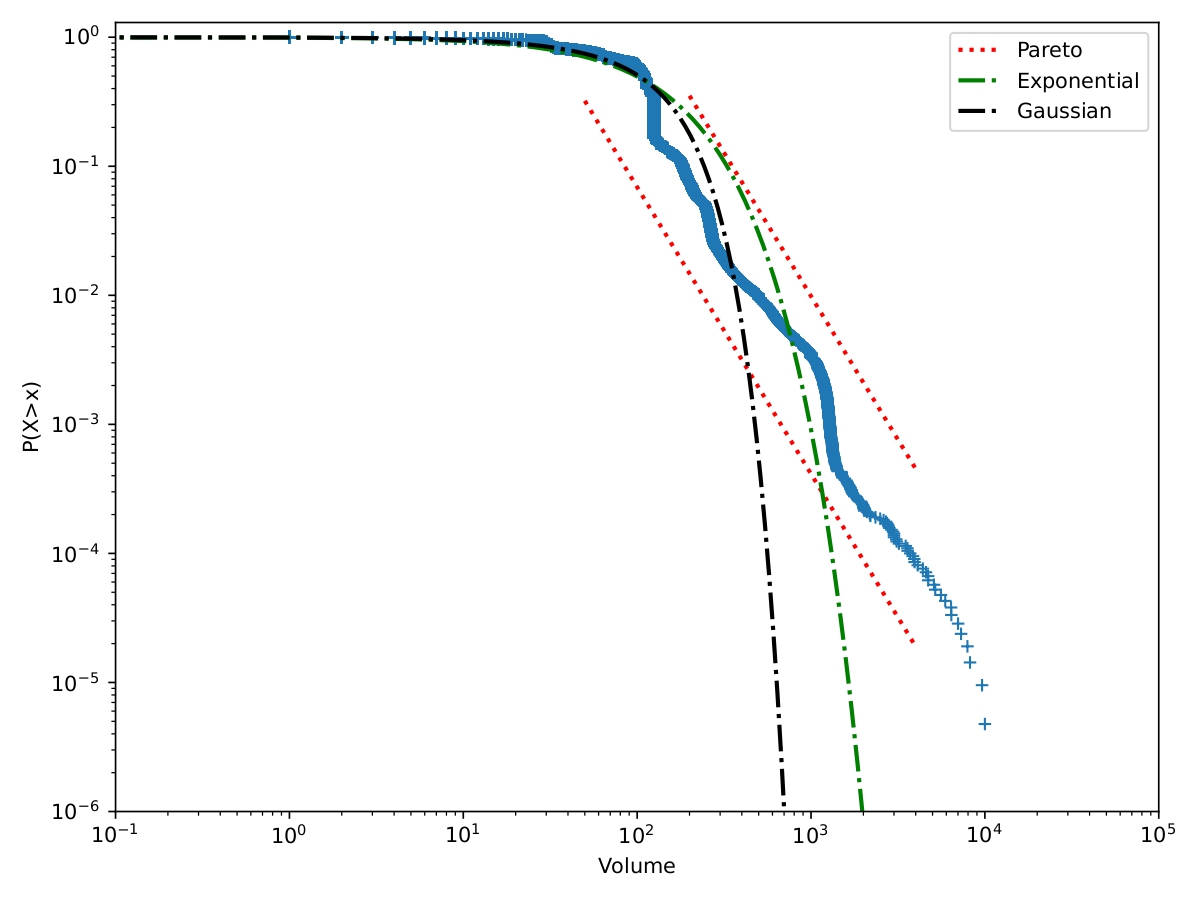}
    \end{subfigure}
    %$\quad\quad$
    \begin{subfigure}[b]{0.47\textwidth}
        \centering
        \includegraphics[width=\textwidth]{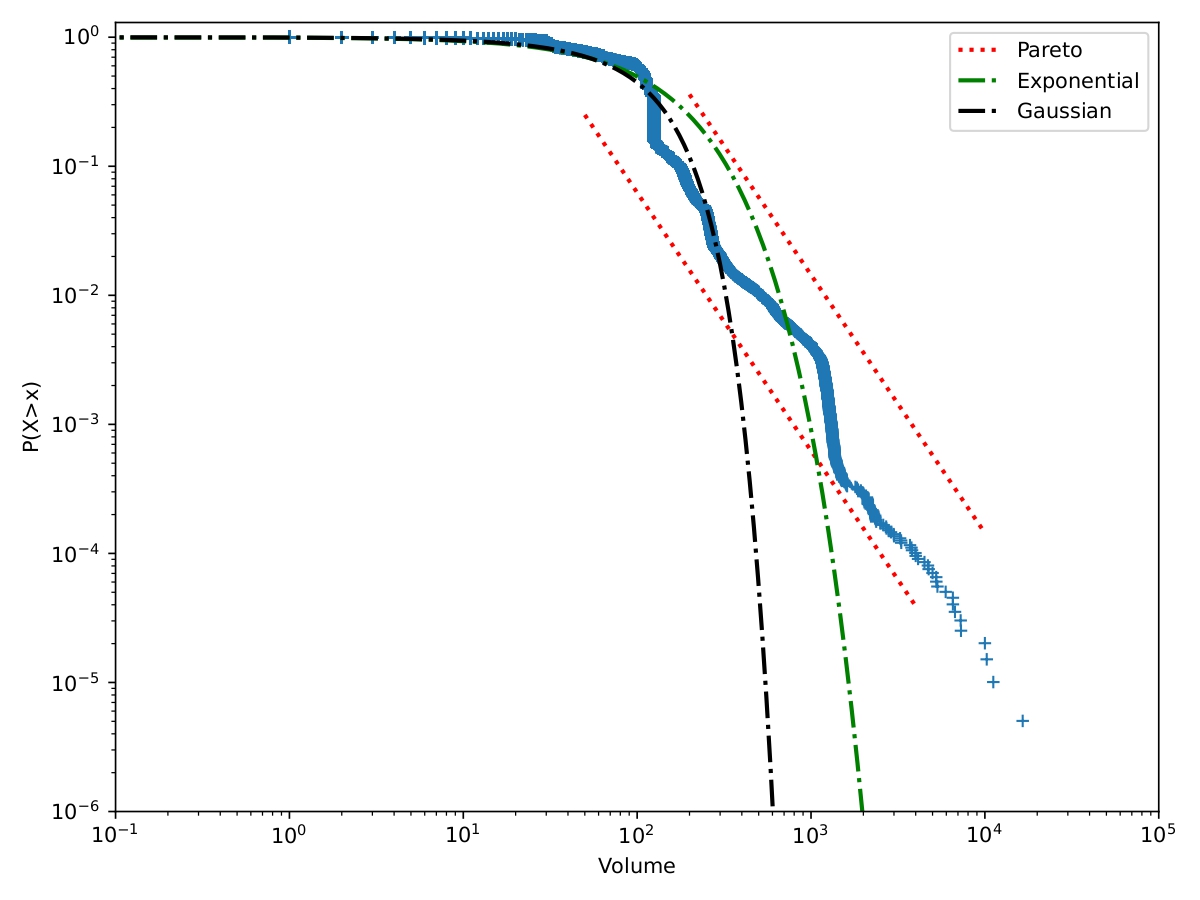}
    \end{subfigure}
       \caption{Log-log scale Complementary Cumulative Density Function (CCDF) of the volume of trades-through on the bid (left figure) and ask (right figure) sides from May 2nd, 2022, 09:30:00, to May 31, 2022, 17:00:00. The dashed red lines correspond to the CCDF of a Pareto distribution with parameter $\eta = 0.45$, the dashed black line corresponds to the CCDF of a Gaussian distribution, and the dashed green line corresponds to the CCDF of an Exponential distribution with a coefficient equal to $0.01$.}
       \label{fig:ccdf_volumes}
\end{figure}%dire que c'est inline with abergel citer LOB -ok
The observations exhibit a good fit with an exponential distribution. We will characterize the impact of volume on trades-through using a normalized power decay function\footnote{Also referred to as \textit{Omori's Law} in the context of earthquake modelling.}. This means that $g_i(v)=c_i\times v^{\eta_i}$ with $\eta_i$ as a non-negative real and $c_i$ as a normalizing constant, $i = A,B$. By normalizing the impact function using its first moment (as specified in \ref{normalizing_cond}), we derive the expression: $$g_i(v)=\frac{v^{\eta_i}}{\mathbb{E}\left(v^{\eta_i}\right)}=\frac{\beta_i^{\eta_i}}{\Gamma(1+\eta_i)}v^{\eta_i}$$ 
In this equation, $\beta_i>0$ denotes the parameter of the exponential distribution of volumes, defined on the half-line $]0,+\infty[$, with $i = A,B$.

Furthermore, the baselines $\mu^A$ and $\mu^B$ are defined as a non-homogeneous piecewise linear continuous functions $t\mapsto \sum_{i=1}^{14}\mu^A_i\mathbbm{1}_{\frac{i-1}{14}T<t\leq \frac{i}{14}T}$ and $t\mapsto \sum_{i=1}^{14}\mu^B_i\mathbbm{1}_{\frac{i-1}{14}T<t\leq \frac{i}{14}T}$ over a subdivision of the time interval $\left[0, T\right]$ into half-hour intervals with $\mu_i^A>0$ and $\mu_i^B>0$ for all $i\in\{1,\dots,14\}$. This adaptive approach allows the model to effectively accommodate fluctuations in intraday market activity.
 \begin{rque}
The model under consideration aligns with the one proposed by Toke et al. \cite{toke2012modelling}, specifically when $g_i \equiv 1$ where $i = A,B$.
\end{rque} 
We proceed by computing the likelihood function $L\left(\{N(t)\}_{t \leq T}\right)$, which represents the joint probability of the observed data as a function of the underlying parameters. 
\begin{prop}[Likelihood]
    Consider a marked Hawkes process $N^{A\times B}=\left(N^A, N^B\right)$ with intensity $ \lambda^{A\times B}=\left(\lambda^A, \lambda^B\right)$ and compensator $ \Lambda^{A\times B}=$ $\left(\Lambda^A, \Lambda^B\right)$. We denote the probability densities of the marks by $(f_A,f_B)$. The log-likelihood of $\{N^{A\times B}(t)\}_{t \leq T}$ relative to the unit rate Poisson process $\ell$ is equal to~:%définition vraisamblance  pour uniformiser la vraissamblence -ok
    \begin{equation}
    \label{likelihood}
    \begin{split}
        \ell &=T-\sum_{i\in \{A,B\}}\left( \int_0^T\mu^{i}\left(u\right)du+\sum_{j\in \{A,B\}} \sum_{\tau_{k^\prime}^j<T} \frac{\alpha_{i j}}{\beta_{i j}}\frac{\beta_j^{\eta_j}}{\Gamma(1+\eta_j)}v_{k^\prime}^{\eta_j}\left(1 - e^{-\beta_{i j}\left(T-\tau_{k^\prime}^j\right)} \right)\right)\\&+\sum_{i\in \{A,B\}}\sum_{ \tau^i_k \leq T} \ln \left(\mu^{i}\left(\tau^i_k\right)+\sum_{j\in \{A,B\}} \sum_{\tau_{k^\prime}^j<\tau^i_k} \alpha_{i j}\frac{\beta_j^{\eta_j}}{\Gamma(1+\eta_j)}v_{k^\prime}^{\eta_j}e^{-\beta_{i j}\left(\tau^i_k-\tau_{k^\prime}^j\right)}\right) +\sum_{i\in \{A,B\}}\sum_{\tau^i_k \leq T} \ln \left(f_i\left(v_{k}^i\right)\right)
        \end{split}
    \end{equation}
    where $f_i\left(v\right) = \beta_i e^{-\beta_i v}$, $i=A,B$.
\end{prop}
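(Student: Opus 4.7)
The plan is to derive the log-likelihood from the standard expression for a marked point process relative to a unit rate Poisson process, and then to substitute the specific forms of the intensity kernel, the baseline and the mark density.

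First I would recall that for a simple point process $N$ on $[0,T]$ with $\mathcal{F}$-intensity $\lambda$, the Radon--Nikodym derivative with respect to a unit rate Poisson measure gives the log-likelihood
\begin{equation*}
\ell_{\mathrm{simple}} = T - \int_0^T \lambda(u)\,\mathrm{d}u + \sum_{\tau_k \leq T}\ln \lambda(\tau_k).
\end{equation*}
For a \emph{marked} point process with factorized intensity $\lambda(t,v)=\lambda_g(t)f(v\mid \mathcal{F}_{t^-})$ as in \eqref{intensity_factor}, the mark contribution multiplies into the likelihood, so that an additional term $\sum_{\tau_k \leq T}\ln f(v_k\mid \mathcal{F}_{\tau_k^-})$ appears. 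Summing these contributions over the two components $A$ and $B$ of $N^{A\times B}$ yields
\begin{equation*}
\ell = T - \sum_{i\in\{A,B\}}\int_0^T \lambda_g^i(u)\,\mathrm{d}u + \sum_{i\in\{A,B\}}\sum_{\tau_k^i\leq T}\ln \lambda_g^i(\tau_k^i) + \sum_{i\in\{A,B\}}\sum_{\tau_k^i\leq T}\ln f_i(v_k^i).
\end{equation*}

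Next I would evaluate the compensator term. Plugging in the ground intensity
\begin{equation*}
\lambda_g^i(u)=\mu^i(u)+\sum_{j\in\{A,B\}}\int_{[0,u)\times\mathbb{R}_+}\alpha_{ij}e^{-\beta_{ij}(u-s)}g_j(v)\,N^j(\mathrm{d}s\times\mathrm{d}v),
\end{equation*}
integrating in $u$ over $[0,T]$ and swapping the order of integration via Fubini (justified by non-negativity of the integrand) gives
\begin{equation*}
\int_0^T \lambda_g^i(u)\,\mathrm{d}u = \int_0^T \mu^i(u)\,\mathrm{d}u + \sum_{j\in\{A,B\}}\sum_{\tau_{k'}^j<T}\alpha_{ij}g_j(v_{k'}^j)\int_{\tau_{k'}^j}^T e^{-\beta_{ij}(u-\tau_{k'}^j)}\,\mathrm{d}u,
\end{equation*}
and the inner integral evaluates to $\beta_{ij}^{-1}\bigl(1-e^{-\beta_{ij}(T-\tau_{k'}^j)}\bigr)$. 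For the sum at jump times, I simply substitute $\tau_k^i$ into the expression of $\lambda_g^i$, obtaining the logarithm of $\mu^i(\tau_k^i)+\sum_{j,\tau_{k'}^j<\tau_k^i}\alpha_{ij}g_j(v_{k'}^j)e^{-\beta_{ij}(\tau_k^i-\tau_{k'}^j)}$.

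Finally I would specialize the impact function to $g_j(v)=\frac{\beta_j^{\eta_j}}{\Gamma(1+\eta_j)}v^{\eta_j}$, which follows from the normalization \eqref{normalizing_cond} together with the exponential mark density $f_j(v)=\beta_j e^{-\beta_j v}$ (since $\int_0^\infty v^{\eta_j}f_j(v)\mathrm{d}v=\Gamma(1+\eta_j)/\beta_j^{\eta_j}$). Inserting this expression for $g_j$ into both the compensator and the intensity-at-jump term, and writing $\ln f_i(v_k^i)$ explicitly, yields the stated formula. The only subtle point is the interchange in Fubini and the careful bookkeeping of the strict versus non-strict inequalities $\tau_{k'}^j<u$ and $\tau_{k'}^j<\tau_k^i$, which follow from the left-continuity of the intensity and the absence of simultaneous arrivals within a single side of the book (assumed via the decomposition into non-overlapping sequences of stopping times).
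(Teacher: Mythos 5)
Your proposal is correct and follows essentially the same route as the paper: both start from the Daley--Vere-Jones likelihood of a marked point process relative to the unit-rate Poisson process (factorizing out the mark densities), sum over the two components, evaluate the compensator term explicitly via Fubini to obtain the $\frac{\alpha_{ij}}{\beta_{ij}}\bigl(1-e^{-\beta_{ij}(T-\tau_{k'}^j)}\bigr)$ factors, and specialize $g_j$ to the normalized power law using the exponential mark density. Your explicit justification of the normalization constant via $\int_0^\infty v^{\eta_j}f_j(v)\,\mathrm{d}v=\Gamma(1+\eta_j)/\beta_j^{\eta_j}$ is a small but welcome addition that the paper leaves implicit.
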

\begin{proof}
    The computation of the log-likelihood of a multidimensional Hawkes process involves summing up the likelihood of each coordinate $i\in \{A,B\}$~:
    \begin{equation}
    \begin{aligned}
        \ell &:= \ln \left(\frac{L\left(\{N(t)\}_{t \leq T}\right)}{L_0}\right)\\&= T + \sum_{i \in \{A,B\}} \ln L\left(\left\{N^i(t)\right\}_{t \leq T}\right)
        \end{aligned}
\end{equation}
According to Theorem 7.3.III of \cite{daley2007introduction}, one can express the likelihood of a realization $\left(\tau^i_1,v^i_1\right)$,$\dots$, $(\tau^i_{N^i_g(T)},v^i_{N^i_g(T)})$ of $N^i(t)$ in the following form~:
\begin{equation}
    L\left(\{N^i(t)\}_{t \leq T}\right) = \left[\prod_{k=1}^{N^i_{\mathrm{g}}(T)} \lambda_{\mathrm{g}}^i\left(\tau^i_k\right)\right]\left[\prod_{k=1}^{N^i_{\mathrm{g}}(T)} f_i\left(v^i_k \right)\right] \exp \left(-\int_0^T \lambda_{\mathrm{g}}^i(u) \mathrm{d} u\right)\end{equation}
The likelihood of the process $N^{A\times B}$ relative to the unit rate Poisson process is thus expressed as follows~:
    \begin{align*}
        \ell &:= \ln \left(\frac{L\left(\{N^{A\times B}(t)\}_{t \leq T}\right)}{L_0}\right)\\&= T + \sum_{i \in \{A,B\}} \ln L\left(\left\{N^i(t)\right\}_{t \leq T}\right) \\&=T-\int_0^T\lambda^A_g\left(u\right) \mathrm{d} u+\int_0^T \ln \lambda^A_g\left(u\right) N^A(\mathrm{d}u\times \mathrm{d}v) + \int_0^T \ln f_A\left(v\right) N^A(\mathrm{d}u\times \mathrm{d}v)\\ &\quad\quad\quad -\int_0^T\lambda^B_g(u) \mathrm{d} u+\int_0^T \ln \lambda^B_g(u) N^B(\mathrm{d}u\times \mathrm{d}v) + \int_0^T \ln f_B\left(v\right) N^B(\mathrm{d}u\times \mathrm{d}v)\\
        &=T-\sum_{i\in \{A,B\}}\left( \int_0^T\mu^{i}\left(u\right)du+\sum_{j\in \{A,B\}} \sum_{\tau_{k^\prime}^j<T} \frac{\alpha_{i j}}{\beta_{i j}}\frac{\beta_j^{\eta_j}}{\Gamma(1+\eta_j)}v_{k^\prime}^{\eta_j}\left(1 - e^{-\beta_{i j}\left(T-\tau_{k^\prime}^j\right)} \right)\right)\\&+\sum_{i\in \{A,B\}}\sum_{ \tau^i_k \leq T} \ln \left(\mu^{i}\left(\tau^i_k\right)+\sum_{j\in \{A,B\}} \sum_{\tau_{k^\prime}^j<\tau^i_k} \alpha_{i j}\frac{\beta_j^{\eta_j}}{\Gamma(1+\eta_j)}v_{k^\prime}^{\eta_j} e^{-\beta_{i j}\left(\tau^i_k-\tau_{k^\prime}^j\right)}\right) +\sum_{i\in \{A,B\}}\sum_{\tau^i_k \leq T} \ln \left(f_i\left(v_{k}^i\right)\right)\\&= T+\ell_{\lambda_g} + \ell_{v},
\end{align*}%nettoyer les notations
%écrire le résultat puis la démo (démo -> rajouter la décomposition de la loi de probabilité de T_i+1 -ok
where $f_A$ (resp. $f_B$) is the density function relative to the distribution of the volumes and the pre-factor $c_j$ is the normalizing constant. %enlever les filtrations sur la densité -ok
\end{proof}
\begin{rque}
Equation \ref{decomp_intensity} allows us to decompose the log-likelihood into the sum of two terms. The first term being generated by the ground intensity and the second by the conditional distribution of marks. Note that the absence of any shared parameter between $\ell_{\lambda_g}$ and $\ell_{v}$, i.e, between $f_i$ and $\lambda_g^i$ allows for the independent maximization of each term. 
\end{rque}
\subsection{Goodness-of-fit  analysis}
We now intend to analyze the quality of fit of our model. We begin by examining the stability of the obtained parameters. Figure \ref{fig:baseline_params} displays the boxplots related to the baseline intensity parameters $(\mu_i^A)_{1\leq i \leq 14}$ and $(\mu_i^B)_{1\leq i \leq 14}$.
\begin{figure}[H]
    \centering
    \begin{subfigure}[b]{0.45\textwidth}
        \centering
        \includegraphics[width=\textwidth]{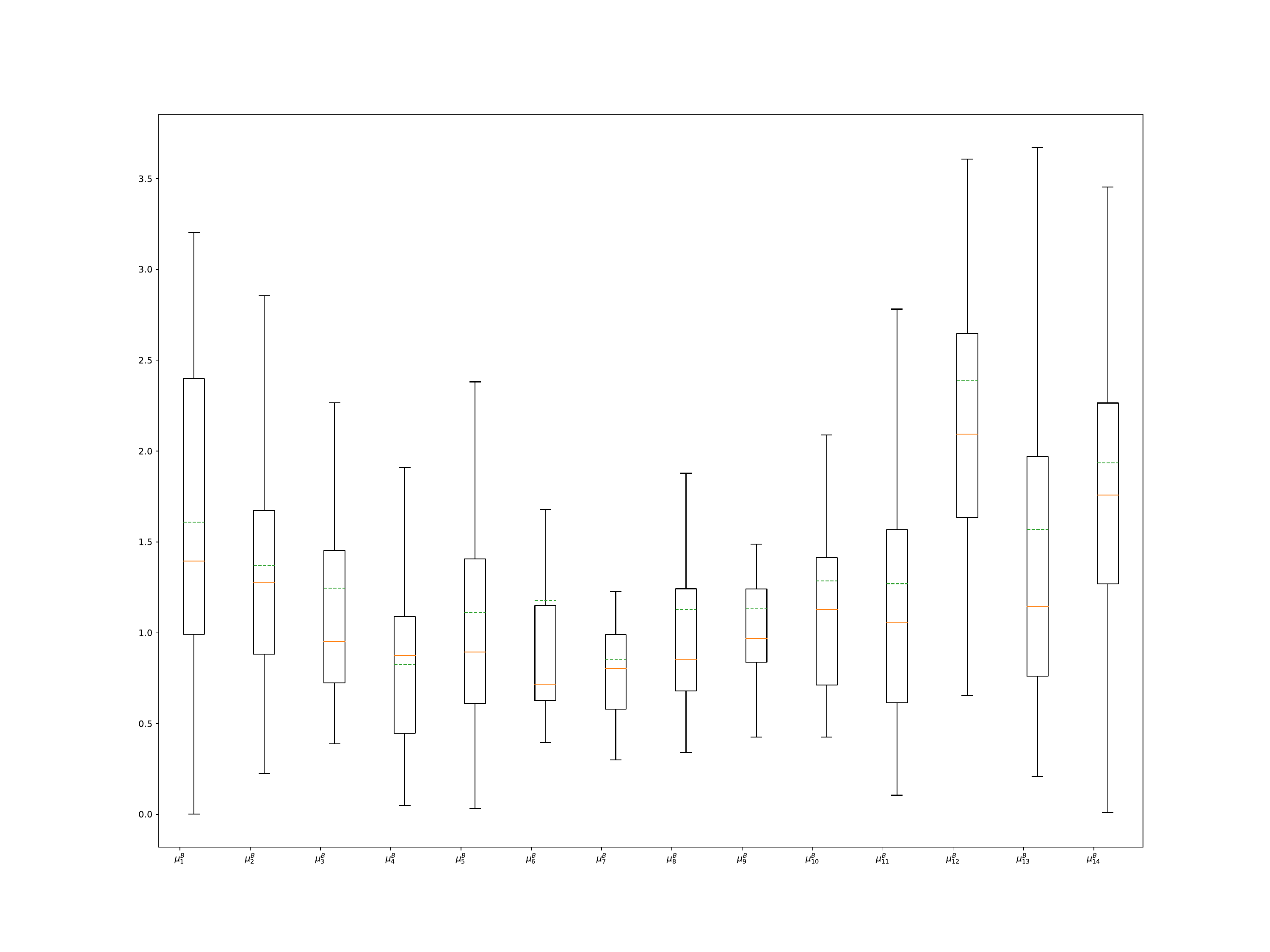}
    \end{subfigure}
    %$\quad\quad$
    \begin{subfigure}[b]{0.45\textwidth}
        \centering
        \includegraphics[width=\textwidth]{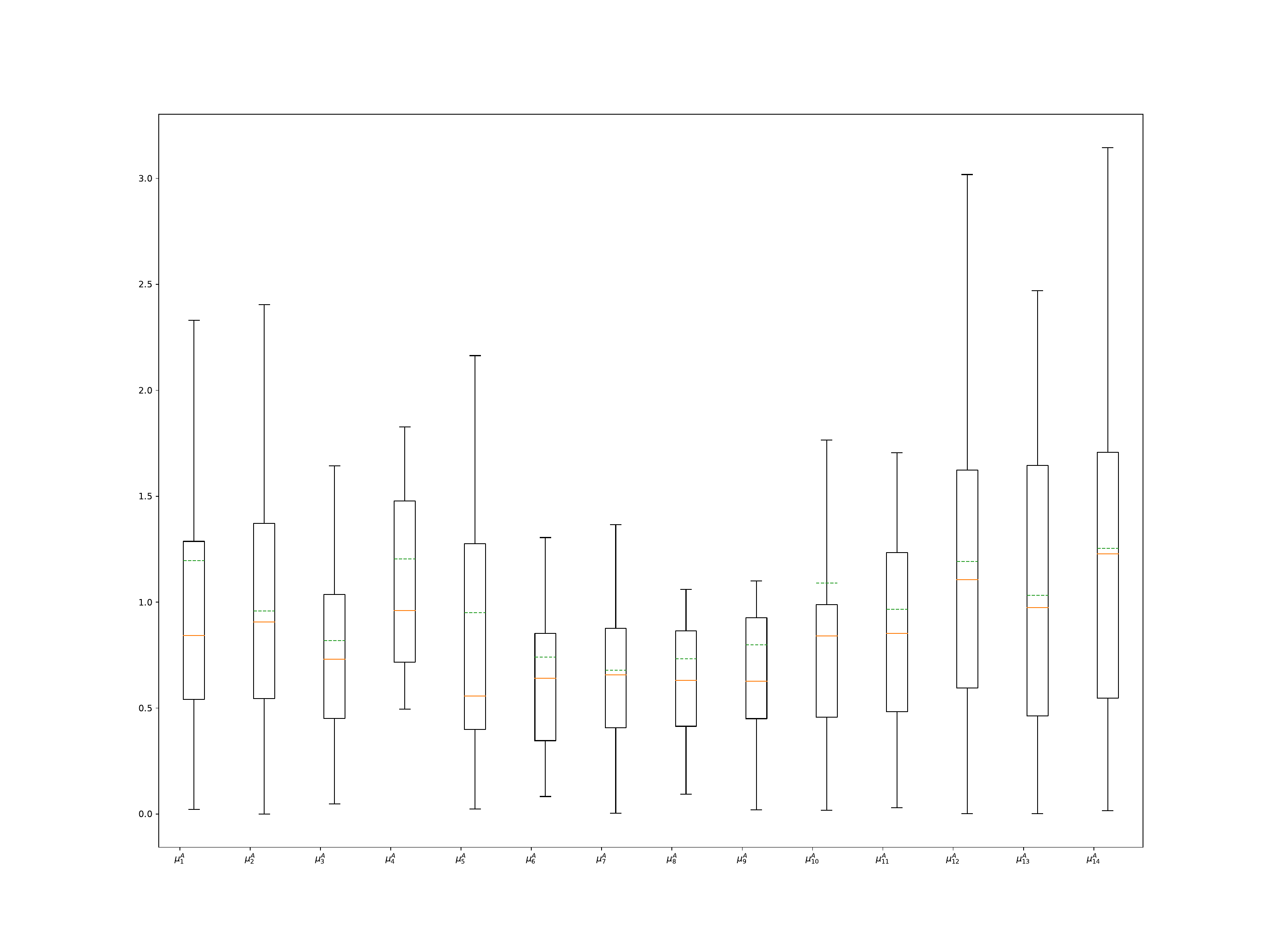}
    \end{subfigure}
       \caption{Tukey Boxplot of the values of the baseline intensity parameters for bid side (left figure) and ask side (right figure).}
       \label{fig:baseline_params}
\end{figure}
The exogenous part of the intensity function effectively captures the market activity profile with a noticeable U-shaped curve. To gain deeper insights, we further explore the remainder of the kernel. In Figure \ref{fig:endo_params}, we present boxplots for the values of the parameters $(\alpha_{ij})_{i,j\in\{A,B\}}$, $(\beta_{ij})_{i,j\in\{A,B\}}$, $(\eta_{i})_{i\in\{A,B\}}$, and $(\beta_{i})_{i\in\{A,B\}}$ related to the endogenous aspect of the kernel, providing valuable information about its self-excitation and mutual-excitation properties.
\begin{figure}[H]
        \centering
        \includegraphics[width=0.5\textwidth]{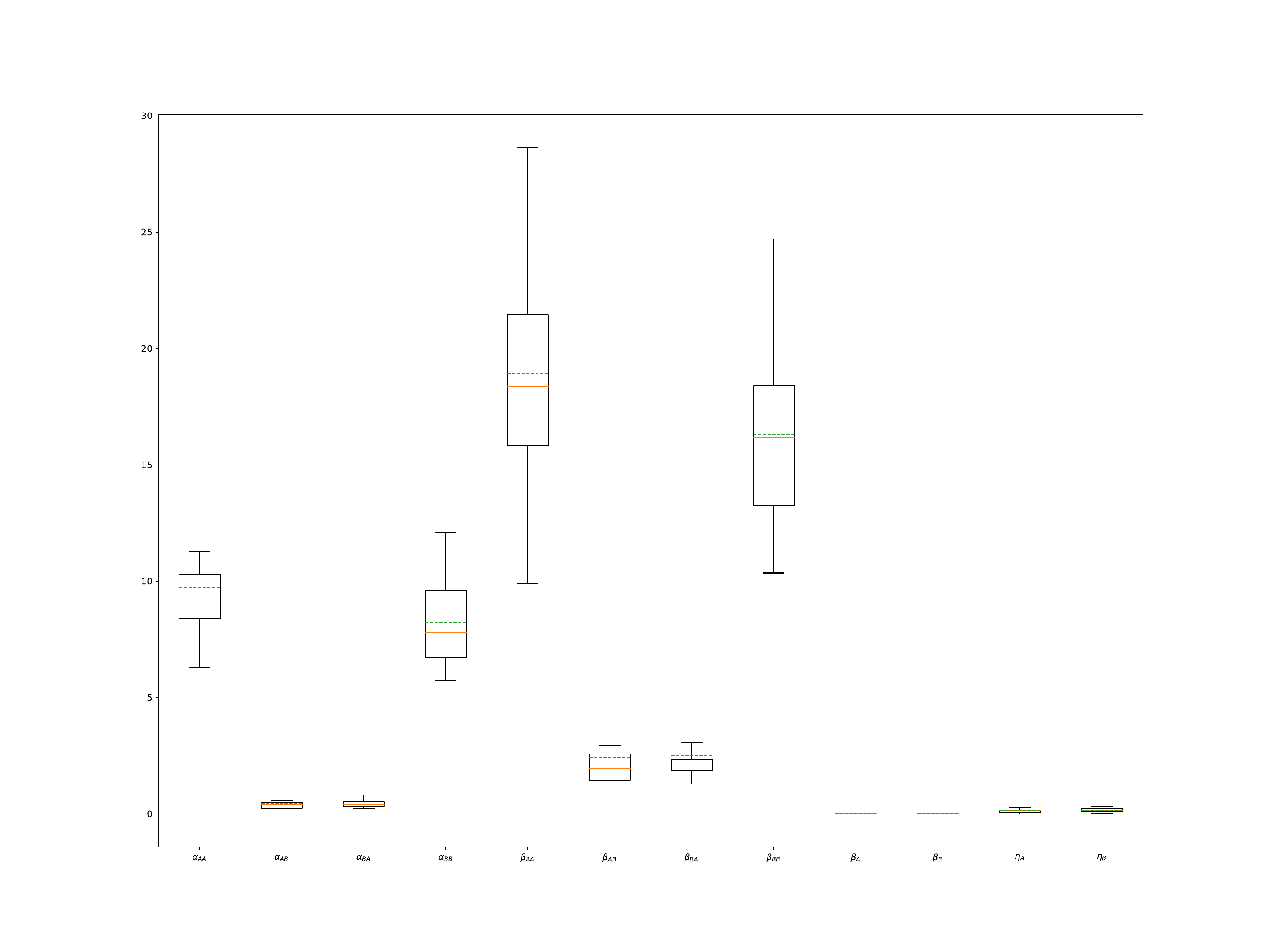}
       \caption{Tukey Boxplot of the values of the parameters $(\alpha_{ij})_{i,j\in\{A,B\}}$, $(\beta_{ij})_{i,j\in\{A,B\}}$, $(\eta_{i})_{i\in\{A,B\}}$ and $(\beta_{i})_{i\in\{A,B\}}$.}
       \label{fig:endo_params}
\end{figure}
We can observe that the values of $\beta_A$ and $\beta_B$ range between $0.009$ and $0.013$, which confirms our initial guess shown in Figure \ref{fig:ccdf_volumes}. Additionally, we notice that mutual excitation effects are weak compared to self-excitation effects, consistent with the findings reported by Toke et al \cite{toke2012modelling}. Furthermore, we observe that the parameters have low variance, and the conditions specified in \ref{non_exposition_cond} are consistently met with an average branching ratio of $0.83$. This indicates the stability of our model and suggests that we are operating in a sub-critical regime.

To assess the validity of our fit, we perform a residual analysis of our counting processes using the Time-Rescaling theorem (see Daley and Vere-Jones \cite{daley2007introduction}). Proposition \ref{res_analysis_recursive} enables us to recursively compute the distances $\Lambda^i\left(\tau_k^i\right)-\Lambda^i\left(\tau_{k-1}^i\right)$, reducing the complexity of the goodness-of-fit algorithm from $O(N^2)$ to $O(N)$, as outlined in Ozaki \cite{Ozaki1979}.

\begin{prop}
Consider a marked $D$-dimensional Hawkes process with exponential decays and an impact function of power law type, denoted as $(N^i)_{1\leq i \leq D}$. Let $\left\{\tau_k^i\right\}_{k\geq 0}$ represent the event times of the process $N^i$, and $N_g^i$ denote its related ground process for all $i \in {1,\dots,D}$. Then,
\begin{equation}
\begin{aligned}
V_k^i&:= \Lambda_g^i\left(\tau_{k}^i\right) -  \Lambda_g^i\left(\tau_{k-1}^i\right)\\&= \int_{\tau_{k-1}^i}^{\tau_k^i} \mu^i(s) \mathrm{d} s+\sum_{j=1}^D  \frac{\alpha_{i j}}{\beta_{i j}}\biggl[A^{i j}(k-1) \times\left(1-e^{-\beta_{i j}\left(\tau_k^i-\tau_{k-1}^i\right)}\right)\biggl.\\&\biggr.+\sum_{\tau_{k-1}^i \leq \tau_{k^{\prime}}^j<\tau_k^i}g_j(v_{k^{\prime}}^j)\left(1-e^{-\beta_{i j}\left(\tau_k^i-\tau_{k^{\prime}}^j\right)}\right) \biggr]
\end{aligned}
\end{equation}
where $A_{i j}(k)$ is defined recursively as
$$
A_{i j}(k)=e^{-\beta_{i j}\left(\tau_{k}^i-\tau_{k-1}^i\right)} A^{i j}(k-1)+\sum_{\tau_{k-1}^i \leq \tau_{k^{\prime}}^j<\tau_{k}^i} g_j\left(v_{k^{\prime}}^j\right)e^{-\beta_{i j}\left(\tau_{k}^i-\tau_{k^{\prime}}^j\right)}
$$
with initial condition:
$$
A_{i j}(0)=0
$$
and $\left\{V_k^1\right\}_{k\geq 0},\left\{V_k^2\right\}_{k\geq 0}, \ldots,\left\{V_k^D\right\}_{k\geq 0}$ are $D$ independent sequences of independent identically distributed exponential random variables with unit rate.
\label{res_analysis_recursive}
\end{prop}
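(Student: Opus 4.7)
The plan is to derive the closed-form expression for $V_k^i$ by direct integration against the exponentially decaying kernel, then read off the recursion for $A_{ij}(k)$ by tracking which past jumps of $N^j$ contribute to each inter-event interval $[\tau_{k-1}^i,\tau_k^i]$, and finally invoke the multivariate time-rescaling theorem for the distributional claim.

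First, I would substitute the integral representation of $\lambda_g^i$ into $V_k^i=\int_{\tau_{k-1}^i}^{\tau_k^i}\lambda_g^i(s)\,ds$ to separate out the baseline term $\int_{\tau_{k-1}^i}^{\tau_k^i}\mu^i(s)\,ds$ and, for each $j$, a double integral $\alpha_{ij}\int_{\tau_{k-1}^i}^{\tau_k^i}\int_0^{s}e^{-\beta_{ij}(s-u)}g_j(v)\,N^j(du\times dv)\,ds$. Since all integrands are non-negative, Fubini lets me exchange the outer Lebesgue integral with the inner counting integral, giving $\alpha_{ij}\sum_{\tau_{k'}^j<\tau_k^i} g_j(v_{k'}^j)\int_{\max(\tau_{k'}^j,\tau_{k-1}^i)}^{\tau_k^i} e^{-\beta_{ij}(s-\tau_{k'}^j)}\,ds$. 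Computing the inner integral and splitting the sum between events with $\tau_{k'}^j<\tau_{k-1}^i$ (each contributing $e^{-\beta_{ij}(\tau_{k-1}^i-\tau_{k'}^j)}(1-e^{-\beta_{ij}(\tau_k^i-\tau_{k-1}^i)})/\beta_{ij}$) and events with $\tau_{k-1}^i\le\tau_{k'}^j<\tau_k^i$ (each contributing $(1-e^{-\beta_{ij}(\tau_k^i-\tau_{k'}^j)})/\beta_{ij}$) recovers exactly the stated formula, provided I identify $A_{ij}(k-1):=\sum_{\tau_{k'}^j<\tau_{k-1}^i} g_j(v_{k'}^j)\,e^{-\beta_{ij}(\tau_{k-1}^i-\tau_{k'}^j)}$.

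Next, I would verify the recursion by taking the identified expression $A_{ij}(k)=\sum_{\tau_{k'}^j<\tau_k^i} g_j(v_{k'}^j)\,e^{-\beta_{ij}(\tau_k^i-\tau_{k'}^j)}$, splitting the sum at $\tau_{k-1}^i$, and factoring $e^{-\beta_{ij}(\tau_k^i-\tau_{k-1}^i)}$ out of the older block; the younger block is precisely the second term in the recursion. The initial condition $A_{ij}(0)=0$ follows from the convention that no event of $N^j$ contributes to the excitation prior to the first event time of $N^i$, and a short induction on $k$ closes the identity.

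Finally, the distributional statement is an application of the multivariate time-rescaling theorem (Theorem 7.4.IV in Daley and Vere-Jones): since each ground process $N_g^i$ is simple and the processes $N_g^1,\ldots,N_g^D$ have no simultaneous jumps pairwise, the time-changed processes $N_g^i\circ(\Lambda_g^i)^{-1}$ are independent standard Poisson processes, whose inter-arrival times $V_k^i=\Lambda_g^i(\tau_k^i)-\Lambda_g^i(\tau_{k-1}^i)$ are then i.i.d.\ unit-rate exponentials and independent across $i$. The main obstacle is not the algebraic bookkeeping (routine once the Fubini swap is justified) but securing the hypotheses of the multivariate time-rescaling theorem, namely the almost-sure absence of common jumps across coordinates and the $\mathbb{P}$-a.s.\ divergence of each $\Lambda_g^i$ at infinity; the former is inherited from the marked-Hawkes construction of Section~2, and the latter is ensured by the non-negative baselines $\mu^i>0$ and the stability assumptions of Proposition~\ref{existence_uniqueness}.
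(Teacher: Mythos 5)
Your proposal is correct and follows essentially the same route as the paper's proof: direct integration of $\lambda_g^i$ over $[\tau_{k-1}^i,\tau_k^i]$, splitting the excitation sum at $\tau_{k-1}^i$ to identify $A_{ij}(k-1)$ and its Ozaki-type recursion, and concluding with the time-rescaling theorem. Your added care in justifying the Fubini exchange and in flagging the hypotheses of the multivariate time-rescaling theorem (simplicity, no common jumps, divergence of the compensators) is a welcome refinement but does not change the argument.
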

We compare the deviation between the distribution of the transformed process and that of a stationary Poisson process of intensity equal to 1 through the Kolmogorov-Smirnov test and verify the independence of its observations through the Ljung–Box test. The Q-Q plot in Figure \ref{fig:qq_1feb_marked} shows a typical visual example of the fit of our model on the 01/02/2022 trades-through data. 

\begin{figure}[H]
    \centering
    \begin{subfigure}[b]{0.45\textwidth}
        \centering
        \includegraphics[width=\textwidth]{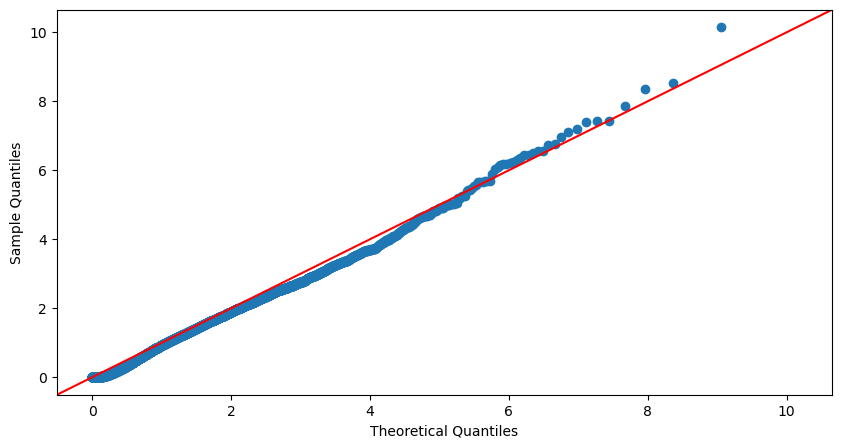}
    \end{subfigure}
    %$\quad\quad$
    \begin{subfigure}[b]{0.45\textwidth}
        \centering
        \includegraphics[width=\textwidth]{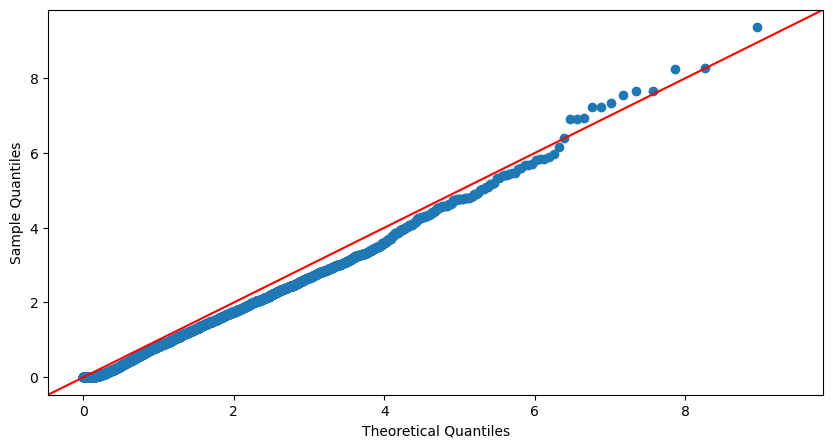}
    \end{subfigure}
       \caption{A depiction of a Q-Q plot showcasing the relationship between the theoretical quantiles, derived from an exponential distribution with a parameter of 1, and the sample quantiles generated through a Marked Hawkes model applied to Bid (or Ask) trades-through data for the BNP Paribas stock on 01/02/2022.}
       \label{fig:qq_1feb_marked}
\end{figure}
The outcomes of our calibration for the month of May 2022 indicate that our model successfully passes the Kolmogorov-Smirnov test 50.2\% of the time at a significance level of 5\%, 60.8\% of the time at a significance level of 2.5\%, and 82.5\% at a significance level of 1\%. Moreover, our model consistently passes the Ljung-Box test up to the twentieth term at a significance level of 5\%. Taking these factors into account, along with the notably convincing Q-Q plots and the presence of low-variance model parameters that meet stability criteria, we can assert that our model fits the market data effectively.

%dire que les résulats numériques verifient les hypothèses d'existence et unicité -ok

%check the variance of the parameters -ok
%give a conclusion about the best model and the overall goodness of fit -ok

\subsection{Detection of shifts in liquidity regimes}
In this section, the results of the CUSUM procedure studied in section \ref{Sequential Change-Point Detection: CUSUM-based optimal stopping scheme} will be used to make a day-to-day comparative analysis of the liquidity state. Our approach involves selecting a specific day and treating the intensity of the Hawkes processes that model the associated trades-through as the baseline intensity of the financial instrument being analyzed. Sequential hypothesis testing is then performed based on this established intensity. In other words, this amounts to comparing the intensities of the trades-through processes related to two distinct days and thus to comparing the state of liquidity between two given days.

We will use the VSTOXX index and bid-ask spread as a marker/proxy of liquidity to assess the quality of our detection. The primary challenge of implementing our sequential hypothesis testing procedure through backtesting is the unavailability of data on disorder periods. Nonetheless, it is noticeable that the commencement of CME's open-outcry trading phase for major equity index futures at 2:30 PM Paris time, along with significant US macroeconomic news releases such as the ISM Manufacturing Index at 4 PM Paris time, leads to amplified market volatility in Europe. The impact of these events on BNP Paribas' stock prices will be utilized as an indicator to evaluate the detection's delay.\\
We start by fixing the parameters $m$ and $\rho$ as defined in section \ref{Sequential Change-Point Detection: CUSUM-based optimal stopping scheme} in order to define the hypotheses to be tested and to determine the average detection delays $\sum_{i =1}^D\mathbb{E}\left(N^i(\widetilde{T}_{\mathrm{C}})\right)$ and $\sum_{i =1}^D\mathbb{E}\left(N^i(\hat{T}_{\mathrm{C}})\right)$ which is relative to them. As the Average Run Length is an increasing function of $m$ for all $\rho>0$ and as Lorden's criterion is pessimistic (in the sense that it seeks to minimise the worst detection delay), it is preferable to choose $m$ small enough not to have a very large detection delay (i.e., second type error). This said, a too small $m$ will result in an increase of false alarms (i.e. first type error). A compromise must therefore be found between the two. We therefore propose to carry out our hypothesis tests for $\rho_{down} = 0.5$ and $\rho_{up} = 1.5$ and $m = 5$. Figures \ref{fig:gm(0)} and \ref{fig:hm(0)} show that these parameters yield $\sum_{i =1}^D\mathbb{E}\left(N^i(\widetilde{T}_{\mathrm{C}})\right) = 75.97$ and $\sum_{i =1}^D\mathbb{E}\left(N^i(\hat{T}_{\mathrm{C}})\right) = 60.4$. %As the average duration of the trades-through corresponding to our data is $136$ milliseconds, this means that the average detection time lag for $\rho_{down}<1$ is Y and Z for $\rho_{up}>1$.
We initially compare the intensities of the trades through between the reference day 31/08/2022 and the day with the highest volatility 04/03/2022.
\begin{figure}[H]
    \centering
        \includegraphics[width=1\textwidth]{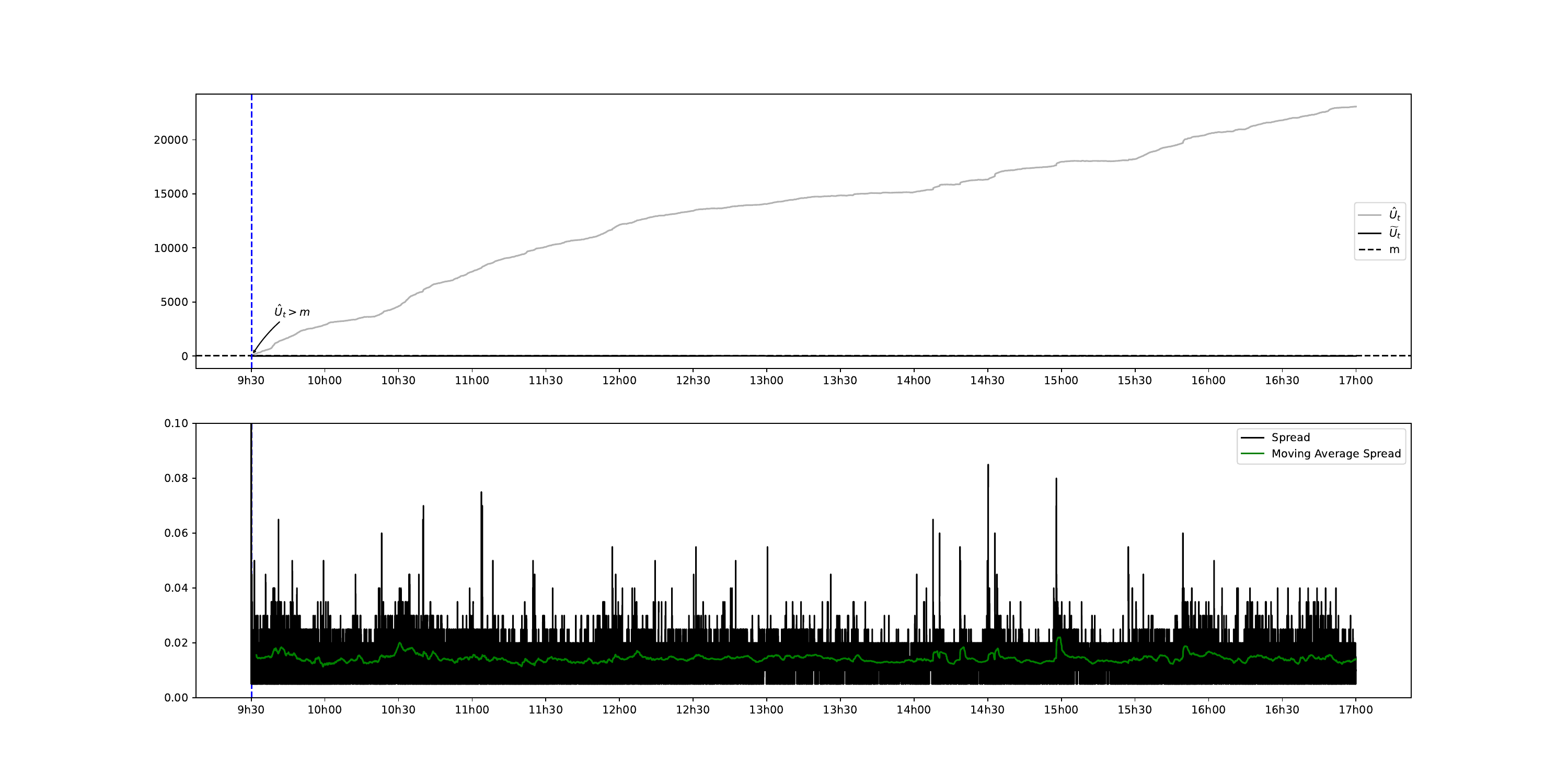}
       \caption{Performing hypothesis testing on the trade-through intensities between 31/08/2022 and 04/03/2022, with parameters $\rho_{up} = 1.5$, $\rho_{down} = 0.5$, and $m = 5$. The first figure (upside) presents the reflected processes $\widetilde{U}$ and $\hat{U}$. The second figure (downside) provides a comparison of the detection results against the spread.}
       \label{fig:sht_high_vol}
\end{figure}

The same comparison is undertaken only this time between the reference day 31/08/2022 and the day with the lowest volatility index 05/01/2022.
\begin{figure}[H]
    \centering
        \includegraphics[width=0.9\textwidth]{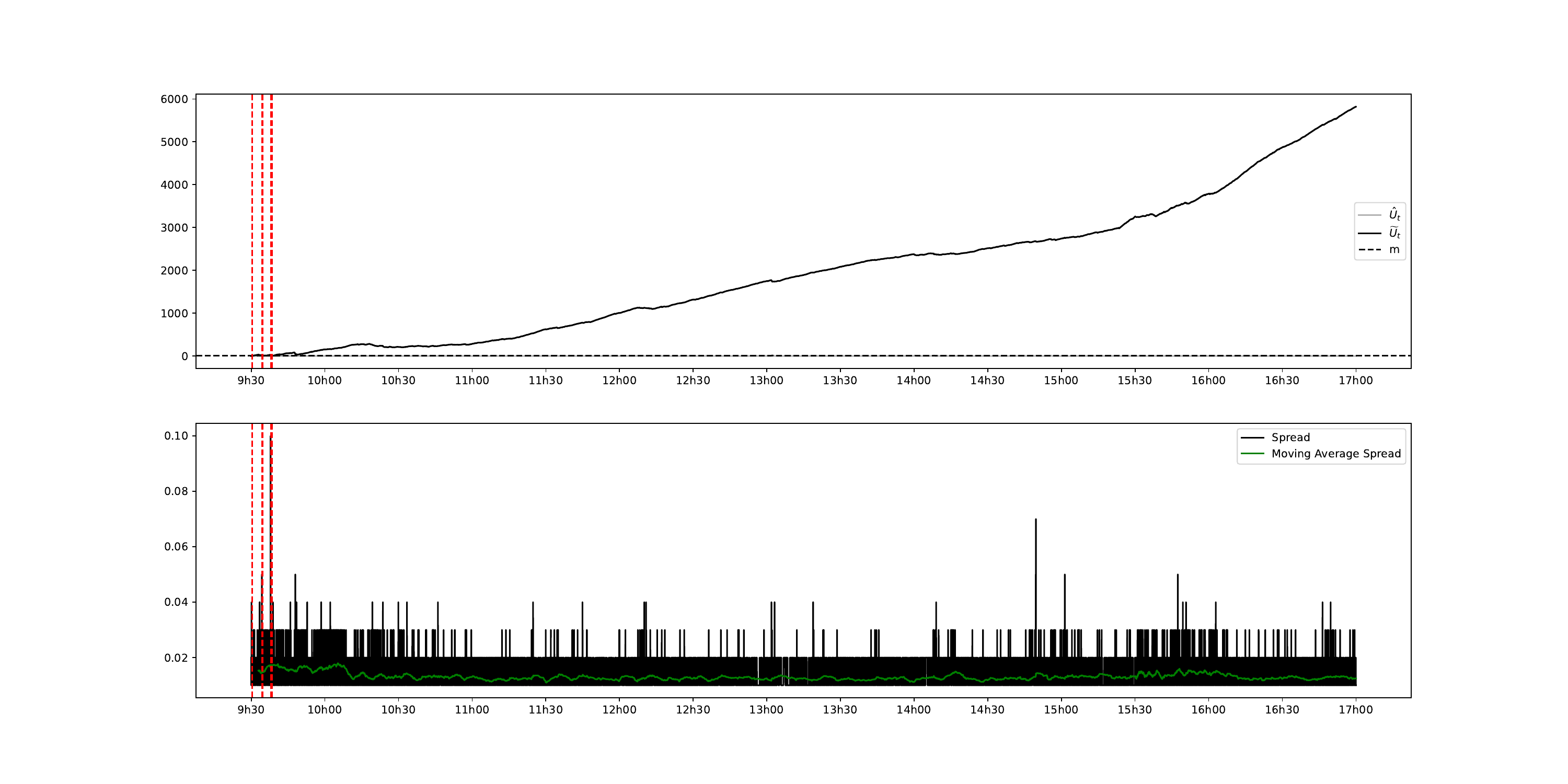}
       \caption{Performing hypothesis testing on the trade-through intensities between 31/08/2022 and 05/01/2022, with parameters $\rho_{up} = 1.5$, $\rho _{down} = 0.5$ and $m = 5$. The first figure (upside) presents the reflected processes $\widetilde{U}$ and $\hat{U}$. The second figure (downside) provides a comparison of the detection results against the spread.}
       \label{fig:sht_low_vol}
\end{figure}

It can be seen that comparing two days where the difference in volatility is quite large implies that the difference in liquidity between two days is detected quite quickly. However, this does not allow us to assess these variations from a more immediate point of view. Another approach would be to compare two relatively close days in order to monitor the evolution of the state of liquidity. This would allow us to better assess the variations in intra-day liquidity. A comparison between 10/05/2022 and 11/05/2022 is therefore suggested.
\begin{figure}[H]
    \centering
    \includegraphics[width=0.9\textwidth]{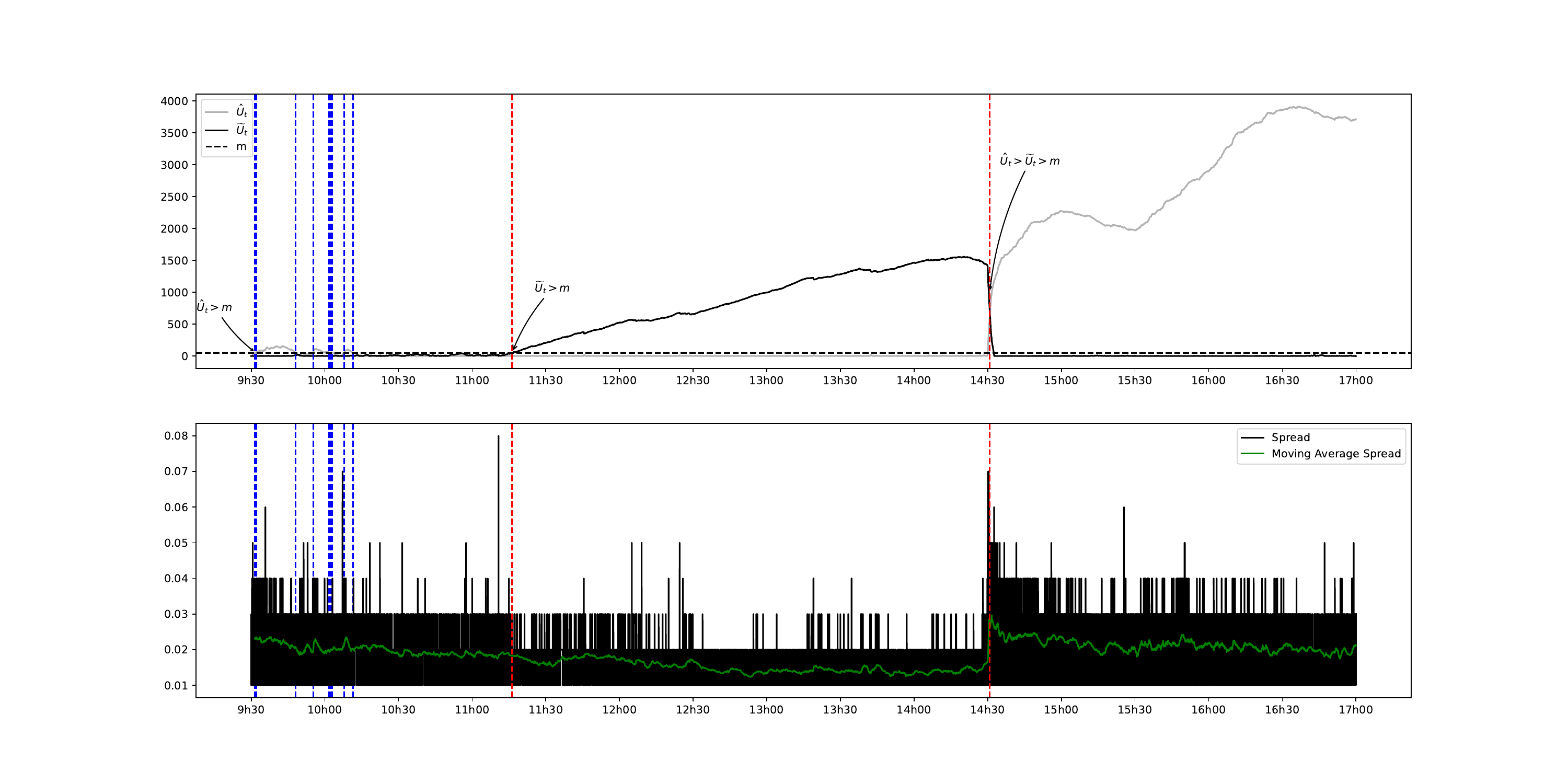}
       \caption{Performing hypothesis testing of the trade-through intensities between 10/05/2022 and 11/05/2022, with parameters $\rho_{up} = 1.5$, $\rho _{down} = 0.5$ and $m = 5$. The first figure (upside) presents the reflected processes $\widetilde{U}$ and $\hat{U}$. The second figure (downside) provides a comparison of the detection results against the spread.}
       \label{fig:may_10_11}
\end{figure}
It can be seen visually that there is a fairly consistent correlation between a low spread and a state where we are under the $\rho <$1 assumption and vice versa (i.e., a high intra-day liquidity). The quality of the detection can also be checked by directly inspecting the number of trades-through that have taken place during the day as shown in figure \ref{fig:cum_tt_may_11}.
\begin{figure}[H]
    \centering
        \includegraphics[width=0.7\textwidth]{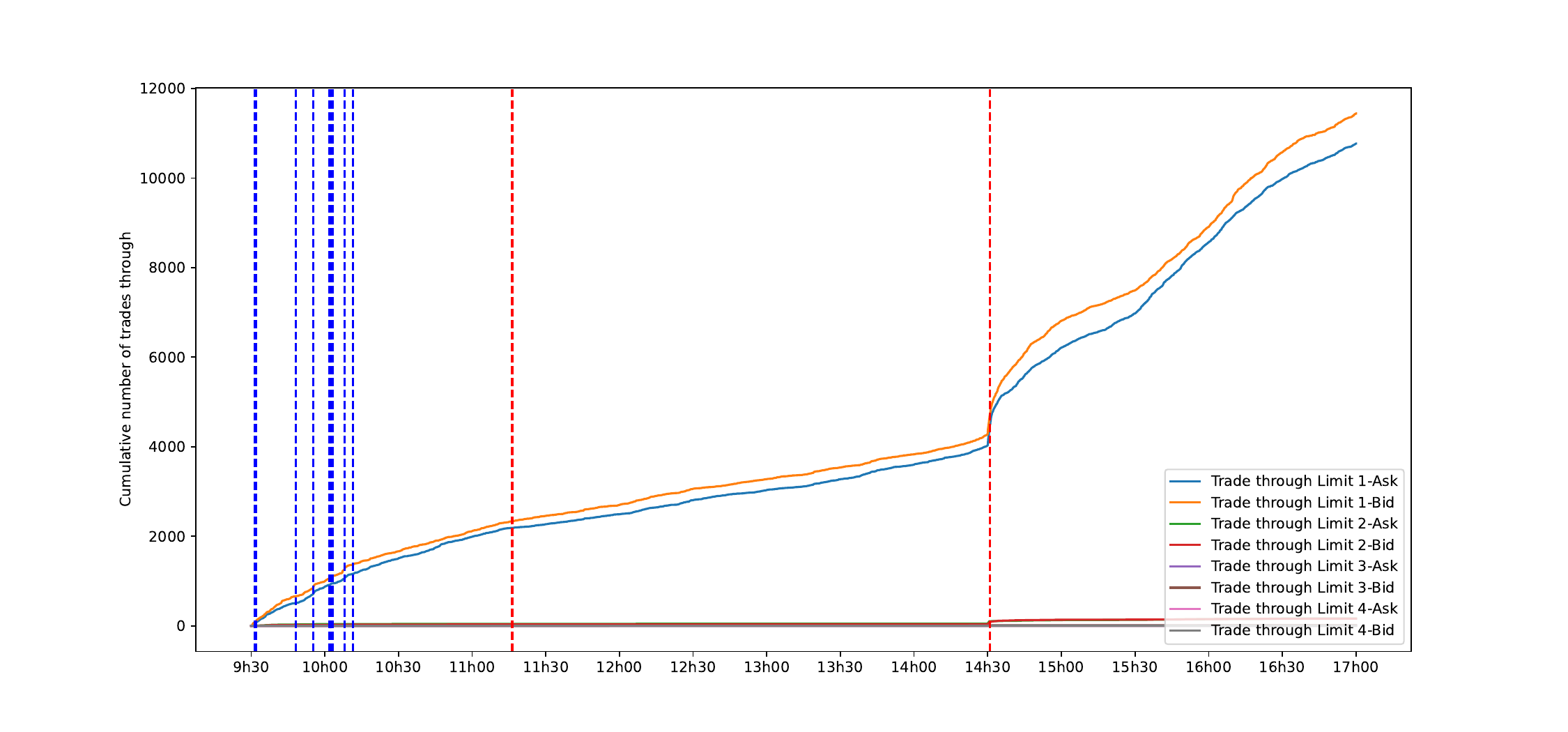}
       \caption{Number of cumulative trades-through on the first 4 limits during 11/05/2022.}
       \label{fig:cum_tt_may_11}
\end{figure}
We can clearly observe the fluctuations in the cumulative number of trades throughout 11:20 AM and 2:30 PM. The findings presented in Figure \ref{fig:cum_tt_may_11} thus validate the effectiveness of our detection method.
\begin{rque}
It is worth emphasizing that our methodology involves comparing intensities among counting processes. Consequently, selecting the appropriate reference day becomes imperative when conducting such comparisons. 
\end{rque}

\section{Conclusion and discussions}
%Cite challet critic of hawkes models
%cite webster
In conclusion, this paper introduces a novel methodology to systematically quantify liquidity within a limit order book and detecting significant changes in its distribution. The use of Marked Hawkes Processes and a CUSUM-based optimal stopping scheme has led to the development of a reliable liquidity proxy with promising applications in financial markets.

The proposed metric holds great potential in enhancing optimal execution algorithms, reducing market impact, and serving as a reliable market liquidity indicator for market makers. The optimality results of the procedure, particularly in the case of a Cox process with simultaneous jumps and a finite time horizon, highlight the robustness of our approach. The tractable formulation of the average detection delay allows for convenient manipulation of the trade-off between the occurrence of false detections and the speed of identifying moments of liquidity disorder.

Moreover, the empirical validation using real market data further reinforces the effectiveness of our methodology, affirming its practicality and relevance in real-world trading scenarios. Future research can build upon these findings to explore its applications.

\printbibliography
\appendix
\section{Appendix}
\subsection{Proof of Section \ref{Sequential Change-Point Detection: CUSUM-based optimal stopping scheme} results}
In what follows, we will extend the results/proofs obtained in Moustakides \cite{Moustakides2008} and El Karoui et al. \cite{ElKaroui2017} to the case of simultaneous arrival times. The idea is to approximate $\sum_{i=1}^{D}N^i$ with a process (see definition \ref{def_transf1} for $\rho <1$ and definition \ref{def_transf2} for $\rho <1$) with jumps of size $1$, in order to extend these results to cases where there are simultaneous jumps. This will allow us to compute the Average Run Length (ARL) and explicit the lower bounds of the Expected Detection Delay (EDD) of the CUSUM procedure. Subsequently, we provide proof that the CUSUM procedure attains this lower bound, establishing its optimality.  %commencer par enoncer le théorème -ok
\begin{defi}
Let $\left(\tau^{i,\epsilon}_k\right)_{k \geq 1} = \left(\tau^i_k+\frac{i}{D}\epsilon\underset{1\leq l,l^{\prime} \leq D}{\inf}~~\underset{0 \leq \tau^l_j < \tau^{l^{\prime}}_{j^{\prime}} \leq \tau^i_{k}}{\inf}\mid \tau^l_j -\tau^{l^{\prime}}_{j^{\prime}}\mid\right)_{k \geq 1} = \left(\tau^i_k+\epsilon^i_{k}\right)_{k \geq 1}$ where $\tau^i_0 = 0$ and $\left(\tau^i_k\right)_{k \geq 1}$ are the event times of the process $N^i$ for all $i$ ranging from $1$ to $D$. In this context, we define $N^{i,\epsilon}$ as the counting process whose arrival times are $\left(\tau^{i,\epsilon}_k\right)_{k \geq 1}$, i.e, $N^{i,\epsilon}(t) = \sum_{k \geq 1}\mathbbm{1}_{\{\tau^{i,\epsilon}_k \leq t\}}$, $\forall 0\leq t\leq T$.%écrire la définition de N^i et que F^epsilon est juste la filtration adaptée du processus somme - ok
%Tu t'arrête juste après la definition de la filtration et tu commence une remarque -ok 
    ~We also define $\mathcal{F}^{\epsilon} = \left(\mathcal{F}^{\epsilon}_t\right)_{t\geq 0}$ as the natural filtration associated to the process $\sum_{i=1}^DN^{i,\epsilon}$.%where $\mathcal{F}^{\epsilon}_t = \sigma(  \tau^{i,\epsilon}_k \leq t;1 \leq i \leq D)$, for $t\geq 0$
 \label{def_transf1}
\end{defi}
\begin{rque}
\label{rque_de_transf1}
 Definition \ref{def_transf1} implies that each element $N^{i,\epsilon}$ is $\mathcal{F}$-adapted and that, $\forall 0\leq t \leq T$,
\[
\begin{split}
N^{i,\epsilon}(t) &= \sum_{k \geq 1}\mathbbm{1}_{\{\tau^{i,\epsilon}_k \leq t\}} \\ &\leq N^i\left(t - \frac{1}{D}\epsilon\underset{1\leq l,l^{\prime} \leq D}{\inf}~~\underset{0 \leq \tau^l_j < \tau^{l^{\prime}}_{j^{\prime}} \leq T}{\inf}\mid \tau^l_j -\tau^{l^{\prime}}_{j^{\prime}}\mid \right) \\ &\leq N^i(t),\quad \text{a.s}\end{split}\]
One can clearly observe that $N^i-N^{i,\epsilon}$ is an $\mathcal{F}$-sub-martingale. According to the Doob-Meyer decomposition, it can be inferred that $\Lambda^i-\Lambda^{i,\epsilon}$ is a predictable process that is non-decreasing and starts from zero. Consequently, $\forall 0\leq t \leq T$,
$$\Lambda^{i,\epsilon}(t)\leq \Lambda^i(t),\quad \text{a.s}$$
\end{rque}%propriété pour la convergence de N et lambda + leur démo et t'introduit Y à la fin sous forme de notation -ok
\begin{nota}
In the subsequent discussion, we will refer to the process $t\mapsto\sup_{0\leq s\leq t}U(s)$ (resp. $t\mapsto\sup_{0\leq s\leq t}U^{\epsilon}(s)$) as $\overline{U}$ (resp. $\overline{U}^{\epsilon}$) where $U^{\epsilon}(t):=\sum_{i=1}^DN^{i,\epsilon}(t)-\beta(\rho) \Lambda^{i,\epsilon}(t)$, $\forall 0\leq t \leq T$. We will also use the notation $\widetilde{U}^{\epsilon}$ to represent $t\mapsto\widetilde{U}^{\epsilon}(t) := \sup _{0 \leq s \leq t} U^{\epsilon}(s) - U^{\epsilon}(t)$.
 \end{nota}
Figures \ref{N_Y_epsilon1} and \ref{N_Y_epsilon2} showcase a comparative simulation of the processes $\sum_{i=1}^DN^{i,\epsilon}$ and $\widetilde{U}^{\epsilon}$  with respect to $\sum_{i=1}^DN^i$ and $\widetilde{U}$.
\begin{figure}[H]
        \centering
    \begin{subfigure}[b]{0.49\textwidth}
        \centering
        \includegraphics[width=\textwidth]{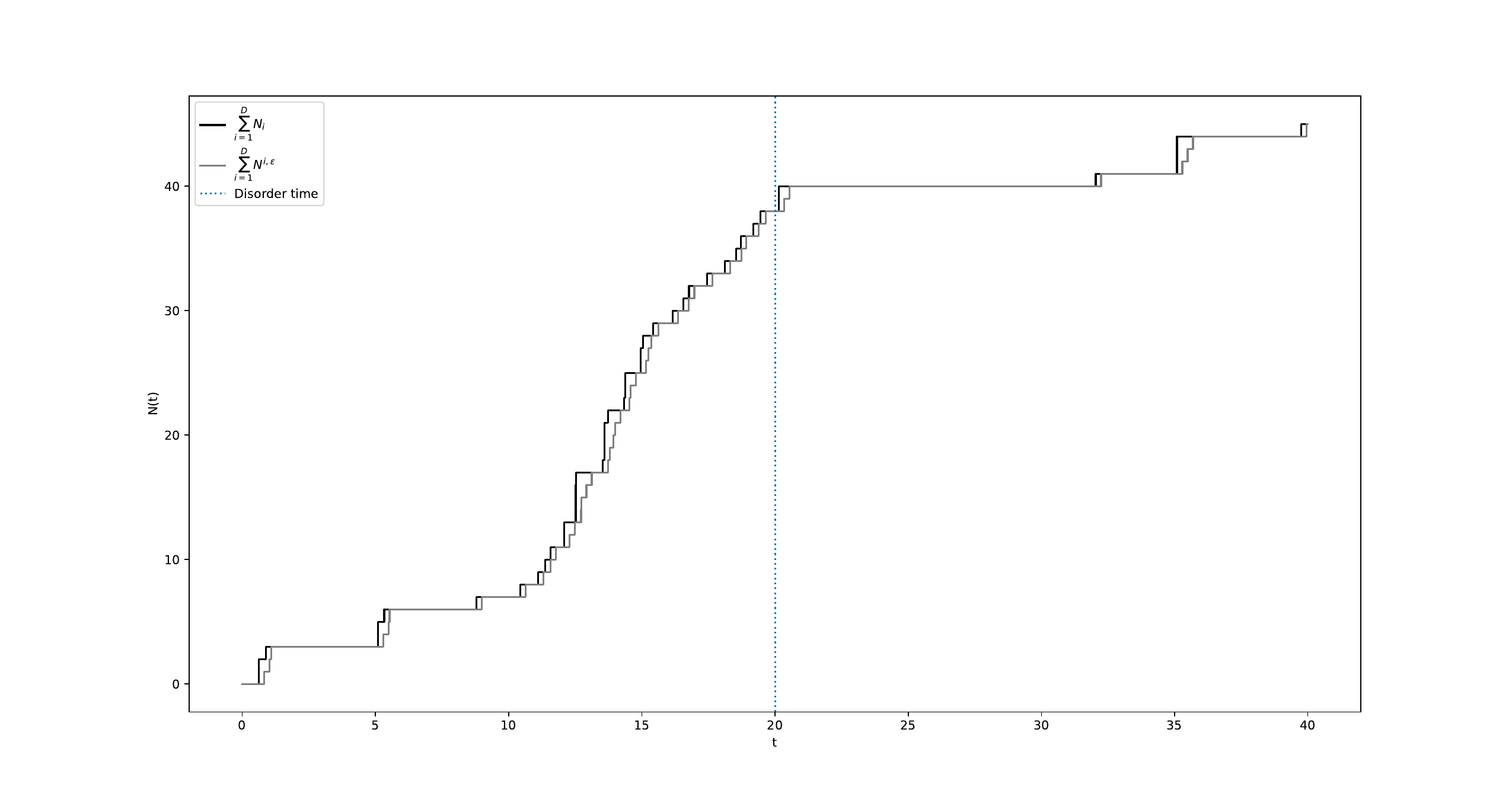}
        \caption{Simulation of $\sum_{i=1}^DN^{i,\epsilon}$}
        \label{N_Y_epsilon1}
    \end{subfigure}
    \begin{subfigure}[b]{0.49\textwidth}
        \centering
        \includegraphics[width=\textwidth]{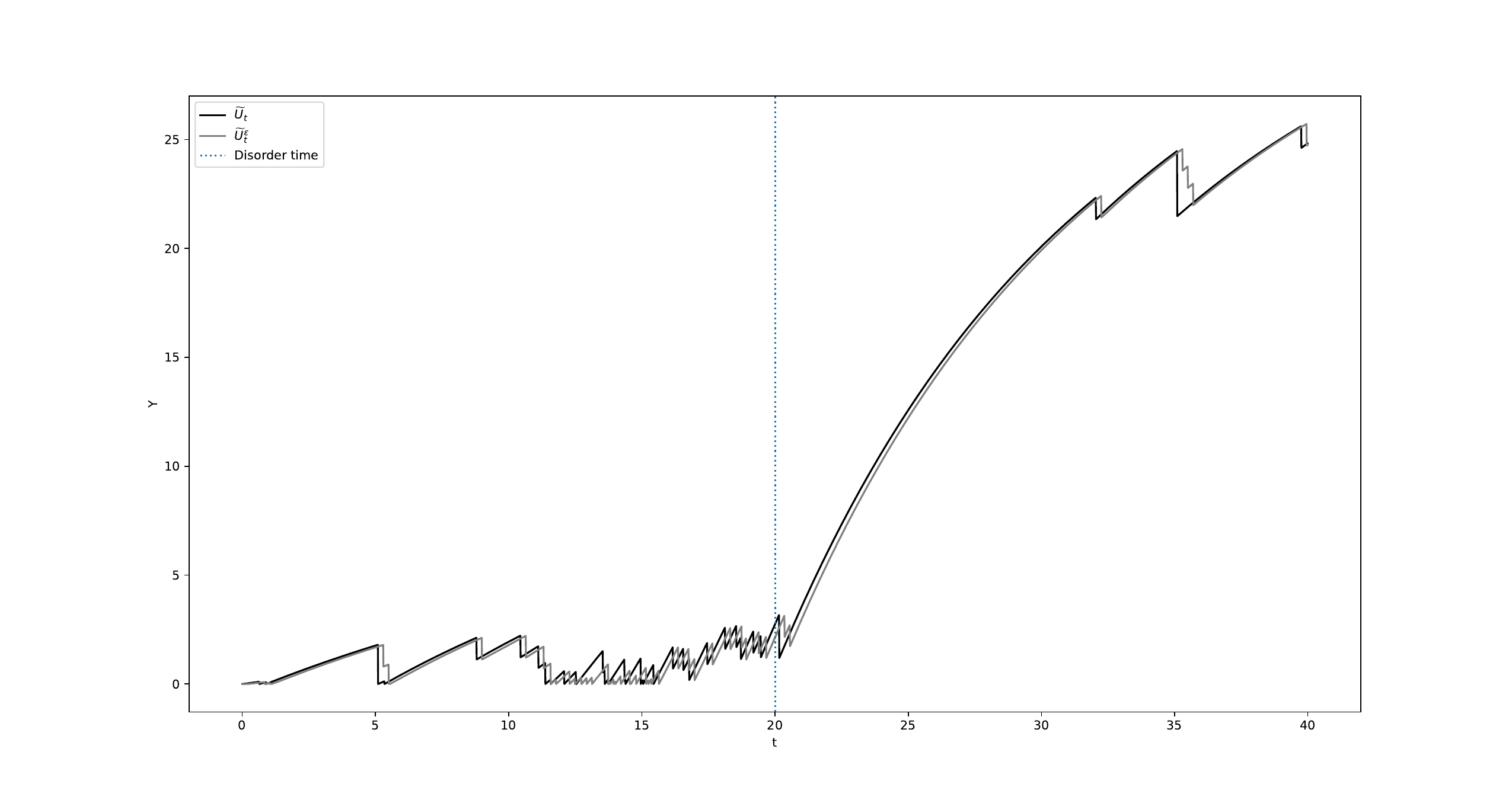}
        \caption{Simulation of $\widetilde{U}^{\epsilon}$}
        \label{N_Y_epsilon2}
    \end{subfigure}  
\end{figure}
\begin{rque}
     \label{ens_vide_conv}
         The limitations of definition \ref{def_transf1} in completely eliminating the simultaneous jumps found in $\sum_{i=1}^D N^i$ are noteworthy. Indeed, the set $A^{i,\epsilon} := \left\{\omega\in\Omega : \exists k\in\mathbb{N} / \tau_{k+1}^i(\omega)\leq\tau_k^{i,\epsilon}(\omega)\right\}$ may not be empty. This can be problematic as it can influence the order of arrival times in $\left(\tau^{i,\epsilon}_k\right)_{k \geq 1}$. However, it can be established through analysis that as $\epsilon$ tends to $0^+$, the limit of $A^{i,\epsilon}$ has a measure of zero. In fact,
     \begin{align*}
         \omega \in A^{i,\epsilon} &\Rightarrow \exists k\in\mathbb{N}: \tau_{k+1}^i(\omega)\leq\tau_k^{i,\epsilon}(\omega) \\&\Rightarrow \exists k\in\mathbb{N}: \tau_{k+1}^i(\omega)\leq \tau_k^i(\omega)+\epsilon T
     \end{align*}
     Consequently,
     $$A^{i,\epsilon}\subset \{\omega\in\Omega:0\leq k\leq N^i(\omega,T)-1/ \tau_{k+1}^i(\omega)-\tau_{k}^i(\omega)\leq \epsilon T\}$$
     This entails that,
     \begin{align*}
         \mathbb{P}\left(A^{i,\epsilon}\right)&\leq \mathbb{E}\left(\sum_{k\leq N^i(T)}\mathbb{P}\left(\tau_{k+1}^i-\tau_{k}^i\leq \epsilon T\mid N^i(T)\right)\right)\\&\leq \mathbb{E}\left(\sum_{k\leq N^i(T)}\left(\int_{\tau_{k}^i}^{\tau_{k}^i+\epsilon T}\lambda^i(s)\mathrm{d}s\right)e^{-\int_{\tau_{k}^i}^{\tau_{k}^i+\epsilon T}\lambda^i(s)\mathrm{d}s}\right)
     \end{align*}
      Once more, utilizing the theorem of dominated convergence yields the following conclusion and the fact that $N^i$ is almost surely finite, we can conclude that~:
     \begin{align*}
         \lim_{\epsilon \rightarrow 0^+}\mathbb{E}\left(\sum_{k\leq N^i(T)}\left(\int_{\tau_{k}^i}^{\tau_{k}^i+\epsilon T}\lambda^i(s)\mathrm{d}s\right)e^{-\int_{\tau_{k}^i}^{\tau_{k}^i+\epsilon T}\lambda^i(s)\mathrm{d}s}\right) &= \mathbb{E}\left(\sum_{k\leq N^i(T)}\lim_{\epsilon \rightarrow 0^+}\left(\int_{\tau_{k}^i}^{\tau_{k}^i+\epsilon T}\lambda^i(s)\mathrm{d}s\right)e^{-\int_{\tau_{k}^i}^{\tau_{k}^i+\epsilon T}\lambda^i(s)\mathrm{d}s}\right)\\&=0
     \end{align*}
     %By taking the limit as $\epsilon$ approaches 0, it follows that~:
     %$$\exists k\in\mathbb{N}: \tau_{k+1}^i(\omega)=\tau_k^i(\omega),\quad %\forall \omega\in A^{i,\epsilon}$$
     %As per the definition of $N^i$ (see Remark \ref{simuta_jump}), it is inherently impossible for this process to exhibit simultaneous jumps. Hence,
     %$$\{\omega\in\Omega:\exists k\in\mathbb{N}/ \tau_{k+1}^i(\omega)=\tau_k^i(\omega)\} = \emptyset$$
     Hence, %convergence en proba -ok
     \begin{equation}
         \lim_{\epsilon \rightarrow 0^+}\mathbb{P}\left(A^{i,\epsilon}\right) = 0
     \end{equation}
 \end{rque}
  We now introduce an intermediate result that will be useful later in the proof of Theorem \ref{ARL_rho_inf_1}.
\begin{lemma}
   For all $t\geq 0$ and $i\in\{ 1,\dots,D\}$,
   \begin{align*} 
   \lim_{\epsilon \rightarrow 0^+}\mathbb{E}\left(\mid N^i(t)-N^{i,\epsilon}(t)\mid\right) &=  \lim_{\epsilon \rightarrow 0^+}\mathbb{E}\left(\mid\Lambda^i(t)-\Lambda^{i,\epsilon}(t)\mid\right) \\&= 0\end{align*}
   % Moreover, the filtration $\mathcal{F}^{\epsilon}_t$ converges to $\mathcal{F}^{\epsilon}_t$ almost surely, i.e, $\lim_{\epsilon \rightarrow 0^+}\mathcal{F}^{\epsilon}_t = \mathcal{F}_t,~a.s$.  
 \label{conv_transf1}
\end{lemma}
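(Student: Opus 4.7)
The plan is to exploit the monotonicity already recorded in the remark following Definition \ref{def_transf1}, namely $N^{i,\epsilon}(t)\le N^i(t)$ and $\Lambda^{i,\epsilon}(t)\le \Lambda^i(t)$ almost surely, so that both absolute values can be dropped and the two limits reduce to showing $\mathbb{E}(N^i(t)-N^{i,\epsilon}(t))\to 0$. Indeed, because $N^{i,\epsilon}-\Lambda^{i,\epsilon}$ is an $\mathcal{F}$-martingale (the Doob--Meyer compensator of $N^i-N^{i,\epsilon}$ being $\Lambda^i-\Lambda^{i,\epsilon}$, as noted in the remark), taking expectations yields $\mathbb{E}(\Lambda^i(t)-\Lambda^{i,\epsilon}(t))=\mathbb{E}(N^i(t)-N^{i,\epsilon}(t))$, so the two convergences are in fact equivalent.

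The core step is then to control $\mathbb{E}(N^i(t)-N^{i,\epsilon}(t))$. Writing $\tau_k^{i,\epsilon}=\tau_k^i+\epsilon_k^i$ as in Definition \ref{def_transf1}, the shift satisfies $\epsilon_k^i\le \epsilon T$ (since $i/D\le 1$ and the infimum of inter-arrival distances over $[0,T]$ is bounded by $T$). Consequently
$$
N^i(t)-N^{i,\epsilon}(t)\;=\;\sum_{k\ge 1}\mathbbm{1}_{\{\tau_k^i\le t<\tau_k^i+\epsilon_k^i\}}\;\le\;\sum_{k\ge 1}\mathbbm{1}_{\{t-\epsilon T<\tau_k^i\le t\}}\;=\;N^i(t)-N^i(t-\epsilon T).
$$
Taking expectations and using that $N^i-\Lambda^i$ is an $\mathcal{F}$-martingale gives
$$
\mathbb{E}\bigl(N^i(t)-N^{i,\epsilon}(t)\bigr)\;\le\;\mathbb{E}\!\left[\int_{t-\epsilon T}^{t}\lambda^i(s)\,\mathrm{d}s\right].
$$

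To finish, I would apply the dominated convergence theorem to the right-hand side: the integrand $\mathbbm{1}_{[t-\epsilon T,t]}(s)\lambda^i(s)$ tends to $0$ almost everywhere (pointwise in $s$, since the interval shrinks), and it is dominated by $\lambda^i(s)\mathbbm{1}_{[0,t]}(s)$, which is integrable under $\mathbb{P}$ because the standing assumption $\mathbb{E}\,\Lambda^i(T)=\mathbb{E}\,N^i(T)<\infty$ of the model (stability conditions of Proposition \ref{existence_uniqueness}) ensures $\int_0^t \lambda^i(s)\,\mathrm{d}s$ is a.s. finite with finite expectation. Hence the right-hand side vanishes as $\epsilon\to 0^+$, which yields simultaneously the convergence for $N^i-N^{i,\epsilon}$ and, by the martingale identity above, for $\Lambda^i-\Lambda^{i,\epsilon}$.

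The only real obstacle is making sure the bound $\epsilon_k^i\le \epsilon T$ is used uniformly in $k$ and that the dominating function in the application of dominated convergence does not itself depend on $\epsilon$; both points are settled by the crude geometric bound above. No continuity assumption on $\lambda^i$ beyond the standing càdlàg integrability is required, so the argument extends verbatim to every $i\in\{1,\dots,D\}$.
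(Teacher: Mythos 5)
Your argument is correct, and it takes a genuinely different --- and more economical --- route than the paper's. The paper fixes the last arrival time $\tau^i_{k_t}$ before $t$, observes that on the complement of $A^{i,\epsilon}$ the difference $N^i(t)-N^{i,\epsilon}(t)$ is $\{0,1\}$-valued, computes $\mathbb{P}\left(0\le t-\tau^i_{k_t}\le \frac{i}{D}\epsilon T\right)$ explicitly through the conditional Poisson representation of the doubly stochastic process, and then disposes of the exceptional set $A^{i,\epsilon}$ separately using Remark \ref{ens_vide_conv}. You instead bound the full difference $\sum_{k}\mathbbm{1}_{\{\tau^i_k\le t<\tau^i_k+\epsilon^i_k\}}$ by the increment $N^i(t)-N^i(t-\epsilon T)$ via the crude uniform bound $\epsilon^i_k\le\epsilon T$, pass to the compensator with the martingale identity, and finish with dominated convergence applied to $\int_{t-\epsilon T}^{t}\lambda^i(s)\,\mathrm{d}s$. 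This buys three things: no case split on $A^{i,\epsilon}$ (your bound is unconditional, so any re-ordering of the shifted times is irrelevant), no need for the explicit conditional-Poisson formula, and an argument that transparently covers the situation where several shifted arrivals straddle $t$ rather than just the last one. The one hypothesis to make explicit is the integrability $\mathbb{E}[N^i(T)]=\mathbb{E}[\Lambda^i(T)]<\infty$, which you correctly attribute to the standing finiteness/stability assumptions and which the paper's own proof also uses implicitly; with it, both the identity $\mathbb{E}\left(\Lambda^i(t)-\Lambda^{i,\epsilon}(t)\right)=\mathbb{E}\left(N^i(t)-N^{i,\epsilon}(t)\right)$ (legitimate because $\Lambda^{i,\epsilon}$ is taken, as in the remark following Definition \ref{def_transf1}, to be the $\mathcal{F}$-compensator of $N^{i,\epsilon}$) and the domination step are justified. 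That same identity is also how the paper tacitly transfers the conclusion from $N^i-N^{i,\epsilon}$ to $\Lambda^i-\Lambda^{i,\epsilon}$, so your treatment of the second limit is faithful to the intended argument while being stated more carefully.
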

 \begin{proof}
 Let $0\leq t\leq T$, $i\in\{1,\dots,D\}$ and $\tau^i_{k_t}$ define the maximum arrival time of $N^i$ occurring before $t$. From Definition \ref{def_transf1}, it follows that, if $\tau_{k_t}^{i,\epsilon}<\tau_{k_t+1}^i$,
    \begin{align*}  
   N^i(t)-N^{i,\epsilon}(t) &= \begin{cases}
       0 \quad\text{if}\quad \tau_{k_t}^{i,\epsilon}\leq t <\tau_{k_t+1}^i\\
       1\quad\text{if}\quad \tau_{k_t}^i\leq t <\tau_{k_t}^{i,\epsilon}
   \end{cases} \end{align*}
   %if $\tau_{k_t+1}^i\leq\tau_{k_t}^{i,\epsilon}$,
   %\begin{align*}  
   %N^i(t)-N^{i,\epsilon}(t) &= %\begin{cases}
    %   0 \quad\text{if}\quad \tau_{k_t+1}^i\leq t <\tau_{k_t}^{i,\epsilon}\\
    %   1\quad\text{if}\quad \tau_{k_t}^i\leq t <\tau_{k_t+1}^i\leq \tau_{k_t}^{i,\epsilon}
   %\end{cases} \end{align*}
   Hence, we get that
   \begin{align*}
       \mathbb{E}\left((N^i(t)-N^{i,\epsilon}(t))\mathbbm{1}_{A^{\epsilon,c}}\right) & \leq \mathbb{E}\left(\mathbbm{1}_{\{\tau_{k_t}^i\leq t <\tau_{k_t}^{i,\epsilon}\}}\right)\\&\leq\mathbb{P}\left(0\leq t- \tau_{k_t}^i\leq\epsilon_{k_t}^i\right)\\&\leq \mathbb{P}\left(0\leq t- \tau_{k_t}^i\leq\frac{i}{D}\epsilon T\right)
   \end{align*}
Based on the definition of a doubly stochastic Poisson process (see chapter 7 of \cite{Snyder1991}), we have~: \begin{align*}
       \mathbb{P}\left(0\leq t- \tau_{k_t}^i\leq\frac{i}{D}\epsilon T\right) &= \mathbb{E}\left(\mathbb{P}\left(t-\frac{i}{D}\epsilon T\leq\tau_{k_t}^i\leq t\mid \lambda^i(s): t-\frac{i}{D}\epsilon T\leq s\leq t \right)\right)\\& = \mathbb{E}\left(\frac{1}{k_t!}\left(\int_{t-\frac{i}{D}\epsilon T}^t\lambda^i(s)\mathrm{d}s\right)^{k_t}e^{-\int_{t-\frac{i}{D}\epsilon T}^t\lambda^i(s)\mathrm{d}s}\right)
   \end{align*} 
   Consequently, by employing the theorem of dominated convergence, it follows that~: %justifier la convergence et modifier la filtration sur le conditionnement -ok
   $$\lim_{\epsilon \rightarrow 0^+}\mathbb{P}\left(0\leq t- \tau_{k_t}^i\leq\frac{i}{D}\epsilon T\right) = 0$$
   and that~:
   \begin{equation}
       \lim_{\epsilon \rightarrow 0^+}\mathbb{E}\left((N^i(t)-N^{i,\epsilon}(t))\mathbbm{1}_{A^{\epsilon,c}}\right) = 0   \end{equation}
   Since $N^{i,\epsilon}\leq N^i$ (see Remark \ref{rque_de_transf1}) and using the fact that $\lim_{\epsilon \rightarrow 0^+}\mathbb{P}\left(A^{i,\epsilon}\right) = 0$ (see Remark \ref{ens_vide_conv}), we get that $\mathbb{E}\left(\mid N^i(t)-N^{i,\epsilon}(t)\mid\right) = \mathbb{E}\left(N^i(t)-N^{i,\epsilon}(t)\right)$ and that~:%rappeler la remarque pour le signe de la valeur absolue au début -ok
   \begin{align*}  
   \lim_{\epsilon \rightarrow 0^+}\mathbb{E}\left(\mid N^i(t)-N^{i,\epsilon}(t)\mid \right)&=\lim_{\epsilon \rightarrow 0^+}\mathbb{E}\left((N^i(t)-N^{i,\epsilon}(t))\mathbbm{1}_{A^{\epsilon}}\right) + \mathbb{E}\left((N^i(t)-N^{i,\epsilon}(t))\mathbbm{1}_{A^{\epsilon,c}}\right)\\&= 0\end{align*}
   Hence,
   \begin{align*} 
   \lim_{\epsilon \rightarrow 0^+}\mathbb{E}\left(\mid N^i(t)-N^{i,\epsilon}(t)\mid\right) &=  \lim_{\epsilon \rightarrow 0^+}\mathbb{E}\left(\mid\Lambda^i(t)-\Lambda^{i,\epsilon}(t)\mid\right) \\&= 0\end{align*}
 \end{proof}
 
\begin{lemma}
    Let $t \in [0, T]$ and $\widetilde{U}$ (resp. $\widetilde{U}^{\epsilon}$) the CUSUM process of $\sum_{i=1}^D N^i$ (resp. $\sum_{i=1}^D N^{i,\epsilon}$) for $\rho <1$.  Then, $$\lim_{\epsilon \rightarrow 0^+} \mathbb{E}_y\left(\mid \widetilde{U}^{\epsilon}(t) - \widetilde{U}(t)\mid\right) = 0$$
    with $\mathbb{E}_y = \mathbb{E}(.\mid \widetilde{U}_0 = y)$.
    \label{y_borne}
\end{lemma}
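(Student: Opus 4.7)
The plan is to split the quantity by the triangle inequality:
$$|\widetilde{U}^{\epsilon}(t) - \widetilde{U}(t)| \leq |\overline{U}^{\epsilon}(t) - \overline{U}(t)| + |U^{\epsilon}(t) - U(t)|,$$
where $\overline{U}(s):=\sup_{r\leq s} U(r)$ and $\overline{U}^{\epsilon}$ is defined analogously. The second term is the easy one: bounding
$|U^{\epsilon}(t)-U(t)|\leq \sum_{i=1}^D|N^{i,\epsilon}(t)-N^i(t)|+\beta(\rho)\sum_{i=1}^D|\Lambda^{i,\epsilon}(t)-\Lambda^i(t)|$
and applying Lemma \ref{conv_transf1} term by term gives $\mathbb{E}_y|U^{\epsilon}(t)-U(t)|\to 0$. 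The heart of the argument therefore lies in controlling $\mathbb{E}_y|\overline{U}^{\epsilon}(t)-\overline{U}(t)|$.

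For the running-supremum term, the plan is to combine almost-sure pathwise convergence with dominated convergence rather than trying to bound $\sup_{s\leq t}|U^{\epsilon}(s)-U(s)|$ directly (which does not vanish, since each original jump produces a small interval on which $|U-U^{\epsilon}|$ is of order one). Fix $\omega$ outside a null set so that each $N^i$ has finitely many jumps in $[0,t]$, no $\tau_k^i$ equals $t$, and the minimal gap between distinct jump times is strictly positive. For $\epsilon$ small enough (depending on $\omega$), Remark \ref{ens_vide_conv} ensures that the event $\bigcap_i (A^{i,\epsilon})^c$ holds and every perturbed jump time $\tau_k^{i,\epsilon}$ with $\tau_k^i\leq t$ still satisfies $\tau_k^{i,\epsilon}\leq t$. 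Since $\beta(\rho)>0$ for $\rho<1$, both $U$ and $U^{\epsilon}$ decrease strictly between jumps, so $\overline{U}(t)$ and $\overline{U}^{\epsilon}(t)$ are attained at jump times (or $0$). Using that $\lambda^i$ is càdlàg, the compensators $\Lambda^{i,\epsilon}$ converge pointwise to $\Lambda^i$, and the value of $U^{\epsilon}$ right after each cluster of perturbed jumps converges to the value of $U$ right after the corresponding original (possibly multiple) jump. This gives the pointwise identity $\overline{U}^{\epsilon}(t)\to\overline{U}(t)$ almost surely as $\epsilon\to 0^+$.

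To promote this to $L^1$ convergence, I dominate both suprema by the integrable random variable $R:=\sum_{i=1}^D N^i(t)+\beta(\rho)\sum_{i=1}^D \Lambda^i(t)$ (using $N^{i,\epsilon}\leq N^i$ and $\Lambda^{i,\epsilon}\leq \Lambda^i$ from Remark \ref{rque_de_transf1}), so that $|\overline{U}^{\epsilon}(t)-\overline{U}(t)|\leq 2R\in L^1$, and invoke dominated convergence. Combining with the first step concludes the proof. The main obstacle is handling the clusters produced by simultaneous jumps of the $N^i$: when several components jump at a common time $\tau$, the perturbation $N^{i,\epsilon}$ spreads this into $D$ individual jumps ordered by the index $i$, and one must track that the cumulative height of $U^{\epsilon}$ right after the cluster matches the single jump of $U$ at $\tau$ in the limit. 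The bookkeeping is made feasible by working on $\bigcap_i (A^{i,\epsilon})^c$ (whose probability tends to one) so that the ordering of perturbed jump times is known and all of them lie within $[0,t]$ for small $\epsilon$.
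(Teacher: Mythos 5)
Your decomposition $|\widetilde{U}^{\epsilon}(t)-\widetilde{U}(t)|\le|\overline{U}^{\epsilon}(t)-\overline{U}(t)|+|U^{\epsilon}(t)-U(t)|$ is the same starting point as the paper's, and your treatment of the second term via Lemma \ref{conv_transf1} is identical. Where you genuinely diverge is on the running-supremum term. The paper controls $|\overline{U}-\overline{U}^{\epsilon}|$ at the perturbed and unperturbed jump times by a two-case induction on the jump index, producing explicit majorants $\mathcal{L}^{\epsilon}_k,\mathcal{M}^{\epsilon}_k$ built from $|\Lambda^{i,\epsilon}(\tau^{\epsilon}_j)-\Lambda^i(\tau_j)|$ and $N^i(\tau_j)-N^{i,\epsilon}(\tau_j)$, each of which vanishes in $L^1$ by Lemma \ref{conv_transf1}. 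You instead argue almost-sure pathwise convergence $\overline{U}^{\epsilon}(t)\to\overline{U}(t)$ (both suprema are attained at jump times since $\beta(\rho)>0$, each cluster of perturbed jumps reassembles the original simultaneous jump in the limit, and the cluster stays inside $[0,t]$ for small $\epsilon$), then dominate by $2R$ with $R=\sum_{i}N^i(t)+\beta(\rho)\sum_{i}\Lambda^i(t)\in L^1$. Your route avoids the paper's case analysis, and your diagnosis that $\sup_{s\le t}|U^{\epsilon}(s)-U(s)|$ does \emph{not} vanish --- so that no uniform pathwise bound is available --- is exactly right and is what forces either your argument or the paper's induction.

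There is, however, one step you assert without justification: the almost-sure convergence $\Lambda^{i,\epsilon}(s_{\epsilon})\to\Lambda^i(s)$ for $s_{\epsilon}\downarrow s$, on which your pointwise limit of $\overline{U}^{\epsilon}(t)$ rests. The compensator $\Lambda^{i,\epsilon}$ is defined only abstractly through the Doob--Meyer decomposition, and Lemma \ref{conv_transf1} delivers convergence in $L^1$ at fixed times, not almost surely; ``the compensators converge pointwise because $\lambda^i$ is c\`adl\`ag'' therefore does not follow as stated, and this is precisely the point where the difficulty of the construction is concentrated. The gap is fixable: for any sequence $\epsilon_n\downarrow 0$, extract a subsequence along which $(\Lambda^i-\Lambda^{i,\epsilon_{n_j}})(T)\to 0$ almost surely; since Remark \ref{rque_de_transf1} makes $\Lambda^i-\Lambda^{i,\epsilon}$ non-decreasing in time, this yields $\sup_{s\le T}(\Lambda^i-\Lambda^{i,\epsilon_{n_j}})(s)\to 0$ a.s., which together with the continuity of $\Lambda^i$ is enough for your cluster bookkeeping; then run the dominated-convergence step along that subsequence and conclude for the full limit by the subsequence criterion for $L^1$ convergence. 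With that addition your proof is complete; without it, the dominated-convergence step has nothing almost-sure to converge.
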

\begin{proof}
Let $(\tau_k)_{k \geq 1}$ denote the distinct and ordered jump times of $\sum_{i=1}^D N^i$ and $\tau^{\epsilon}_k$ define the maximum arrival time of $\sum_{i=1}^D N^{i,\epsilon}$ occurring before $\tau_{k+1}$ for all $k\in \mathbb{N}^*$. The goal here is to demonstrate the convergence $\mathrm{L}^1$ by bounding $\mid \widetilde{U}^{\epsilon} - \widetilde{U}\mid$ by a process that converges to 0 under the $\mathrm{L}^1$ norm. One approach is to evaluate the difference between the processes $\overline{U}$ and $\overline{U}^{\epsilon}$ at the arrival times of $\tau_k$ and $\tau_k^{\epsilon}$, as $\widetilde{U}$ and $\widetilde{U}^{\epsilon}$ are continuous on the intervals $]\tau_k,\tau_{k}^{\epsilon}[$ and $]\tau_{k}^{\epsilon},\tau_{k+1}[$ and only exhibit jumps at those times. To initiate the proof, we start by employing an inductive argument to establish boundedness. As discussed in Remark \ref{ens_vide_conv}, the Definition \ref{def_transf1} is not completely satisfactory. In order to address the raised concerns, the inequalities established in the subsequent induction hold for events within the set $A^{\epsilon,c} := \Omega \backslash \bigcup_{i=1}^D A^{i,\epsilon}$. Subsequently, we will utilize the aforementioned result in Remark \ref{ens_vide_conv} to facilitate the passage to the limit.\\
 For $k=1$, since $\sum_{i=1}^DN^i$ (resp. $\sum_{i=1}^DN^{i,\epsilon}$) jumps at the moment $\tau_1$ (resp. $\tau^{\epsilon}_1$), two cases can arise~:%dire qu'on commence d'abord par trouver des bornes -ok
 \begin{mycase}
     \case $\sup_{0 \leq u \leq \tau_1}\sum_{i=1}^DN^i(u) - \beta(\rho) \Lambda^i(u) > \sum_{i=1}^DN^i(\tau_1) - \beta(\rho) \Lambda^i(\tau_1)$.\\
     This means that $\sum_{i=1}^D\beta(\rho) \Lambda^i(\tau_1) > \sum_{i=1}^DN^i(\tau_1)$. Since $\sum_{i=1}^DN^{i,\epsilon}(\tau_1) = 0$ by definition, it follows that $\overline{U}(\tau_{1}) = \overline{U}^{\epsilon}(\tau_{1}) = 0$ and that~:
     $$\mid\overline{U}(\tau_{1})-\overline{U}^{\epsilon}(\tau_{1})\mid = 0$$
     Moreover, \begin{equation*}%les deux valent 0 donc l'écart vaut 0
\begin{split}\mid\overline{U}(\tau^{\epsilon}_{1})-\overline{U}^{\epsilon}(\tau^{\epsilon}_{1})\mid &= \mid \sup_{u \leq \tau^{\epsilon}_{1}}\sum_{i=1}^DN^i(u) - \beta(\rho) \Lambda^i(u) - \sup_{u \leq \tau^{\epsilon}_{1}}\sum_{i=1}^DN^{i,\epsilon}(u) - \beta(\rho) \Lambda^{i,\epsilon}(u)\mid \\&= \begin{cases} \sum_{i=1}^DN^{i,\epsilon}(\tau^{\epsilon}_1) - \beta(\rho) \Lambda^{i,\epsilon}(\tau^{\epsilon}_1) \quad\text{if}~ \widetilde{U}^{\epsilon}(\tau^{\epsilon}_1) = 0\\ 0~\text{else}\end{cases}\\ &\leq \begin{cases} \beta(\rho)\sum_{i=1}^D\Lambda^i(\tau^{\epsilon}_1) -\Lambda^{i,\epsilon}(\tau^{\epsilon}_1)  \quad\text{if}~ \widetilde{U}^{\epsilon}(\tau^{\epsilon}_1) = 0\\ 0~\text{else}\end{cases}\end{split}
\end{equation*}
\case $\sup_{0 \leq u \leq \tau_1}\sum_{i=1}^DN^i(u) - \beta(\rho) \Lambda^i(u) = \sum_{i=1}^DN^i(\tau_1) - \beta(\rho) \Lambda^i(\tau_1)$.\\
The evaluation of the difference in $\tau_1$ is straightforward~:%déscendre l'argument de l'égalité à côté du and et faire une phrase explicative -ok
\begin{equation*}
    \begin{split}\mid\overline{U}(\tau_{1})-\overline{U}^{\epsilon}(\tau_{1})\mid &= \sum_{i=1}^DN^i(\tau_1) - \beta(\rho) \Lambda^i(\tau_1)\\&\leq \sum_{i=1}^DN^i(\tau_1)\end{split}
\end{equation*}
In addition, if $\widetilde{U}^{\epsilon}(\tau^{\epsilon}_1)>0$, it follows that $\sum_{i=1}^D\beta(\rho) \Lambda^{i,\epsilon}(\tau^{\epsilon}_1) > \sum_{i=1}^DN^{i,\epsilon}(\tau^{\epsilon}_1)$. Given that $\sum_{i=1}^DN^i(\tau_1)  = \sum_{i=1}^DN^{i,\epsilon}(\tau^{\epsilon}_1)$, we can deduce that $\sum_{i=1}^D\beta(\rho) \Lambda^{i,\epsilon}(\tau^{\epsilon}_1) > \sum_{i=1}^DN^i(\tau_1)$ in this case. Thus~:
\begin{equation*}
\begin{split}\mid\overline{U}(\tau^{\epsilon}_{1})-\overline{U}^{\epsilon}(\tau^{\epsilon}_{1})\mid &= \mid \sup_{u \leq \tau^{\epsilon}_{1}}\sum_{i=1}^DN^i(u) - \beta(\rho) \Lambda^i(u) - \sup_{u \leq \tau^{\epsilon}_{1}}\sum_{i=1}^DN^{i,\epsilon}(u) - \beta(\rho) \Lambda^{i,\epsilon}(u)\mid \\&= \begin{cases} \beta(\rho)\sum_{i=1}^D\mid\Lambda^i(\tau_1) - \Lambda^{i,\epsilon}(\tau^{\epsilon}_1)\mid \quad\quad~\text{if}~ \widetilde{U}^{\epsilon}(\tau^{\epsilon}_1) = 0\\ \sum_{i=1}^DN^i(\tau_1) - \beta(\rho) \Lambda^i(\tau_1)\quad\text{else}\end{cases}\\ &\leq \begin{cases} \beta(\rho)\sum_{i=1}^D\mid\Lambda^i(\tau_1) - \Lambda^{i,\epsilon}(\tau^{\epsilon}_1)\mid \quad\quad~\text{if}~ \widetilde{U}^{\epsilon}(\tau^{\epsilon}_1) = 0\\ \beta(\rho)\sum_{i=1}^D\Lambda^{i,\epsilon}(\tau^{\epsilon}_1) - \Lambda^i(\tau_1)~\text{else}\end{cases}\end{split}
\end{equation*}
 \end{mycase}

Consequently, we have the following bounds for $k=1$~: \begin{equation}
    \begin{split}\mid\overline{U}(\tau^{\epsilon}_{1})-\overline{U}^{\epsilon}(\tau^{\epsilon}_{1})\mid &\leq \beta(\rho)\sum_{i=1}^D\mid\Lambda^{i,\epsilon}(\tau^{\epsilon}_1) - \Lambda^i(\tau_1)\mid +\mid\Lambda^{i,\epsilon}(\tau^{\epsilon}_1) - \Lambda^i(\tau^{\epsilon}_1)\mid %\\&=\mathcal{L}^{\epsilon}_1
    \end{split}
\end{equation}   
\begin{equation}
    \begin{split}\mid\overline{U}(\tau_{1})-\overline{U}^{\epsilon}(\tau_{1})\mid %&\leq \sum_{i=1}^DN^i(\tau_1)
    &= \sum_{i=1}^DN^i(\tau_1) -  N^{i,\epsilon}(\tau_1)%\\&=\mathcal{M}^{\epsilon}_1
    \end{split}
\end{equation}
%Ainsi, comme $t\mapsto\sum_{i=1}^D\Lambda^i(t) - \Lambda^{i,\epsilon}(t)$ est continue sur $[0,T]$, alors~: $$\lim_{\epsilon \rightarrow 0^+} \widetilde{U}^{\epsilon}_{\tau_1} = \widetilde{U}_{\tau_1},\quad a.s.$$\\
  Let $k \geq 1$, such that $0\leq \tau_k \leq T$. Assume that: \begin{equation}
      \begin{split}
          \mid\overline{U}(\tau^{\epsilon}_{k})-\overline{U}^{\epsilon}(\tau^{\epsilon}_{k})\mid &\leq \beta(\rho)\sum_{j \leq k}\sum_{i=1}^D\mid\Lambda^{i,\epsilon}(\tau^{\epsilon}_j) - \Lambda^i(\tau_j)\mid +\mid\Lambda^{i,\epsilon}(\tau^{\epsilon}_j) - \Lambda^i(\tau^{\epsilon}_j)\mid \\&=\mathcal{L}^{\epsilon}_k 
          \label{induction3}
      \end{split}
\end{equation}  
\begin{equation}
      \begin{split}\mid\overline{U}(\tau_{k})-\overline{U}^{\epsilon}(\tau_{k})\mid &\leq \sum_{j \leq k}\sum_{i=1}^D N^i(\tau_j) -  N^{i,\epsilon}(\tau_j)+\mathcal{L}^{\epsilon}_{k-1}\\&=\mathcal{M}^{\epsilon}_k \end{split}
      \label{induction4}
\end{equation} 
where $\mathcal{L}^{\epsilon}_0 = 0$.\\
We will now demonstrate the validity of these bounds for $k+1$. The proof follows a similar approach to what we have previously seen for the case of $k=1$. We can categorize the analysis into two scenarios, depending on whether $U$ attains a supremum at $\tau_{k+1}$ or not.
 \begin{mycase} 
 \case $\sup_{u \leq \tau_{k+1}}\sum_{i=1}^DN^i(u) - \beta(\rho) \Lambda^i(u) > \sum_{i=1}^DN^i(\tau_{k+1}) - \beta(\rho) \Lambda^i(\tau_{k+1})$.\\
 In this situation, the maximum value in $U$ is attained prior to the $(k+1)$-th jump. Therefore, there exists $k_0\leq k$ such that $\overline{U}(\tau_{k+1}) =U(\tau_{k_0})$. In case $\widetilde{U}^{\epsilon}(\tau^{\epsilon}_{k+1}) = 0$, we have that~:\\  \begin{equation*}
     \begin{split}
         \overline{U}^{\epsilon}(\tau^{\epsilon}_{k+1}) &\geq \sum_{i=1}^D N^{i,\epsilon}(\tau^{\epsilon}_{k_0}) - \beta(\rho) \Lambda^{i,\epsilon}(\tau^{\epsilon}_{k_0}) %\\&= \sum_{i=1}^DN^i(\tau_{k_0}) - \beta(\rho) \Lambda^i(\tau_{k_0}) + \beta(\rho) \Lambda^i(\tau_{k_0})- \beta(\rho) \Lambda^{i,\epsilon}(\tau^{\epsilon}_{k_0})
         \\&= \overline{U}(\tau^{\epsilon}_{k+1})+ \beta(\rho)\sum_{i=1}^D\Lambda^i(\tau_{k_0})-  \Lambda^{i,\epsilon}(\tau^{\epsilon}_{k_0})\end{split}
\end{equation*}
\begin{equation}\overline{U}^{\epsilon}(\tau^{\epsilon}_{k+1})-\overline{U}(\tau^{\epsilon}_{k+1}) \geq \beta(\rho)\sum_{i=1}^D\Lambda^{i,\epsilon}(\tau^{\epsilon}_{k_0})-\Lambda^i(\tau_{k_0})
\label{induction1}\end{equation}
 Moreover,\begin{equation*}
\begin{split}\overline{U}^{\epsilon}(\tau^{\epsilon}_{k+1}) &= \sum_{i=1}^D N^{i,\epsilon}(\tau^{\epsilon}_{k+1}) - \beta(\rho) \Lambda^{i,\epsilon}(\tau^{\epsilon}_{k+1}) \\&= \sum_{i=1}^D N^i(\tau_{k+1}) -\beta(\rho) \Lambda^i(\tau_{k+1})+\beta(\rho) \Lambda^i(\tau_{k+1})- \beta(\rho) \Lambda^{i,\epsilon}(\tau^{\epsilon}_{k+1})%\\ &\leq \overline{U}(\tau_{k+1}) + \beta(\rho)\sum_{i=1}^D \Lambda^i(\tau_{k+1}) -\Lambda^{i,\epsilon}(\tau^{\epsilon}_{k+1}) 
\\&\leq \overline{U}(\tau^{\epsilon}_{k+1}) + \beta(\rho)\sum_{i=1}^D \Lambda^i(\tau_{k+1}) -\Lambda^{i,\epsilon}(\tau^{\epsilon}_{k+1})\end{split}
\end{equation*}
i.e,\begin{equation}\overline{U}^{\epsilon}(\tau^{\epsilon}_{k+1}) - \overline{U}(\tau^{\epsilon}_{k+1})\leq \beta(\rho)\sum_{i=1}^D \Lambda^i(\tau_{k+1}) -\Lambda^{i,\epsilon}(\tau^{\epsilon}_{k+1})\label{induction2}\end{equation}
 Therefore, according to equations \ref{induction1} and \ref{induction2},
 \begin{equation*}
\begin{split}\mid\overline{U}(\tau^{\epsilon}_{k+1})-\overline{U}^{\epsilon}(\tau^{\epsilon}_{k+1})\mid &\leq \begin{cases} \beta(\rho)\sum_{i=1}^D\mid\Lambda^{i,\epsilon}(\tau^{\epsilon}_{k_0})-\Lambda^i(\tau_{k_0})\mid+\mid\Lambda^i(\tau_{k+1}) -\Lambda^{i,\epsilon}(\tau^{\epsilon}_{k+1})\mid~\text{if}~ \widetilde{U}^{\epsilon}(\tau^{\epsilon}_{k+1}) = 0\\ \mid \overline{U}(\tau^{\epsilon}_{k}) - \overline{U}^{\epsilon}(\tau^{\epsilon}_{k})\mid\quad\text{else}\end{cases}\\ &\leq \begin{cases} \mathcal{L}^{\epsilon}_{k+1}\quad~\text{if}~ \widetilde{U}^{\epsilon}(\tau^{\epsilon}_{k+1}) = 0\\ \mathcal{L}^{\epsilon}_k \quad\text{else}\end{cases}\\&\leq \mathcal{L}^{\epsilon}_{k+1}\end{split}
\end{equation*}
Furthermore, by making use of assumption \ref{induction3}, \begin{equation*}
    \begin{split}\mid\overline{U}(\tau_{k+1})-\overline{U}^{\epsilon}(\tau_{k+1})\mid &= \mid\overline{U}(\tau^{\epsilon}_{k})-\overline{U}^{\epsilon}(\tau^{\epsilon}_{k})\mid \\&\leq \mathcal{L}^{\epsilon}_k\\&\leq\mathcal{M}^{\epsilon}_{k+1}\end{split}
\end{equation*}
Thus, the induction has been established for the first case.%, i.e, \begin{equation*}
    %\begin{split}
    %\mid\overline{U}(\tau^{\epsilon}_{k+1})-\overline{U}^{\epsilon}(\tau^{\epsilon}_{k+1})\mid&\leq\beta(\rho)\sum_{j \leq k+1}\sum_{i =1 }^D\mid\Lambda^{i,\epsilon}(\tau^{\epsilon}_j) - \Lambda^i(\tau_j)\mid +\mid\Lambda^{i,\epsilon}(\tau^{\epsilon}_j) - \Lambda^i(\tau^{\epsilon}_j)\mid \\&\leq \mathcal{L}^{\epsilon}_{k+1}\end{split}
%\end{equation*}
%and \begin{equation*}
   % \begin{split}\mid\overline{U}(\tau_{k+1})-\overline{U}^{\epsilon}(\tau_{k+1})\mid &= \mid\overline{U}(\tau^{\epsilon}_{k})-\overline{U}^{\epsilon}(\tau^{\epsilon}_{k})\mid \\&\leq \mathcal{M}^{\epsilon}_{k+1}\end{split}
%\end{equation*}
\case $\sup_{0 \leq u \leq \tau_{k+1}}\sum_{i=1}^DN^i(u) - \beta(\rho) \Lambda^i(u) = \sum_{i=1}^DN^i(\tau_{k+1}) - \beta(\rho) \Lambda^i(\tau_{k+1})$. \\
 Suppose that $\widetilde{U}^{\epsilon}(\tau^{\epsilon}_{k+1}) > 0$, then there exists $k_0\leq k$ such that $\overline{U}^{\epsilon}(\tau^{\epsilon}_{k+1}) =U^{\epsilon}(\tau^{\epsilon}_{k_0})$ and~:\\  \begin{equation*}
\begin{split}\overline{U}(\tau^{\epsilon}_{k+1}) &\geq \sum_{i=1}^D N^i(\tau_{k_0}) - \beta(\rho) \Lambda^i(\tau_{k_0}) \\&= \sum_{i=1}^DN^{i,\epsilon}(\tau^{\epsilon}_{k_0}) - \beta(\rho) \Lambda^{i,\epsilon}(\tau^{\epsilon}_{k_0}) + \beta(\rho) \Lambda^{i,\epsilon}(\tau^{\epsilon}_{k_0})- \beta(\rho) \Lambda^i(\tau_{k_0})\\&= \overline{U}^{\epsilon}(\tau^{\epsilon}_{k+1})+ \beta(\rho)\sum_{i=1}^D  \Lambda^{i,\epsilon}(\tau^{\epsilon}_{k_0}) - \Lambda^i(\tau_{k_0})\end{split}\end{equation*}
 i.e, \begin{equation}\overline{U}(\tau^{\epsilon}_{k+1})-\overline{U}^{\epsilon}(\tau^{\epsilon}_{k+1}) \geq \beta(\rho)\sum_{i=1}^D\Lambda^i(\tau_{k_0}) - \Lambda^{i,\epsilon}(\tau^{\epsilon}_{k_0})\label{induction5}\end{equation}
 In addition,\begin{equation*}
\begin{split}\overline{U}(\tau^{\epsilon}_{k+1}) &= \sum_{i=1}^D N^i(\tau^{\epsilon}_{k+1}) - \beta(\rho) \Lambda^i(\tau^{\epsilon}_{k+1}) \\&= \sum_{i=1}^D N^{i,\epsilon}(\tau^{\epsilon}_{k+1}) -\beta(\rho) \Lambda^{i,\epsilon}(\tau^{\epsilon}_{k+1})+\beta(\rho) \Lambda^{i,\epsilon}(\tau^{\epsilon}_{k+1})- \beta(\rho) \Lambda^i(\tau^{\epsilon}_{k+1})\\ &\leq \overline{U}^{\epsilon}(\tau^{\epsilon}_{k+1}) + \beta(\rho)\sum_{i=1}^D \Lambda^{i,\epsilon}(\tau^{\epsilon}_{k+1}) -\Lambda^i(\tau^{\epsilon}_{k+1})\end{split}
\end{equation*}
i.e,\begin{equation}
    \overline{U}(\tau^{\epsilon}_{k+1}) - \overline{U}^{\epsilon}(\tau^{\epsilon}_{k+1}) \leq \beta(\rho)\sum_{i=1}^D \Lambda^{i,\epsilon}(\tau^{\epsilon}_{k+1}) -\Lambda^i(\tau^{\epsilon}_{k+1})\label{induction6}\end{equation}
 Therefore, based on equations \ref{induction5} and \ref{induction6},
 \begin{equation*}
\begin{split}\mid\overline{U}(\tau^{\epsilon}_{k+1})-\overline{U}^{\epsilon}(\tau^{\epsilon}_{k+1})\mid &= \begin{cases} \mid\sum_{i=1}^DN^i(\tau_{k+1}) - N^{i,\epsilon}(\tau^{\epsilon}_{k+1}) -\beta(\rho) \Lambda^i(\tau_{k+1}) +\beta(\rho) \Lambda^{i,\epsilon}(\tau^{\epsilon}_{k+1})\mid ~\text{if}~ \widetilde{U}^{\epsilon}(\tau^{\epsilon}_{k+1}) = 0\\ \beta(\rho)\sum_{i=1}^D \mid\Lambda^{i,\epsilon}(\tau^{\epsilon}_{k+1}) -\Lambda^i(\tau^{\epsilon}_{k+1})\mid+\mid\Lambda^i(\tau_{k_0}) - \Lambda^{i,\epsilon}(\tau^{\epsilon}_{k_0})\mid\quad\text{else}\end{cases}\\ &\leq \begin{cases} \beta(\rho)\sum_{i=1}^D\mid\Lambda^{i,\epsilon}(\tau^{\epsilon}_{k+1}) -  \Lambda^i(\tau_{k+1})\mid \quad\quad~\text{if}~ \widetilde{U}^{\epsilon}(\tau^{\epsilon}_{k+1}) = 0\\ \mathcal{L}^{\epsilon}_{k+1}\quad\text{else}\end{cases}\\&\leq\mathcal{L}^{\epsilon}_{k+1}\end{split}
\end{equation*}
Moreover, \begin{equation*}
\begin{split}\overline{U}(\tau_{k+1})-\overline{U}^{\epsilon}(\tau_{k+1}) &\leq \sum_{i=1}^DN^i(\tau_{k+1}) -\beta(\rho) \Lambda^i(\tau_{k+1}) -N^{i,\epsilon}(\tau_{k+1}) +\beta(\rho) \Lambda^{i,\epsilon}(\tau_{k+1})\\&\leq \sum_{i=1}^DN^i(\tau_{k+1}) -N^{i,\epsilon}(\tau_{k+1})\end{split}
\end{equation*}
and \begin{equation*}
\begin{split}\overline{U}^{\epsilon}(\tau_{k+1}) - \overline{U}(\tau_{k+1})&\leq \sum_{i=1}^DN^{i,\epsilon}(\tau^{\epsilon}_{k_0}) -\beta(\rho) \Lambda^{i,\epsilon}(\tau^{\epsilon}_{k_0}) -N^i(\tau^{\epsilon}_{k_0}) +\beta(\rho) \Lambda^i(\tau^{\epsilon}_{k_0})\\&\leq \beta(\rho)\sum_{i=1}^D  \Lambda^i(\tau^{\epsilon}_{k_0})- \Lambda^{i,\epsilon}(\tau^{\epsilon}_{k_0})\end{split}
\end{equation*}
Thus,
\begin{equation*}
    \begin{split}\mid\overline{U}(\tau_{k+1}) - \overline{U}^{\epsilon}(\tau_{k+1})\mid&\leq \sum_{i=1}^DN^i(\tau_{k+1}) -N^{i,\epsilon}(\tau_{k+1}) + \beta(\rho)\sum_{i=1}^D  \Lambda^i(\tau^{\epsilon}_{k_0})- \Lambda^{i,\epsilon}(\tau^{\epsilon}_{k_0})\\ &\leq \mathcal{M}^{\epsilon}_{k+1}
    \end{split}
    \end{equation*}
%As a result, \begin{equation*}
    %\begin{split}
     %   \mid\overline{U}(\tau^{\epsilon}_{k+1})-\overline{U}^{\epsilon}(\tau^{\epsilon}_{k+1})\mid&\leq\beta(\rho)\sum_{j \leq k+1}\sum_{i=1}^D\mid\Lambda^{i,\epsilon}(\tau^{\epsilon}_j) - \Lambda^i(\tau_j)\mid +\mid\Lambda^{i,\epsilon}(\tau^{\epsilon}_j) - \Lambda^i(\tau^{\epsilon}_j)\mid \\&= \mathcal{L}^{\epsilon}_{k+1}
       % \end{split}%mettre le résultat en une seule ligne
   % \end{equation*}
%and $$\mid\overline{U}(\tau_{k+1})-\overline{U}^{\epsilon}(\tau_{k+1})\mid \leq \mathcal{M}^{\epsilon}_{k+1}$$
which completes the recurrence.
\end{mycase}
We proceed to complete the proof by employing a convergence argument. Let $t \in [0, T]$ and $\tau_{k_t}$ be the largest arrival time of $\sum_{i=1}^D N^i(t)+ N^{i,\epsilon}(t)$ less than or equal to t. Therefore, we have,
\begin{equation*}
\begin{split}\mid \widetilde{U}^{\epsilon}(t) - \widetilde{U}(t)\mid &= \mid\overline{U}(\tau_{k_t})-\overline{U}^{\epsilon}(\tau_{k_t}) + \sum_{i=1}^DN^{i,\epsilon}(\tau_{k_t}) - N^i(\tau_{k_t}) -\beta(\rho) \Lambda^{i,\epsilon}(t) + \beta(\rho) \Lambda^i(t)\mid\\&\leq \mathcal{L}^{\epsilon}_{k_t} + \mathcal{M}^{\epsilon}_{k_t} +\sum_{i=1}^DN^{i,\epsilon}(\tau_{k_t}) - N^i(\tau_{k_t}) -\beta(\rho) \Lambda^{i,\epsilon}(t) + \beta(\rho) \Lambda^i(t) \end{split}
\end{equation*}%ici ajouter l'indicatrice

In addition, according to Lemma \ref{conv_transf1}, $$\lim_{\epsilon \rightarrow 0^+}\mathbb{E}_y\left(\mid\sum_{i=1}^DN^i(t)-N^{i,\epsilon}(t)\mid\mathbbm{1}_{A^{\epsilon,c}}\right) = \lim_{\epsilon \rightarrow 0^+}\mathbb{E}_y\left(\mid\sum_{i=1}^D\Lambda^i(t)-\Lambda^{i,\epsilon}(t)\mid\mathbbm{1}_{A^{\epsilon,c}}\right) = 0$$ Since $\sum_{i=1}^DN^i(t)<+\infty$ (i.e, $\mathbb{E}_y\left(card\left(\{k\in \mathbb{N};\tau_k\leq t\}\right)\right) < +\infty$), we conclude that $$\lim_{\epsilon \rightarrow 0^+}\mathbb{E}_y\left(\left(\mathcal{L}^{\epsilon}_{k_t} + \mathcal{M}^{\epsilon}_{k_t} +\sum_{i=1}^DN^{i,\epsilon}(\tau_{k_t}) - N^i(\tau_{k_t}) -\beta(\rho) \Lambda^{i,\epsilon}(t) + \beta(\rho) \Lambda^i(t)\right)\mathbbm{1}_{A^{\epsilon,c}}\right) = 0$$ and that$$\lim_{\epsilon \rightarrow 0^+} \mathbb{E}_y\left(\mid \widetilde{U}^{\epsilon}(t) - \widetilde{U}(t)\mid\mathbbm{1}_{A^{\epsilon,c}}\right) = 0$$
As stated in Remark \ref{ens_vide_conv}, $\lim_{\epsilon \rightarrow 0^+}\mathbb{P}\left(\bigcup_{i=1}^D A^{i,\epsilon}\right) = 0$. 
By utilizing the Cauchy Schwartz inequality, we get that $\lim_{\epsilon \rightarrow 0^+}\mathbb{E}_y\left(\mid\widetilde{U}^{\epsilon}(t)-\widetilde{U}(t)\mid\mathbbm{1}_{A^\epsilon}\right) = 0$. Hence,  
%la dsitinction se fait sur U et non sur les N
\begin{align*}
    \lim_{\epsilon \rightarrow 0^+}\mathbb{E}_y\left(\mid\widetilde{U}^{\epsilon}(t)-\widetilde{U}(t)\mid\right) &= \lim_{\epsilon \rightarrow 0^+}\mathbb{E}_y\left(\mid\widetilde{U}^{\epsilon}(t)-\widetilde{U}(t)\mid\mathbbm{1}_{A^\epsilon}\right)+\mathbb{E}_y\left(\mid\widetilde{U}^{\epsilon}(t)-\widetilde{U}(t)\mid\mathbbm{1}_{A^{\epsilon,c}}\right)\\&=0
\end{align*}
\end{proof}
We now introduce the functions $g_m$ and $h_m$ (resp. $\tilde{g}_m$ and $\tilde{h}_m$ for $\Tilde{\rho}=\frac{1}{\rho}$) solutions of the delayed differential equations define in \ref{dde_gm} and \ref{dde_hm}. This will allow us to compute the average delay of the disorder detection\footnote{Often referred to as the Average Run Length (ARL).} $\mathbb{E}\left(\sum_{i=1}^DN^i({\widetilde{T}_{\mathrm{C}}})\right)$ as a function of the performance/threshold criterion $m$ that defines the CUSUM statistic and to determine the Expected Detection Delay $\mathbb{E}^{\theta}\left[\left(\sum_{i=1}^DN^i(T)-N^i(\theta)\right)^{+} \mid \mathcal{F}_\theta\right]$.
\begin{prop}[El Karoui et al. \cite{ElKaroui2017}]
Let functions $g_m$ and $h_m$ the regular finite variations solutions of delayed differential equations (DDE), denoted $\operatorname{DDE}(\beta)$, restricted to the interval $[0, m]:$
\begin{equation}
 \beta g_m^{\prime}(x)=g_m(x)-g_m\left((x-1)^{+}\right)-1
 \label{dde_gm}
\end{equation}
with $g_m(0)=0$ (Cauchy problem),
\begin{equation}
  \beta h_m^{\prime}(x)=h_m(x+1)-h_m(x)+1
  \label{dde_hm}
\end{equation}
with $h_m(0)=0$ and $h_m^{\prime}(0)=0$ (Neumann problem).\\
The same properties hold true for the functions $\tilde{g}_m$ and $\tilde{h}_m$, solutions of the same system where $\beta$ is replaced by $\tilde{\beta}$, with $\tilde{\rho}=1 / \rho$, and $\tilde{\beta}=\beta(\tilde{\rho})=\beta(1 / \rho)=\beta(\rho) / \rho$. Then,
\begin{align}
&g_m(y)=\int_y^m W(z) \mathrm{d} z, \quad \text { with }~ g_m(0)=\int_0^m W(z) \mathrm{d} z \\
&h_m(x)=W(m-x) \frac{W(m)}{W^{\prime}(m)}-\int_0^{m-x} W(y) \mathrm{d} y, \quad h_m(m^-)=\frac{W(m)}{\beta W^{\prime}(m)}\\
&\tilde{g}_m(y)=\rho \int_y^m \rho^z W(z) \mathrm{d} z, \\
&\tilde{h}_m(x)=\widetilde{W}(m-x) \frac{\widetilde{W}(m)}{\widetilde{W}^{\prime}(m)}-\int_0^{m-x} \widetilde{W}(z) \mathrm{d} z, \quad \tilde{h}_m(m^-)=\frac{\widetilde{W}(m)}{\tilde{\beta} \widetilde{W}^{\prime}(m)}
\end{align}
where $$W(x)=\frac{1}{\beta} \sum_{k=0}^{\lfloor x\rfloor} \frac{(-1)^k}{k !}((x-k) / \beta)^k \exp ((x-k) / \beta) $$$$\int_0^x W(y) \mathrm{d} y = \sum_{k=0}^{\lfloor x\rfloor}\left(e^{(x-k) / \beta}\left(\sum_{i=0}^k \frac{(-1)^j}{j !}((x-k) / \beta)^j\right)-1\right)$$ if $\beta>1$ and $W(x)=\tilde{\rho} \tilde{\rho}^x \widetilde{W}(x)$ if $\beta<1$.
\end{prop}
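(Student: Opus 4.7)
The plan is to verify the four explicit formulas for $g_m$, $h_m$, $\widetilde{g}_m$, $\widetilde{h}_m$ by direct substitution, the whole argument resting on a single master identity for $W$:
\begin{equation}
\beta\, W'(x) = W(x) - W((x-1)^+), \quad x \in (0,m)\setminus\mathbb{N}, \qquad W(0) = 1/\beta,
\end{equation}
which integrates to $\beta W(x) = \int_{(x-1)^+}^{x} W(z)\,\mathrm{d}z + 1$. I would first prove this by the method of steps. On $(0,1)$ the sum defining $W$ reduces to the single term $W(x)=e^{x/\beta}/\beta$, and the identity collapses to an elementary ODE with $W(0)=1/\beta$. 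On each interval $(n,n+1)$ with $n\geq 1$, I would argue by induction on $n$: differentiate the explicit sum term by term using $\frac{\mathrm{d}}{\mathrm{d}x}\bigl(((x-k)/\beta)^k e^{(x-k)/\beta}\bigr) = \frac{1}{\beta}((x-k)/\beta)^k e^{(x-k)/\beta} + \frac{k}{\beta}((x-k)/\beta)^{k-1} e^{(x-k)/\beta}$, and observe that the last-added term (index $k=\lfloor x\rfloor$) in the sum for $W(x)$ is precisely what is needed to turn $\beta W'(x) - W(x)$ into $-W(x-1)$ after telescoping.

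Once the master identity is in hand, the formula $g_m(y) = \int_y^{m} W(z)\,\mathrm{d}z$ is almost immediate: since $g_m'(y) = -W(y)$ and $g_m(y) - g_m((y-1)^+) = -\int_{(y-1)^+}^{y} W(z)\,\mathrm{d}z$, the DDE $\beta g_m'(y) = g_m(y) - g_m((y-1)^+) - 1$ becomes exactly the integrated master identity; the value $g_m(0) = \int_0^{m} W$ is then read off directly from the formula.

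For $h_m(x) = W(m-x)\,\dfrac{W(m)}{W'(m)} - \int_0^{m-x} W(y)\,\mathrm{d}y$, I would compute $h_m'(x) = -W'(m-x)\cdot \frac{W(m)}{W'(m)} + W(m-x)$ and $h_m(x+1) - h_m(x) = [W(m-x-1) - W(m-x)]\cdot \frac{W(m)}{W'(m)} + \int_{m-x-1}^{m-x} W(y)\,\mathrm{d}y$, substitute into $\beta h_m'(x) = h_m(x+1) - h_m(x) + 1$, and apply the integrated master identity at $z = m-x$ to cancel the integral and the $+1$; what remains is the differential master identity at $m-x$, multiplied by the constant $W(m)/W'(m)$. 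The Neumann condition $h_m'(0) = 0$ is a direct check: $-W'(m)\cdot W(m)/W'(m) + W(m) = 0$. The tilde versions follow either by rerunning the argument with $\tilde{\beta}$ in place of $\beta$ (so $\widetilde{W}$ in place of $W$) or, equivalently, via the change-of-scale relation $\widetilde{W}(z) = \rho^{z+1} W(z)$, which transforms $\widetilde{g}_m(y) = \rho\int_y^{m} \rho^z W(z)\,\mathrm{d}z = \int_y^{m} \widetilde{W}(z)\,\mathrm{d}z$ and similarly for $\widetilde{h}_m$.

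The main obstacle is the inductive step for the master identity on each $(n,n+1)$: the alternating signs and binomial--factorial combinatorics of $\sum_{k=0}^{\lfloor x\rfloor}\frac{(-1)^k}{k!}((x-k)/\beta)^k e^{(x-k)/\beta}$ make term-by-term manipulation delicate, and one must track the jump of $W'$ at integer points induced by the jump of $W(\cdot-1)$ (continuity of $W$ itself at integers being a separate elementary check). Once this analytic identity is established, all four formulas reduce to routine bookkeeping; the DDE being linear with unit lag means every quantity of interest is an integral transform of the single scale function $W$, and the Neumann/Cauchy conditions merely fix the constants that arise.
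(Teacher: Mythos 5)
Your strategy---verifying the closed-form expressions by direct substitution into the two DDEs via a renewal identity for the scale function $W$---is reasonable and, unlike the paper (whose ``proof'' is only a pointer to Sections~4 and~6 of El Karoui et al.), would make the proposition self-contained. But the master identity on which everything rests is misstated, and the misstatement is not cosmetic. With your convention $W((x-1)^+)$ and $W(0)=1/\beta$, the identity $\beta W'(x)=W(x)-W((x-1)^+)$ is false on $(0,1)$: there $W(x)=e^{x/\beta}/\beta$ satisfies $\beta W'(x)=W(x)$, whereas your identity demands $\beta W'(x)=W(x)-1/\beta$. Indeed, solving the ODE you actually wrote down on $(0,1)$ with $W(0)=1/\beta$ gives the constant function $1/\beta$, not $e^{x/\beta}/\beta$, so the base case of your induction produces the wrong $W$. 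The correct differential form is $\beta W'(x)=W(x)-W(x-1)\mathbbm{1}_{\{x>1\}}$, i.e.\ one must extend $W$ by zero on the negative half-line rather than clamp the argument at $0$; only the integrated form $\beta W(x)=\int_{(x-1)^+}^{x}W(z)\,\mathrm{d}z+1$ survives the clamping, which is why your verification of $g_m$ goes through unharmed while the one for $h_m$ does not.

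The place where the error bites is exactly the part of the $h_m$ verification you leave implicit: the strip $x\in(m-1,m)$. There $x+1>m$, so the advance equation only makes sense with the convention $h_m\equiv 0$ on $[m,\infty)$ (equivalently $W\equiv 0$ on $\mathbb{R}_-$), which you never state, and the identity needed at $u=m-x\in(0,1)$ is precisely $\beta W'(u)=W(u)$; using your version instead leaves an uncancelled constant $\tfrac{1}{\beta}\cdot\tfrac{W(m)}{W'(m)}$. Two further points to tighten: (i) substitution shows the displayed formulas \emph{are} solutions but says nothing about why they are \emph{the} regular finite-variation solutions (for $g_m$ the method of steps gives uniqueness at once; for the advance equation defining $h_m$ this requires an argument, e.g.\ solving backward from the zero extension beyond $m$); and (ii) your own check in fact reveals that the stated condition $h_m(0)=0$ is not satisfied by the formula --- one gets $h_m(0)=W(m)^2/W'(m)-\int_0^m W$, which equals $1$ for $m<1$ --- so only $h_m'(0)=0$ and the vanishing beyond $m$ can be the data characterising $h_m$; this discrepancy in the proposition's statement should be flagged rather than silently passed over.
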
%lemmes à part theoreme
\begin{proof}
    Refer to sections 4 and 6 in the work of El Karoui et al. \cite{ElKaroui2017}.
\end{proof}
%\begin{theo}
%Let %temps en indicatrice
%$\widetilde{T}_{\mathrm{C}}$ the CUSUM stopping time defined by $\widetilde{T}_{\mathrm{C}}=\inf \left\{t \leq T: \widetilde{U}(t)>m\right\}$ for $\rho<1$ and $g_{m}$ the finite variation solution of the DDE defined in \ref{dde_gm} on the interval $[0, m]$ with $m\geq 0$. Assume that the intensities $\lambda^i$ of the processes $N^i$ are càdlàg, $\forall i \in \{1,\dots,D\}$. The constraint on the Average Run Length (ARL) in $\widetilde{T}_{\mathrm{C}}$ is equal to~:
%\begin{equation}
%\begin{split}
 %   \mathbb{E}_y\left(\sum_{i=1}^D N^i(\widetilde{T}_{\mathrm{C}})\right) &=\int_y^{m} \frac{1}{\beta} \sum_{k=0}^{\lfloor x\rfloor} \frac{(-1)^k}{k !}((x-k) / \beta)^k \exp ((x-k) / \beta) \mathrm{d}x 
  %  \end{split}
%\end{equation}
%with $\mathbb{E}_y = \mathbb{E}(.\mid \widetilde{U}_0 = y)$.
%\end{theo}

\begin{proof}[Proof of Theorem \ref{ARL_rho_inf_1}]%introduire \widetilde{U} au début
Let $\rho<1$, $m \geq 0$, $0 \leq t \leq T$, $U(t) = \sum_{i=1}^D N^i(t) - \beta(\rho) \Lambda^i(t)$, $\overline{U}(t) = \sup_{0\leq s\leq t}U(s)$, $\widetilde{U}(t)=\sup _{0 \leq s \leq t} U(s) - U(t)$ and $g_{m}$ the continuous solution of the equation \ref{dde_gm}. Since %$\widetilde{U}(t)= \sup _{0 \leq s \leq t} U(s) - U(t)$ 
 $N^i - \beta(\rho) \Lambda^i$ decreases between two event times of $N^i$, the process $\overline{U}$ changes only when $N^i$ jumps at time $t$ such that $\widetilde{U}(t)=0$. The jumps of $\widetilde{U}$ are therefore negative with sizes less than or equal to D. This means that, $\forall 0\leq t\leq T$, $$\overline{U}(t) - \overline{U}(t^-) = \mathbbm{1}_{\{\widetilde{U}(t)=0\}}\left(\sum_{i=1}^D N^i(t) - \beta(\rho) \Lambda^i(t) - \widetilde{U}(t^-) - \sum_{i=1}^DN^i(t^-) + \beta(\rho) \Lambda^i(t^-)\right)$$ Because $(\Lambda^i)_{1\leq i\leq D}$ are continuous, we have that $\sum_{i=1}^D\Lambda^i(t) = \sum_{i=1}^D\Lambda^i(t^-)$ and that $\overline{U}(t) - \overline{U}(t^-) = \mathbbm{1}_{\{\widetilde{U}(t)=0\}}\left(\sum_{i=1}^DN^i(t) -N^i(t^-) -\widetilde{U}(t^-)\right)$. Thus, $\overline{U}$ and $\widetilde{U}$ are solutions of the following stochastic differential equations~:
\begin{equation}
    \mathrm{d}\overline{U}(t)= \sum_{i=1}^D\mathbbm{1}_{\{\widetilde{U}(t)=0\}}\left(1 - \frac{\widetilde{U}(t^-)}{\sum_{\substack{j=1}}^{D}N^j(t)-N^j(t^-)}\right)\mathrm{d}N^i(t)
\end{equation}
\begin{equation}
\mathrm{d}\widetilde{U}(t) = - \sum_{i=1}^D\left(1 \wedge \frac{\widetilde{U}(t^-)}{\sum_{\substack{j=1}}^{D}N^j(t)-N^j(t^-)}\right)\mathrm{d}N^i(t) + \beta(\rho)\sum_{i=1}^D\mathrm{d}\Lambda^i(t)
\end{equation}
%\substack{0 \leq \tau^l_j \leq \tau^i_k \\ 0 \leq \tau^{l^{\prime}}_{j^{\prime}} \leq \tau^i_k \\\tau^l_j\neq \tau^{l^{\prime}}_{j^{\prime}}}
 The transformation described in \ref{def_transf1} is formulated in a manner that ensures the jumps of  $\widetilde{U}^{\epsilon}$ depend on one direction only when considering events that are elements of set $A^{\epsilon,c}$, i.e,  
\[
\mathrm{d}\widetilde{U}^{\epsilon}(t) = - \sum_{i=1}^D\left(1 \wedge \widetilde{U}^{\epsilon}(t^-)\right)\mathrm{d}N^{i,\epsilon}(t) + \beta(\rho)\sum_{i=1}^D\mathrm{d}\Lambda^{i,\epsilon}(t)
\]%expliquer pourquoi on introduit le nouveau N -ok

    It should be noted that, for clarity in notation, we have omitted the fact that the expectations evaluated before taking the limit should include $\mathbbm{1}_{A^{\epsilon,c}}$. %Nevertheless, the argument for the limit remains unchanged from the previous proposition.
    
Let $\mathrm{d}N^{i,m,\widetilde{U}^{\epsilon}}(t) = \mathbbm{1}_{[0,m]}\left(\widetilde{U}^{\epsilon}(t^-)\right)\mathrm{d}N^{i,\epsilon}(t),~\forall i \in \{1,\dots,D\}$. Applying the itô formula to $G^{m,\epsilon}(t) := g_{m}(\widetilde{U}^{\epsilon}(t)) - g_{m}(\widetilde{U}(0)) + \sum_{i=1}^DN^{i,m,\widetilde{U}^{\epsilon}}(t)$ yields the following decomposition~:
 \begin{equation*}
    \begin{split}
    \mathrm{d}G^{m,\epsilon}(t) &= \mathrm{d}g_{m}(\widetilde{U}^{\epsilon}(t)) + \sum_{i=1}^D\mathrm{d}N^{i,m,\widetilde{U}^{\epsilon}}(t)\\ &=\beta(\rho)g^{'}_{m}(\widetilde{U}^{\epsilon}(t^-))\sum_{i=1}^D\mathrm{d}\Lambda^{i,\epsilon}(t) + g_{m}(\widetilde{U}^{\epsilon}(t))-g_{m}(\widetilde{U}^{\epsilon}(t^-))+ \sum_{i=1}^D\mathrm{d}N^{i,m,\widetilde{U}^{\epsilon}}(t)\\
    &=\beta(\rho)g^{'}_{m}(\widetilde{U}^{\epsilon}(t^-))\sum_{i=1}^D\mathrm{d}\Lambda^{i,\epsilon}(t) \\&\quad- \sum_{i=1}^D\left(g_{m}(\widetilde{U}^{\epsilon}(t^-))-g_{m}((\widetilde{U}^{\epsilon}(t^-)-1)^+)\right)\mathrm{d}N^{i,m,\widetilde{U}^{\epsilon}}(t)+ \sum_{i=1}^D\mathrm{d}N^{i,m,\widetilde{U}^{\epsilon}}(t)
    \end{split}
\end{equation*}
 Since $\beta g_{m}^{\prime}(x)=g_{m}(x)-g_{m}\left((x-1)^{+}\right)-1$ on $[0,m]$ and $g_{m}(x) = g^{\prime}_{m}(x) = 0$ if $x>m$ or $x<0$~:
 \begin{equation*}
    \begin{split}
    \mathrm{d}G^{m,\epsilon}(t) &= \beta(\rho)g_{m}^{\prime}(\widetilde{U}^{\epsilon}(t^-))\sum_{i=1}^D\mathrm{d}\left(\Lambda^{i,\epsilon}(t)-N^{i,\epsilon}(t)\right)
    \end{split}
\end{equation*}
Thus, $G^{m,\epsilon}$ is a $\mathcal{F}$-martingale. According to Doob's stopping theorem~:
\begin{equation}
\begin{split}
    \mathbb{E}_y\left(g_{m}(\widetilde{U}^{\epsilon}(\widetilde{T}_{\mathrm{C}})) - g_{m}(\widetilde{U}(0)) + \sum_{i=1}^D N^{i,m,\widetilde{U}^{\epsilon}}(\widetilde{T}_{\mathrm{C}}^{\epsilon})\right) &= \mathbb{E}_y\left(g_{m}(\widetilde{U}^{\epsilon}(0))\right) - \mathbb{E}_y\left(g_{m}(\widetilde{U}(0))\right)
    %\\ &= \mathbb{E}_y\left(\sum_{i=1}^D N^{i,m,\widetilde{U}^{\epsilon}}(\widetilde{T}_{\mathrm{C}}^{\epsilon})\right) - \mathbb{E}_y\left(g_{m}(\widetilde{U}(0))\right)+\mathbb{E}_y\left(g_{m}(\widetilde{U}^{\epsilon}(\widetilde{T}_{\mathrm{C}}^{\epsilon}))\right)
    \\&=0
\end{split}
\label{dem_arl_equality_decrease}
\end{equation}
where $\widetilde{T}_{\mathrm{C}}$ is the following $\mathcal{F}$-stopping time~:%dire que ça a été définit avant
\begin{equation*}
\widetilde{T}_{\mathrm{C}} = \inf \left\{t \leq T: \widetilde{U}(t)>m\right\}
\end{equation*} 
The stopping time $\widetilde{T}_{\mathrm{C}}$ ensures that $\widetilde{U}^{\epsilon}(\widetilde{T}_{\mathrm{C}})$ converges to $\widetilde{U}(\widetilde{T}_{\mathrm{C}})$ such that $\mathbbm{1}_{[0,m]}\left(\widetilde{U}^{\epsilon}%mieux expliquer
({\widetilde{T}^{-}_{\mathrm{C}}})\right)$ does not nullify. Indeed, by employing a straightforward induction argument, we can demonstrate that $\widetilde{U}^{\epsilon}\leq \widetilde{U}$ holds over the intervals $\left[\tau^{\epsilon}_k,\tau_{k+1}\right[$ for all $k \in \left\{1,\dots,\sum_{i=1}^D N^i(T)\right\}$. Leveraging the fact that there does not exist any $k\in\mathbb{N}$ and $\omega\in\Omega$ such that $\widetilde{T}_C(\omega) = \tau_k(\omega)$, as a jump in $\sum_{i=1}^D N^i$ corresponds to a decrease in $\widetilde{U}$, we get that $\lim_{\epsilon\rightarrow 0^+}\mathbb{P}\left(B^{\epsilon}\right) = 0$, where $B^{\epsilon}:=\left\{\omega\in\Omega: \exists k \in \mathbb{N}/ \widetilde{T}_C(\omega)\in]\tau_k(\omega),\tau^{\epsilon}_k(\omega)]\right\}$. Hence, for sufficiently small $\epsilon$, it follows that $\lim_{\epsilon\rightarrow 0^+}\mathbb{P}\left(\tau^{\epsilon}_k\leq \widetilde{T}_C <\tau_{k+1}\right)=1$. Drawing upon the results established in Lemma \ref{y_borne}, we have $\lim_{\epsilon \rightarrow 0^+}\widetilde{U}^{\epsilon}(\widetilde{T}^-_{\mathrm{C}}) = {\widetilde{U}(\widetilde{T}^-_{\mathrm{C}})} = m,~~\text{under}~~\mathcal{L}^1$. Consequently, $\lim_{\epsilon \rightarrow 0^+}\mathbb{E}_y\left(\sum_{i=1}^D N^{i,m,\widetilde{U}^{\epsilon}}(\widetilde{T}_{\mathrm{C}})\right) =\lim_{\epsilon \rightarrow 0^+}\mathbb{E}_y\left(\mathbbm{1}_{B^{\epsilon,c}}\sum_{i=1}^D N^{i,m,\widetilde{U}^{\epsilon}}(\widetilde{T}_{\mathrm{C}})\right)= \mathbb{E}_y\left(\sum_{i=1}^DN^i(\widetilde{T}_{\mathrm{C}})\right)$.

Out of continuity of $g_{m}$ and using the theorem of dominated convergence, we have $$\lim_{\epsilon \rightarrow 0^+}\mathbb{E}_y\left(g_{m}\left(\widetilde{U}^{\epsilon}(\widetilde{T}_{\mathrm{C}})\right)\right) = \mathbb{E}_y\left(g_{m}(\lim_{\epsilon \rightarrow 0^+}\widetilde{U}^{\epsilon}(\widetilde{T}_{\mathrm{C}}))\right) = g_{m}(m) = 0$$ Taking $\epsilon$ to 0 in equation \ref{dem_arl_equality_decrease}, yields the following result~:%faire la ref à l'equation d'en haut
$$\mathbb{E}_y\left(\sum_{i=1}^D N^i(\widetilde{T}_{\mathrm{C}})\right) = g_{m}(y),\quad \forall m\geq 0$$
\end{proof}
In the previous proof, we introduced the processes $(N^{i,\epsilon})_{1\leq i\leq D}$ because we looked for a transformation $\widetilde{U}^{\epsilon}$ of the process $\widetilde{U}$ with non-simultaneous jumps and that is inferior to the original reflected process $\widetilde{U}(t)$ outside of the intervals $\left[\tau_k,\tau^{\epsilon}_k \right[$ in order to converge leftwards to $\widetilde{U}$ at $\widetilde{T}_{\mathrm{C}}$ with $k \in \left\{1,\dots,\sum_{i=1}^D N^i(T)\right\}$. This allowed us to apply the results cited in El Karoui et al. \cite{ElKaroui2017} and Poor and Hadjiliadis \cite{VPoor2009} to the $\sum_{i=1}^D N^{i,\epsilon}$ and $\widetilde{U}^{\epsilon}$ processes and then go to the limit to obtain the desired results on the original processes. \\In what follows, we propose to introduce new processes $\widetilde{N}^{i,\epsilon}$ and $\hat{U}^{\epsilon}$ to adapt this line of reasoning to the case where $\rho>1$.%N chapeau
\begin{defi}
Let $(\widetilde{N}^{i,\epsilon})_{1\leq i \leq D}$ be the counting process whose arrival times are~: $(\widetilde{\tau}^{i,\epsilon}_k)_{k \geq 1} = \left(\tau^i_k-\frac{i}{D}\epsilon\underset{1 \leq l,l^{\prime} \leq D}{\inf}~~\underset{0 \leq \tau^l_j < \tau^{l^{\prime}}_{j^{\prime}} \leq \tau^i_k}{\inf}\mid \tau^l_j -\tau^{l^{\prime}}_{j^{\prime}}\mid\right)_{k \geq 1} = \left(\tau^i_k-\epsilon^i_k\right)_{k \geq 1}$ where $\tau^i_0 = 0$, $(\tau^i_k)_{k \geq 1}$ are the event times of the process $N^i$, $\forall i \in \{1,\dots,D\}$.  In this context, we define $\widetilde{N}^{i,\epsilon}$ as the counting process whose arrival times are $\left(\widetilde{\tau}^{i,\epsilon}_k\right)_{k \geq 1}$, i.e, $\widetilde{N}^{i,\epsilon}(t) = \sum_{k \geq 1}\mathbbm{1}_{\{\widetilde{\tau}^{i,\epsilon}_k \leq t\}}$, $\forall 0\leq t\leq T$. We also define $\Tilde{\mathcal{F}}^{\epsilon} = \left(\Tilde{\mathcal{F}}^{\epsilon}_t\right)_{t\geq 0}$ as the natural filtration associated to $\sum_{i=1}^D\widetilde{N}^{i,\epsilon}$ with $\Tilde{\mathcal{F}}^{\epsilon}_t = \sigma(\widetilde{N}^{i,\epsilon}(s), 1 \leq i \leq D; s \leq t)$. 
\label{def_transf2}
\end{defi}
\begin{rque}
It is worth noting that $\mathcal{F}^{\epsilon}_t \subset \mathcal{F}_{t-\epsilon T}$ while $\Tilde{\mathcal{F}}^{\epsilon}_t \subset \mathcal{F}_{t+\epsilon T}$, for all $0\leq t \leq T$. Since $\forall 1\leq i\leq D$, $N^i$ is cadlàg, we have $\lim_{\epsilon \rightarrow 0^+} \widetilde{N}^{i,\epsilon}(t) = \lim_{\epsilon \rightarrow 0^+} N^i(t+\epsilon_i) = N^i(t^+) = N^i(t) ,~a.s.$ Therefore, we find almost-sure convergence for the process $\widetilde{N}^{i,\epsilon}$ and we limit ourselves to convergence in $\mathrm{L}^1$ for the process $N^{i,\epsilon}$.
\label{filtrations}
\end{rque}
\begin{rque}
    This means that $\widetilde{N}^{i,\epsilon}$ is $\Tilde{\mathcal{F}}^{\epsilon}$-adapted and that, $\forall 0\leq t \leq T$,
\[
\begin{split}
\widetilde{N}^{i,\epsilon}(t) &= \sum_{k \geq 1}\mathbbm{1}_{\{\Tilde{\tau}^{i,\epsilon}_k \leq t\}} \\ &\geq N^i(t + \frac{1}{D}\epsilon\underset{1\leq l,l^{\prime} \leq D}{\inf}~~\underset{0 \leq \tau^l_j < \tau^{l^{\prime}}_{j^{\prime}} \leq T}{\inf}\mid \tau^l_j -\tau^{l^{\prime}}_{j^{\prime}}\mid ) \\ &\geq N^i(t),\quad \text{a.s}\end{split}\]
Note that unlike $N^i-N^{i,\epsilon}$ (see \ref{rque_de_transf1}), $N^i-\widetilde{N}^{i,\epsilon}$ is an $\mathcal{F}^{\epsilon}$-sub-martingale. According to the Doob-Meyer decomposition, it can be inferred that $\Lambda^i-\widetilde{\Lambda}^{i,\epsilon}$ is a predictable process that is decreasing and starts from zero. Consequently, $\forall 0\leq t \leq T$,
$$\Lambda^i(t) \leq \widetilde{\Lambda}^{i,\epsilon}(t),\quad \text{a.s}$$
\end{rque}
\begin{nota}
 In the subsequent discussion, we shall refer to the process $\inf_{0\leq s\leq t}U(s)$ (resp. $\inf_{0\leq s\leq t}U^{\epsilon}(s)$) as $\underline{U}(t)$ (resp. $\underline{U}^{\epsilon}(t)$) where, in the case $\rho>1$,  $U^{\epsilon}(t):=\sum_{i=1}^D\widetilde{N}^{i,\epsilon}(t)-\beta(\rho) \widetilde{\Lambda}^{i,\epsilon}(t)$, $\forall 0\leq t \leq T$. We will also use the notation $\hat{U}^{\epsilon}$ to represent  $\hat{U}^{\epsilon}(t) := \sum_{i=1}^D\widetilde{N}^{i,\epsilon}(t)+\beta(\rho) \Tilde{\Lambda}^{i,\epsilon}(t) - \inf_{0 \leq s \leq t} \left(\sum_{i=1}^D\widetilde{N}^{i,\epsilon}(s)-\beta(\rho) \Tilde{\Lambda}^{i,\epsilon}(s)\right)$.
\end{nota}
Figures \ref{N_V_epsilon1} and \ref{N_V_epsilon2} showcase a comparative simulation of the processes $\sum_{i=1}^D\widetilde{N}^{i,\epsilon}$ and $\hat{U}^{\epsilon}$  with respect to $\sum_{i=1}^DN^i$ and $\hat{U}$.
\begin{figure}[H]

        \centering
    \begin{subfigure}[b]{0.49\textwidth}  
        \centering
        \includegraphics[width=\textwidth]{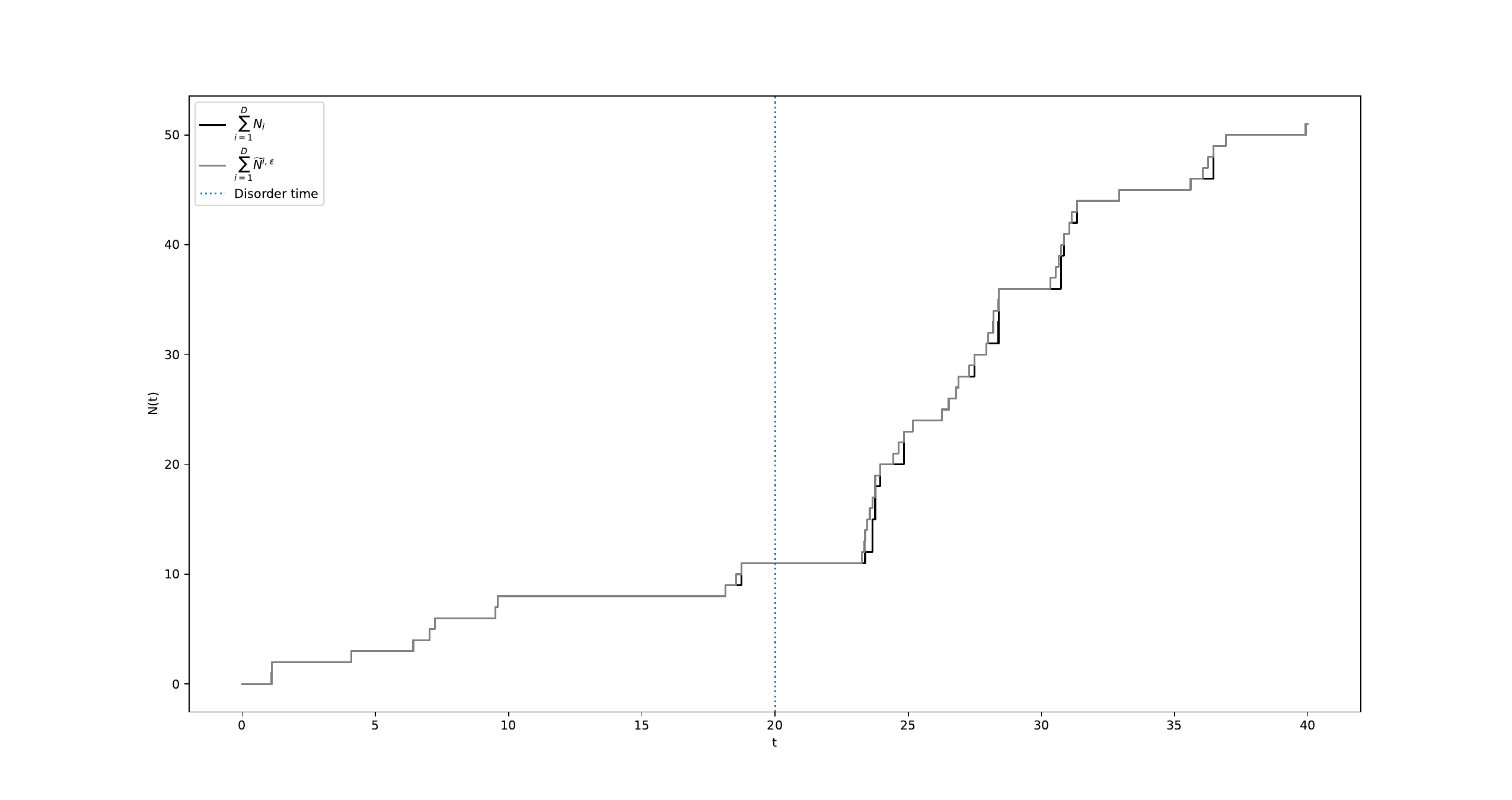}
        \caption{Illustration of $\widetilde{N}^{i,\epsilon}(t)$}
        \label{N_V_epsilon1}
    \end{subfigure}
    \begin{subfigure}[b]{0.49\textwidth}
    
        \centering
        \includegraphics[width=\textwidth]{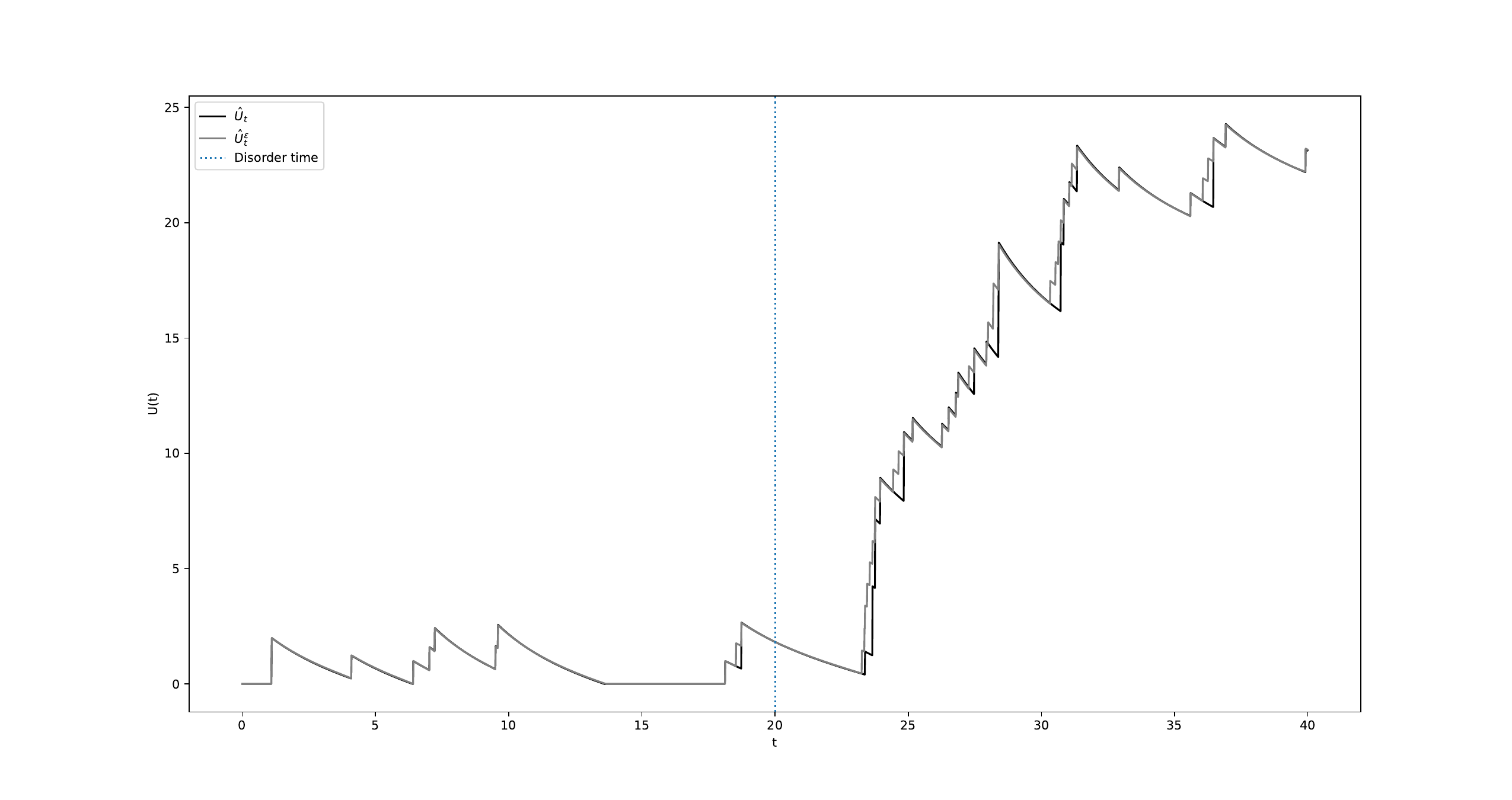}
        \caption{Illustration of $\hat{U}^{\epsilon}(t)$}
        \label{N_V_epsilon2}
    \end{subfigure}  
\end{figure}
\begin{lemma}
    Let $t \in [0, T]$ and $\hat{U}(t)$ (resp. $\hat{U}^{\epsilon}(t)$) the CUSUM process of $\sum_{i=1}^D N^i(t)$ (resp. $\sum_{i=1}^D \widetilde{N}^{i,\epsilon}(t)$) for $\rho >1$. Thus, $$\lim_{\epsilon \rightarrow 0^+} \mid \hat{U}^{\epsilon}(t) - \hat{U}(t)\mid = 0,\quad a.s.$$
    \label{conv_increase}
\end{lemma}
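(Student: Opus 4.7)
The plan is to leverage the key difference between Definitions \ref{def_transf1} and \ref{def_transf2}: by Remark \ref{filtrations}, because each $N^i$ is c\`adl\`ag and $\widetilde{\tau}^{i,\epsilon}_k = \tau^i_k - \epsilon^i_k$ converges to $\tau^i_k$ from the left, we already have $\widetilde{N}^{i,\epsilon}(s) \to N^i(s)$ almost surely at every fixed $s$. Together with the continuity of the compensators (so that $\widetilde{\Lambda}^{i,\epsilon}(s) \to \Lambda^i(s)$ a.s.), this yields $U^{\epsilon}(s) \to U(s)$ a.s.\ at any fixed $s$. This pointwise a.s.\ convergence is strictly stronger than the $L^1$ bound established in Lemma \ref{y_borne}, and suggests an approach that bypasses the delicate induction on jump times used there.

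The main work is to transfer this pointwise convergence to the reflected process $\hat{U}^{\epsilon}(t) = U^{\epsilon}(t) - \underline{U}^{\epsilon}(t)$ with $\underline{U}^{\epsilon}(t) := \inf_{0 \leq s \leq t} U^{\epsilon}(s)$. For this I would use the structural fact that since $\beta(\rho) > 0$ and each $\Lambda^i$ is continuous and non-decreasing, $U$ is strictly decreasing between successive arrival times of $\sum_{i=1}^D N^i$ and jumps upward at each arrival, so its running infimum is attained at finitely many comparison points:
\[
\underline{U}(t) = \min\Bigl(U(0),\,\min_{1 \leq j \leq K} U(\tau_j^-),\,U(t)\Bigr),
\]
where $\tau_1 < \cdots < \tau_K$ are the distinct jump times of $\sum_{i=1}^D N^i$ in $[0,t]$. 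The same reduction applies to $U^{\epsilon}$ along its own jump times $\widetilde{\tau}^{\epsilon}_j \to \tau_j$. Evaluating at these comparison points, the pointwise a.s.\ convergence (applied at each $\tau_j^-$ and at $t$, which is a.s.\ not a jump time) yields $U^{\epsilon}(\widetilde{\tau}^{\epsilon,-}_j) \to U(\tau_j^-)$ a.s., and hence $\underline{U}^{\epsilon}(t) \to \underline{U}(t)$ a.s.\ by continuity of the minimum over finitely many arguments. Combining with $U^{\epsilon}(t) \to U(t)$ a.s.\ gives the desired $\hat{U}^{\epsilon}(t) \to \hat{U}(t)$ a.s.

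The main obstacle I expect is the combinatorial bookkeeping at simultaneous jumps, precisely the regime motivating the construction: when $\tau^i_k = \tau^{i'}_{k'}$, the distinct shifts $\epsilon^i_k$ across $i$ may reorder the arrivals of $\sum_{i=1}^D \widetilde{N}^{i,\epsilon}$, so one cannot naively identify the ordered list of jumps of $U^{\epsilon}$ with that of $U$. The $(i/D)\epsilon$-scaling built into Definition \ref{def_transf2} is calibrated to resolve these ties, and the argument of Remark \ref{ens_vide_conv} transposed to $\widetilde{N}^{i,\epsilon}$ shows that the probability of the ``bad'' event where the ordering breaks down vanishes as $\epsilon \to 0^+$; on its complement, for $\epsilon$ small enough each $\tau_j^-$ is matched with a well-defined $\widetilde{\tau}^{\epsilon,-}_j$, and the finite-minimum argument above applies cleanly.
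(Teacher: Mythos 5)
Your proof is correct, but it takes a genuinely different route from the paper. The paper disposes of Lemma \ref{conv_increase} in one line, declaring it ``similar to Lemma \ref{y_borne}'' with $(\widetilde{U},\widetilde{U}^{\epsilon})$ and $(\hat{U},\hat{U}^{\epsilon})$ playing symmetric roles --- i.e.\ it implicitly re-runs the induction on jump times that bounds the gap between the running extrema by sums of $\mid\Lambda^i-\widetilde{\Lambda}^{i,\epsilon}\mid$ and $\mid N^i-\widetilde{N}^{i,\epsilon}\mid$ terms and then passes to the limit. You instead make central the observation of Remark \ref{filtrations} (left-shifted arrivals plus right-continuity give \emph{pointwise a.s.} convergence of $\widetilde{N}^{i,\epsilon}$), and transfer it to the reflected process via the structural fact that $U$ decreases between upward jumps, so $\underline{U}(t)$ is a minimum over finitely many comparison points $U(\tau_j^-)$ and $U(t)$. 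This is arguably the better-matched argument: a literal symmetric transposition of Lemma \ref{y_borne} only yields $L^1$ convergence, whereas the lemma as stated claims a.s.\ convergence, and your route delivers exactly that. Two small points to tighten. First, your assertion that $\widetilde{\Lambda}^{i,\epsilon}(s)\to\Lambda^i(s)$ a.s.\ ``by continuity'' needs the identification of $\widetilde{\Lambda}^{i,\epsilon}$ as (essentially) a time-shifted compensator; the paper is equally silent on this, relying only on the Doob--Meyer monotonicity $\Lambda^i\leq\widetilde{\Lambda}^{i,\epsilon}$, so you should either invoke the analogue of Lemma \ref{conv_transf1} or spell out the shift. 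Second, for an \emph{a.s.} limit the statement ``the probability of the bad reordering event vanishes'' is the in-probability formulation; what you actually need (and what the $(i/D)\epsilon\cdot\inf\mid\tau^l_j-\tau^{l'}_{j'}\mid$ scaling in Definition \ref{def_transf2} provides) is that for a.e.\ fixed $\omega$, with finitely many jumps and hence a strictly positive minimum gap, the ordering is preserved for all $\epsilon$ below an $\omega$-dependent threshold, after which your finite-minimum argument applies pointwise. With those two clarifications the proof is complete.
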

\begin{proof}
The proof is similar to that of the Lemma \ref{y_borne}. Indeed, ($\widetilde{U}(t)$, $\widetilde{U}^{\epsilon}(t)$) and ($\hat{U}^{\epsilon}(t)$,$\hat{U}(t)$) play symmetric roles.
\end{proof}
%\begin{theo}
%Let $\hat{T}_{\mathrm{C}}$ the CUSUM stopping time defined by $\hat{T}_{\mathrm{C}}=\inf \left\{t \leq T: \hat{U}(t)=\sup _{\theta<t} \ell_{t, \theta}>m\right\}$ for $\rho>1$ and $h_{m}$ the finite variation solution of the DDE defined in \ref{dde_hm} on the interval $[0, m]$ with $m\geq 0$. Assume that the intensities $\lambda^i$ of the processes $N^i$ are càdlàg, $\forall i \in \{1,\dots,D\}$. the constraint on l'Average Run Delay (ARL) in $\hat{T}_{\mathrm{C}}$ is equal to~:
%\begin{equation}
%\begin{split}
%\mathbb{E}_v\left(\sum_{i=1}^D N^i(\hat{T}_{\mathrm{C}})\right) &= W(m-v)\frac{W(m)}{W^{\prime}(m)}-\int_0^{m-v} W(y) \mathrm{d} y,\\ \quad  \mathbb{E}_{m^-}\left(\sum_{i=1}^D N^i(\hat{T}_{\mathrm{C}})\right) &= \frac{W(m)}{\beta W^{\prime}(m)}
 %   \end{split}
%\end{equation}
%with $\mathbb{E}_v = \mathbb{E}(.\mid \hat{U}_0 = v)$, $W(x)=\frac{1}{\beta} \sum_{k=0}^{\lfloor x\rfloor} \frac{(-1)^k}{k !}((x-k) / \beta)^k \exp ((x-k) / \beta)$ and $\int_0^x W(y) \mathrm{d} y = \sum_{k=0}^{\lfloor x\rfloor}\left(e^{(x-k) / \beta}\left(\sum_{i=0}^k \frac{(-1)^j}{j !}((x-k) / \beta)^j\right)-1\right)$.
%\end{theo}
\begin{proof}[Proof of Theorem \ref{ARL_rho_inf_2}]
Let $\rho>1$, $0 \leq t \leq T$, $U(t) = \sum_{i=1}^DN^i(t) - \beta(\rho) \Lambda^i(t)$ and $h_{m}$ a continuous solution of the equation \ref{dde_hm}.   
We adopt the same notations as in the prior part and introduce the process $\hat{U}^{\epsilon}(t)$. Note that $\hat{U}^{\epsilon}(t)$ only increases if a jump takes place and that $\underline{U}^{\epsilon}(t) = \inf_{0 \leq s \leq t}\sum_{i=1}^D\widetilde{N}^{i,\epsilon}(s) - \beta(\rho) \Tilde{\Lambda}^{i,\epsilon}(s)$ is continuous. Applying the Itô formula to $\left(\hat{U}^{\epsilon}(t)\right)^{\mu+1}$, we get~:
\begin{align*}
    \mathrm{d}\left(\hat{U}^{\epsilon}(t)\right)^{\mu+1}=\sum_{i=1}^D&\left(\left(\hat{U}^{\epsilon}(t^-)+1\right)^{\mu+1}-\left(\hat{U}^{\epsilon}(t^-)\right)^{\mu+1}\right) \mathrm{d} \widetilde{N}^{i,\epsilon}(t)\\&-(1+\mu) \left(\hat{U}^{\epsilon}(t^-)\right)^{\mu}\left(\beta(\rho) \mathrm{d} \sum_{i=1}^D\Tilde{\Lambda}^{i,\epsilon}(t) - \mathrm{d}\underline{U}^{\epsilon}(t)\right)\end{align*} Since $\underline{U}^{\epsilon}$ only decreases when $\hat{U}^{\epsilon}(t) = \hat{U}^{\epsilon}(t^-) =0$, then $$\left(\hat{U}^{\epsilon}(t^-)\right)^{\mu}\left(\beta(\rho)\sum_{i=1}^D \mathrm{d} \Tilde{\Lambda}^{i,\epsilon}(t)-\mathrm{d} \underline{U}^{\epsilon}(t)\right) = \left(\hat{U}^{\epsilon}(t^-)\right)^{\mu}\beta(\rho) \mathrm{d} \sum_{i=1}^D\Tilde{\Lambda}^{i,\epsilon}(t)$$ Going to the limit $\mu \rightarrow 0$, we get~:$$\mathrm{d} \hat{U}^{\epsilon}(t)=\sum_{i=1}^D\mathrm{d} \widetilde{N}^{i,\epsilon}(t)-\mathbbm{1}_{\left\{\hat{U}^{\epsilon}(t)>0\right\}} \beta(\rho) \mathrm{d}\Tilde{\Lambda}^{i,\epsilon}(t),\quad a.s$$
 Let $J^{m, \hat{U}^{\epsilon}}(t):=\lim_{\mu \mapsto 0} \sum_{\mu \leq s \leq t} \mathbbm{1}_{\left\{\hat{U}^{\epsilon}(s-\mu)>m>\hat{U}^{\epsilon}(s+\mu), \hat{U}^{\epsilon}(s)=\hat{U}^{\epsilon}(s^-)=m\right\}}$\footnote{Also referred to as discontinuous local time.}, $\forall 0\leq t\leq T$. The Itô-Tanaka-Meyer formula applied to $H^{m,\epsilon}(t) := h_{m}\left(\hat{U}^{\epsilon}(t)\right) - h_m(\hat{U}(0)) + \sum_{i=1}^D\widetilde{N}^{i,m,\hat{U}^{\epsilon}}(t) + h_m(m^-)J^{m, \hat{U}^{\epsilon}}(t)$ results in the following decomposition~:
 \begin{equation*}
    \begin{split}
    \mathrm{d}H^{m,\epsilon}(t) &= \mathrm{d}h_{m}(\hat{U}^{\epsilon}(t)) + \sum_{i=1}^D\mathrm{d}\widetilde{N}^{i,m,\hat{U}^{\epsilon}}(t)+h_m(m^-)\mathrm{d}J^{m, \hat{U}^{\epsilon}}(t)\\ &=\beta(\rho)h^{'}_{m}(\hat{U}^{\epsilon}(t^-))\mathbbm{1}_{\left\{\hat{U}^{\epsilon}(t^-)>0\right\}}\sum_{i=1}^D\mathrm{d}\Tilde{\Lambda}^{i,\epsilon}(t) + h_{m}(\hat{U}^{\epsilon}(t))-h_{m}(\hat{U}^{\epsilon}(t^-))+ \sum_{i=1}^D\mathrm{d}\widetilde{N}^{i,m,\hat{U}^{\epsilon}}(t)\\
    &=\beta(\rho)h^{'}_{m}(\hat{U}^{\epsilon}(t^-))\mathbbm{1}_{\left\{\hat{U}^{\epsilon}(t^-)>0\right\}}\sum_{i=1}^D\mathrm{d}\Tilde{\Lambda}^{i,\epsilon}(t) \\&\quad+ \sum_{i=1}^D\left(h_m(\hat{U}^{\epsilon}(t^-)+1)-h_m(\hat{U}^{\epsilon}(t^-))\right)\mathrm{d}\widetilde{N}^{i,\epsilon}(t)+ \sum_{i=1}^D\mathrm{d}\widetilde{N}^{i,m,\hat{U}^{\epsilon}}(t)
    \end{split}
\end{equation*}
where $\mathrm{d}\widetilde{N}^{i,m,\hat{U}^{\epsilon}(t)}(t) = \mathbbm{1}_{[0,m]}\left(\hat{U}^{\epsilon}(t^-)\right)\mathrm{d}\widetilde{N}^{i,\epsilon}(t)$ and $m \geq 0$.\\
Since $\beta h_m^{\prime}(x)=h_m(x+1)-h_m(x)+1$ if $x\in [0,m]$, $h_m(0) = h^{\prime}_m(0) = 0$ and $h_m(x) = 0$ if $x\geq m$, thus~:
 \begin{equation}
 \label{dH_m}
    \begin{split}
    \mathrm{d}H^{m,\epsilon}(t) &= \beta(\rho)h_m^{\prime}(\hat{U}^{\epsilon}(t^-))\mathbbm{1}_{\left\{\hat{U}^{\epsilon}(t)>0\right\}}\sum_{i=1}^D\mathrm{d}\left(\Tilde{\Lambda}^{i,\epsilon}(t)-\widetilde{N}^{i,\epsilon}(t)\right)
    \end{split}
\end{equation}
Hence, $H^{m,\epsilon}$ is an $\Tilde{\mathcal{F}}$-martingale, with $\left(\Tilde{\mathcal{F}}_t\right)_{t\geq 0} = \left(\mathcal{F}_{t+\epsilon T}\right)_{t\geq 0}$. Therefore, according to Doob's stopping theorem~:
\begin{equation*}
\begin{split}
\mathbb{E}_v\left(h_m\left(\hat{U}^{\epsilon}(\hat{T}_{\mathrm{C}}^{\epsilon})\right) - h_m(\hat{U}(0)) - h_m(m^-) + \sum_{i=1}^D \widetilde{N}^{i,m,\hat{U}^{\epsilon}}\left(\hat{T}_{\mathrm{C}}^{\epsilon}\right)\right) &= \mathbb{E}_v\left(h_m(\hat{U}^{\epsilon}(0))\right) - \mathbb{E}_v\left(h_m(\hat{U}(0))\right)
% - \mathbb{E}_v\left(h_m(m^-)\right)
    %\\ &= \mathbb{E}_v\left(\sum_{i=1}^D N^{i,\epsilon}\left(\hat{T}_{\mathrm{C}}^{\epsilon}\right)\right) - \mathbb{E}_v\left(h_m(\hat{U}_0)\right)+\mathbb{E}_v\left(h_m\left(\hat{U}^{\epsilon}(\hat{T}_{\mathrm{C}}^{\epsilon})\right)\right)
\end{split}
\end{equation*}
where $\hat{T}_{\mathrm{C}}$ and $\hat{T}_{\mathrm{C}}^{\epsilon}$ are the following $\Tilde{\mathcal{F}}$-stopping times~: \begin{equation*}
\begin{split}\hat{T}_{\mathrm{C}} = \inf \left\{t \leq T: \hat{U}(t)>m\right\}\quad\text{and}\quad \hat{T}_{\mathrm{C}}^{\epsilon} &= \hat{T}_{\mathrm{C}}+\epsilon T \end{split}
\end{equation*}
Just as in the case where $\rho<1$, the stopping time $\hat{T}_{\mathrm{C}}^{\epsilon}$ is introduced to ensure that $\hat{U}^{\epsilon}(\hat{T}_{\mathrm{C}}^{\epsilon})$ converges to $\hat{U}(\hat{T}_{\mathrm{C}})$ (see Lemma \ref{conv_increase}) such that the process $\mathbbm{1}_{[0,m]}\left(\hat{U}^{\epsilon}({\hat{T}_{\mathrm{C}}^{\epsilon-}})\right)$ does not nullify since. This is achieved by maintaining $\hat{U}^{\epsilon}\leq \hat{U}$ over the intervals $\left[\tau^{\epsilon}_k,\tau_{k+1}\right[$ for all $k \in \left\{1,\dots,\sum_{i=1}^D N^i(T)\right\}$. In order to achieve this, it is sufficient to show that $$\lim_{\epsilon\rightarrow 0^+}\mathbb{P}\left(C^{\epsilon}\right) = 0$$ where $C^{\epsilon}:=\left\{\omega\in\Omega: \exists k \in \mathbb{N}/ \tau_k(\omega)<\widetilde{T}_C(\omega)\leq\tau^{\epsilon}_{k+1}(\omega)\leq\widetilde{T}^{\epsilon}_C(\omega)\right\}$.\\ Given that the demonstration bears resemblance to that which was undertaken in Lemma \ref{conv_transf1}, we will not detail it here. Since $\lim_{\epsilon \rightarrow 0^+}\hat{U}^{\epsilon}(\hat{T}_{\mathrm{C}}^{\epsilon}) = {\hat{U}(\hat{T}_{\mathrm{C}})}^- = m^-,~~\text{under}~~\mathrm{L}^1$, by continuity of $h_{m}$, we have $\lim_{\epsilon \rightarrow 0^+}h_{m}(\hat{U}^{\epsilon}(\hat{T}^{\epsilon}_{\mathrm{C}})) = h_{m}(\lim_{\epsilon \rightarrow 0^+}\hat{U}^{\epsilon}(\hat{T}^{\epsilon}_{\mathrm{C}})) = h_m(m^-)$. Moreover, since $N^i$ is càdlàg and finite almost-surely, then $\lim_{\epsilon \rightarrow 0^+}\widetilde{N}^{i,\epsilon}(t) = N^i(t)$ and $$\lim_{\epsilon \rightarrow 0^+}\mathbb{E}_v\left(\sum_{i=1}^D \widetilde{N}^{i,m,\hat{U}^{\epsilon}(t)}(\hat{T}^{\epsilon}_{\mathrm{C}})\right) = \lim_{\epsilon \rightarrow 0^+}\mathbb{E}_v\left(\sum_{i=1}^D\mathbbm{1}_{C^{\epsilon,c}}\widetilde{N}^{i,m,\hat{U}^{\epsilon}(t)}(\hat{T}^{\epsilon}_{\mathrm{C}})\right)= \mathbb{E}_v\left(\sum_{i=1}^D N^i(\hat{T}_{\mathrm{C}})\right)$$ Taking $\epsilon$ to 0, yields the following result~:
$$\mathbb{E}_v\left(\sum_{i=1}^D N^i(\hat{T}_{\mathrm{C}})\right) = h_m(v),\quad \forall m\geq 0$$
\end{proof}
The remaining part of the proof is derived from the results established previously in this section on the Average Run Length (ARL). The underlying approach involves initially establishing lower bounds on the Lorden criterion for both $\rho<1$ and $\rho>1$. Subsequently, it is demonstrated that the CUSUM stopping time attains these lower bounds, thereby establishing its optimality.

%remplacer T par \tau pour les temps d'arrêt -non
%tu n'énonce pas le théoreme de nouveau, tu écris proof of theorem directe
%\begin{theo}
%Let $\widetilde{T}_{\mathrm{C}} =\inf \left\{t \leq T: \widetilde{U}(t)>m\right\}$ denote the CUSUM stopping time. Then, $\widetilde{T}_{\mathrm{C}}$ is proven to be the optimal solution for \ref{detection_prob_count} for $\rho<1$ where
%$\sum_{i=1}^D\mathbb{E}\left(N^i(T)\right) = \sum_{i=1}^D\mathbb{E}\left(N^i(\widetilde{T}_{\mathrm{C}})\right) = g_m(0)$.
%\end{theo}
\begin{proof}[Proof of Theorem \ref{opt_decrease}]%réintroduire les notations
 Let $D^{\theta,\epsilon} = \left\{\omega\in\Omega : \exists k\in\mathbb{N} / \tau_{k}^i(\omega)\leq \theta<\tau_k^{\epsilon}(\omega)\right\}$ and $D^{\theta,\epsilon,c} =\Omega\backslash D^{\theta,\epsilon}$, for all $\theta\in\mathbb{R}_+$. We also define the modified criterion $\widetilde{C}^{\epsilon}$, for all $\mathcal{F}$-stopping times $\tau$, as~: \begin{equation}
     \widetilde{C}^{\epsilon}(\tau):=\sup_{\theta \in[0, +\infty]} \esssup \sum_{i=1}^D\mathbb{E}^{\theta}\left[\left(N^{i,\epsilon}(\tau)-N^{i,\epsilon}(\theta)\right)^{+}\mathbbm{1}_{D^{\theta,\epsilon,c}} \big\lvert \mathcal{F}_\theta\right]
 \end{equation} 
We start with calculations under the probability $\widetilde{\mathbb{P}}$ and take the primitive of the conditional criterion $\widetilde{\Gamma}_t^{\tau}=\mathbb{E}^{\theta}\left(\mathbbm{1}_{D^{t,\epsilon,c}}\int_t^{\tau} \mathrm{d} \sum_{i=1}^D N^{i,\epsilon}(s) \big\lvert \mathcal{F}_t\right)$ with respect to the non-decreasing process $\rho^{-\bar{U}^{\epsilon}}=\tilde{\rho}^{\bar{U}^{\epsilon}}$. 
 An integration by parts yields~: $$\begin{aligned}\mathbb{E}^{\theta}\left[\int_t^{\tau} \widetilde{\Gamma}_s^{\tau} \mathrm{d} \rho^{-\bar{U}^{\epsilon}(s)} \big\lvert \mathcal{F}_t\right]&=\mathbb{E}^{\theta}\left[\int_t^{\tau}\mathbbm{1}_{D^{s,\epsilon,c}}\left(\int_{]s, \tau]} \mathrm{d} \left(\sum_{i=1}^DN^{i,\epsilon}(u)\right)\right) \mathrm{d} \rho^{-\bar{U}^{\epsilon}(s)} \big\lvert \mathcal{F}_t\right] \\&= \mathbb{E}^{\theta}\left[\int_t^{\tau} \mathrm{d} \left(\sum_{i=1}^DN^{i,\epsilon}(u)\right)\left(\rho^{-\bar{U}^{\epsilon}(u^-)}-\rho^{-\bar{U}^{\epsilon}(t)}\right)\mathbbm{1}_{D^{u,\epsilon,c}} \big\lvert \mathcal{F}_t\right]\end{aligned}$$
 Since $\hat{C}^{\epsilon}(\tau)\geq \widetilde{\Gamma}_t^{\tau}$, we have~:
$$
\begin{aligned}
\widetilde{C}^{\epsilon}(\tau) \mathbb{E}^{\theta}\left[\rho^{-\bar{U}^{\epsilon}(\tau)} \big\lvert \mathcal{F}_t\right] & \geq \mathbb{E}^{\theta}\left[\int_t^{\tau} \sum_{i=1}^D \mathrm{d}N^{i,\epsilon}(u) \rho^{-\bar{U}^{\epsilon}(u^-)}\mathbbm{1}_{D^{u,\epsilon,c}} \big\lvert \mathcal{F}_t\right]\\&+\mathbb{E}^{\theta}\left[\widetilde{C}^{\epsilon}(\tau)\rho^{-\bar{U}^{\epsilon}(t)}-\int_t^{\tau} \sum_{i=1}^D \mathrm{d}N^{i,\epsilon}(u)\rho^{-\bar{U}^{\epsilon}(t)}\mathbbm{1}_{D^{u,\epsilon,c}} \big\lvert \mathcal{F}_t\right] \\& \geq \mathbb{E}^{\theta}\left[\int_t^{\tau} \sum_{i=1}^D dN^{i,\epsilon}(u) \rho^{-\bar{U}^{\epsilon}(u^-)}\mathbbm{1}_{D^{u,\epsilon,c}} \big\lvert \mathcal{F}_t\right]\\&+\mathbb{E}^{\theta}\left[\rho^{-\bar{U}^{\epsilon}(t)}\left(\widetilde{C}^{\epsilon}(\tau)-\int_t^{\tau} \mathrm{d}\sum_{i=1}^D N^{i,\epsilon}(u)\right) \big\lvert \mathcal{F}_t\right] \\& \geq \mathbb{E}^{\theta}\left[\int_t^{\tau} \sum_{i=1}^D\mathrm{d}N^{i,\epsilon}(u) \rho^{-\bar{U}^{\epsilon}(u^-)}\mathbbm{1}_{D^{u,\epsilon,c}} \big\lvert \mathcal{F}_t\right]
\end{aligned}
$$
 This yields the following inequalities for all $\mathcal{F}$-stopping times $\tau$~:
\begin{equation*}
\rho\mathbb{E}^{\theta}\left[\int_t^{\tau} \sum_{i=1}^D \mathrm{d}N^{i,\epsilon}(u) \rho^{-\widetilde{U}^{\epsilon}(u^-)}\mathbbm{1}_{D^{u,\epsilon,c}}\big\lvert \mathcal{F}_t\right]\leq \widetilde{C}^{\epsilon}\left(\tau\right) \mathbb{E}^{\theta}\left[\rho^{-\widetilde{U}^{\epsilon}(\tau)}\big\lvert \mathcal{F}_t\right]
\end{equation*}
Taking the expectation on both sides and using the fact that $\rho^{U}$ is the martingale density of $\widetilde{\mathbb{P}}$ with respect to $\mathbb{P}$, we establish the following lower bound~:
\begin{equation}
\rho\mathbb{E}\left[\int_t^{\tau} \sum_{i=1}^D \mathrm{d}N^{i,\epsilon}(u) \rho^{-\widetilde{U}^{\epsilon}(u^-)}\rho^{U^{\epsilon}(u^-)-U(u^-)}\mathbbm{1}_{D^{u^-,\epsilon,c}}\right]\leq \widetilde{C}^{\epsilon}\left(\tau\right) \mathbb{E}\left[\rho^{-\widetilde{U}^{\epsilon}(\tau)}\rho^{U^{\epsilon}(\tau)-U(\tau)}\right]
  \label{lower_bound_epsilon}
\end{equation}
Since, $\forall \tau,\theta\geq 0$,
\begin{align*}
    \sum_{i=1}^D\mathbb{E}^{\theta}\left[\left(N^{i,\epsilon}(\tau)-N^{i,\epsilon}(\theta)\right)^{+}\mathbbm{1}_{D^{\theta,\epsilon,c}} \big\lvert \mathcal{F}_\theta\right] &=\sum_{i=1}^D\mathbb{E}^{\theta}\left[\left(N^{i,\epsilon}(\tau)-N^{i}(\theta)\right)^{+}\mathbbm{1}_{D^{\theta,\epsilon,c}} \big\lvert \mathcal{F}_\theta\right]\\&\leq \sum_{i=1}^D\mathbb{E}^{\theta}\left[\left(N^{i}(\tau)-N^{i}(\theta)\right)^{+}\mathbbm{1}_{D^{\theta,\epsilon,c}} \big\lvert \mathcal{F}_\theta\right]\\&\leq\sum_{i=1}^D\mathbb{E}^{\theta}\left[\left(N^{i}(\tau)-N^{i}(\theta)\right)^{+}\big\lvert \mathcal{F}_\theta\right]
\end{align*}
Hence,\begin{align*}
        \widetilde{C}^{\epsilon}(\tau)&\leq \widetilde{C}(\tau) 
    \end{align*}
    %contrairement à 16, je considère une autre filtration du processus à saut. Le fait que ce soit inclu rend ça faisable
     Note that we can establish the proof for $\lim_{\epsilon \rightarrow 0^+}\mathbb{P}\left(D^{\theta,\epsilon}\right)=1$ for all $\theta\geq 0$ using a similar approach as shown in \ref{conv_transf1}. Drawing upon the insights presented in equation \ref{lower_bound_epsilon} and leveraging the convergence findings established in Lemma \ref{y_borne}, we can derive the ensuing expression:~
    \begin{equation}
    \label{lower_bound_decrease}
    \rho\mathbb{E}\left[\int_t^{\tau} \sum_{i=1}^D \mathrm{d}N^{i}(u) \rho^{-\widetilde{U}(u^-)} \right]\leq \widetilde{C}(\tau) \mathbb{E}\left[\rho^{-\widetilde{U}(\tau)}\right]\end{equation}  
    Applying (i) of Theorem 7 in \cite{ElKaroui2017} to the process $\sum_{i=1}^DN^{i,\epsilon}$, we get that~:
    \begin{equation}
        \tilde{g}_m(0) \mathbb{E}\left(\rho^{-\widetilde{U}^{\epsilon}(\tau)}\right)\leq\rho \mathbb{E}\left(\int_0^{\tau} \sum_{i=1}^D\mathrm{d}N^{i,\epsilon}(u)\rho^{-\widetilde{U}^{\epsilon}(u^-)}\right)
        \end{equation}
       By taking the limit and utilizing equation \ref{lower_bound_decrease}, we obtain~:
        $$
        \tilde{g}_m(0)\leq \widetilde{C}(\tau)
        $$
    In accordance with Theorem \ref{ARL_rho_inf_1}, we can conclude that the CUSUM stopping time $\widetilde{T}_{\mathrm{C}}$ achieves the lower bound of the Lorden criterion $\Tilde{C}$. This establishes the optimality of the CUSUM stopping time.
\end{proof}
%\begin{theo}
%Let $\hat{T}_{\mathrm{C}} =\inf \left\{t \leq T: \hat{U}(t)>m\right\}$ denote the CUSUM stopping time. Then, $\hat{T}_{\mathrm{C}}$ is the optimal solution of \ref{detection_prob_count} for $\rho>1$ where
%$\pi \leq h_m(0)$.
%\end{theo}
\begin{proof}[Proof of Theorem \ref{opt_increase}]
The proof of this theorem is very similar to proof \ref{opt_decrease}.
\end{proof}

\subsection{Proof of section \ref{experimental_results}}
%tu n'énonce pas le théoreme de nouveau, tu écris proof of theorem directe
%\begin{prop} Let $(N^i)_{1\leq i \leq D}$ be a marked $D$-dimensional Hawkes process with exponential decays and an impact function of power law type. The event times of the process $N^i$ are denoted by $\left\{\tau_k^i\right\}_{k\geq 0}$, $\forall i \in \{1,\dots,D\}$. The process $V^i$ is defined as
%$$
%\begin{aligned}
%V_k^i&= \Lambda^i\left(\tau_{k}^i\right) -  \Lambda^i\left(\tau_{k-1}^i\right)\\&= \int_{\tau_{k-1}^i}^{\tau_k^i} \mu^i(s) \mathrm{d} s+\sum_{j=1}^D  \frac{\alpha_{i j}}{\beta_{i j}}\left[A^{i j}(k-1) \times\left(1-e^{-\beta_{i j}\left(\tau_k^i-\tau_{k-1}^i\right)}\right)+\sum_{\tau_{k-1}^i \leq \tau_{k^{\prime}}^j<\tau_k^i}g_j\left(v_{k^{\prime}}^j\right)\left(1-e^{-\beta_{i j}\left(\tau_k^i-\tau_{k^{\prime}}^j\right)}\right)\right]
%\end{aligned}
%$$
%where $A_{i j}(k)$ is defined recursively as
%$$
%A_{i j}(k)=e^{-\beta_{i j}\left(\tau_{k}^i-\tau_{k-1}^i\right)} A^{i j}(k-1)+\sum_{\tau_{k-1}^i \leq \tau_{k^{\prime}}^j<\tau_{k}^i} g_j\left(v_{k^{\prime}}^j\right)e^{-\beta_{i j}\left(\tau_{k}^i-\tau_{k^{\prime}}^j\right)}
%$$
%with initial condition:
%$$
%A_{i j}(0)=0
%$$
%then $\left\{V_k^1\right\}_{k\geq 0},\left\{V_k^2\right\}_{k\geq 0}, \ldots,\left\{V_k^D\right\}_{k\geq 0}$ are $D$ independent sequences of independent identically distributed exponential random variables with unit rate.
%\end{prop}
\begin{proof}[Proof of Proposition \ref{res_analysis_recursive}] 

\begin{align*}
\Lambda_g^i\left(\tau_{k}^i\right) -  \Lambda_g^i\left(\tau_{k-1}^i\right)&= \int_{\tau_{k-1}^i}^{\tau_k^i} \lambda_g^i(s) \mathrm{d} s \\&= \int_{\tau_{k-1}^i}^{\tau_k^i} \mu^i(s) \mathrm{d} s +\sum_{j=1}^D \sum_{\tau_{k^{\prime}}^j<\tau_{k-1}^i} \frac{\alpha_{i j}}{\beta_{i j}}g_j\left(v_{k^{\prime}}^j\right)\left[e^{-\beta_{i j}\left(\tau_{k-1}^i-\tau_{k^{\prime}}^j\right)}-e^{-\beta_{i j}\left(\tau_k^i-\tau_{k^{\prime}}^j\right)}\right] \\
& \quad+\sum_{j=1}^D \sum_{\tau_{k-1}^i \leq \tau_{k^{\prime}}^j<\tau_k^i} \frac{\alpha_{i j}}{\beta_{i j}}g_j\left(v_{k^{\prime}}^j\right)\left[1-e^{-\beta_{i j}\left(\tau_k^i-\tau_{k^{\prime}}^j\right)}\right] \\&= \int_{\tau_{k-1}^i}^{\tau_k^i} \mu^i(s) \mathrm{d} s +\sum_{j=1}^D \frac{\alpha_{i j}}{\beta_{i j}}\left[1-e^{-\beta_{i j}\left(\tau_k^i-\tau_{k-1}^i\right)}\right]\sum_{\tau_{k^{\prime}}^j<\tau_{k-1}^i}g_j\left(v_{k^{\prime}}^j\right)e^{-\beta_{i j}\left(\tau_{k-1}^i-\tau_{k^{\prime}}^j\right)} \\
& \quad+\sum_{j=1}^D \sum_{\tau_{k-1}^i \leq \tau_{k^{\prime}}^j<\tau_k^i} \frac{\alpha_{i j}}{\beta_{i j}}g_j\left(v_{k^{\prime}}^j\right)\left[1-e^{-\beta_{i j}\left(\tau_k^i-\tau_{k^{\prime}}^j\right)}\right] 
\end{align*}
Following Ozaki \cite{Ozaki1979}, we have that:
\begin{align*}
A^{i j}(k) & =\sum_{\tau_{k^{\prime}}^j<\tau_{k}^i} g_j\left(v_{k^{\prime}}^j\right)e^{-\beta_{i j}\left(\tau_{k}^i-\tau_{k^{\prime}}^j\right)} \\
& =e^{-\beta_{i j}\left(\tau_{k}^i-\tau_{k-1}^i\right)} A^{i j}(k-1)+\sum_{\tau_{k-1}^i \leq \tau_{k^{\prime}}^j<\tau_{k}^i} g_j\left(v_{k^{\prime}}^j\right)e^{-\beta_{i j}\left(\tau_{k}^i-\tau_{k^{\prime}}^j\right)}
\end{align*}
Hence, $\forall k \geq 2$ :
\begin{align*}
\Lambda_g^i\left(\tau_{k}^i\right) -  \Lambda_g^i\left(\tau_{k-1}^i\right)&= \int_{\tau_{k-1}^i}^{\tau_k^i} \mu^i(s) \mathrm{d} s +\sum_{j=1}^D  \frac{\alpha_{i j}}{\beta_{i j}}\biggl[A^{i j}(k-1) \times\left(1-e^{-\beta_{i j}\left(\tau_k^i-\tau_{k-1}^i\right)}\right)\biggl.\\&\biggr.+\sum_{\tau_{k-1}^i \leq \tau_{k^{\prime}}^j<\tau_k^i}g_j\left(v_{k^{\prime}}^j\right)\left(1-e^{-\beta_{i j}\left(\tau_k^i-\tau_{k^{\prime}}^j\right)}\right)\biggr]
\end{align*}
where $A^{i j}(0)=0,\quad \forall i,j\in\{1, \ldots, D\}$.

By virtue of the Time-Rescaling theorem (\cite{daley2007introduction}), we can draw the conclusion that the elements $\left\{V_k^1\right\}_{k\geq 0},\left\{V_k^2\right\}_{k\geq 0}, \ldots,\left\{V_k^D\right\}_{k\geq 0}$ are $D$ sequences of independent identically distributed exponential random variables with unit rate.
\end{proof}

\end{document}